\title{Tight Bounds for Gaussian Mean Estimation under Personalized Differential Privacy}
\date{}
\newtheorem{theorem}{Theorem}
\newtheorem{lemma}{Lemma}
\theoremstyle{definition}
\newtheorem{definition}{Definition}
\theoremstyle{remark}
\author{Wei Dong$^{\ast}$, and Li Ge$^{\ast}$ \\[0.23em]
        {\fontsize{10pt}{10pt}\selectfont Nanyang Technological University, Singapore} \\[0.23em]
        {\fontsize{10pt}{10pt}\selectfont \texttt{\{wei\_dong, li040\}@ntu.edu.sg}} \\[0.23em]
        {\fontsize{10pt}{10pt}\selectfont $^{\ast}$Author ordering is alphabetical.}}
\begin{document}
\maketitle

\begin{abstract}
We study mean estimation for Gaussian distributions under \textit{personalized differential privacy} (PDP), where each record has its own privacy budget. PDP is commonly considered in two variants: \textit{bounded} and \textit{unbounded} PDP. In bounded PDP, the privacy budgets are public and neighboring datasets differ by replacing one record. In unbounded PDP, neighboring datasets differ by adding or removing a record; consequently, an algorithm must additionally protect participation information, making both the dataset size and the privacy profile sensitive. Existing works have only studied mean estimation over bounded distributions under bounded PDP. Different from mean estimation for distributions with bounded range, where each element can be treated equally and we only need to consider the privacy diversity of elements, the challenge for Gaussian is that, elements can have very different contributions due to the unbounded support. we need to jointly consider the privacy information and the data values. Such a problem becomes even more challenging under unbounded PDP, where the privacy information is protected and the way to compute the weights becomes unclear. In this paper, we address these challenges by proposing optimal Gaussian mean estimators under both bounded and unbounded PDP, where in each setting we first derive lower bounds for both problems, following PDP mean estimators with the algorithmic upper bounds matching the corresponding lower bounds up to logarithmic factors.

% that protect the data values, where the privacy information is public, and unbounded PDP that further protects the existence of the individuals, where privacy information itself is private. In the bounded PDP setting, the main obstacle is the unbounded support of Gaussians, which leads to potentially infinite sensitivity. We follow the high-level roadmap of existing statistical mean estimation, but develop new ingredients tailored to PDP. On the lower-bound side, we derive a minimax lower bound for bounded PDP mechanisms. On the algorithmic upper bound, we introduce a primitive termed \textit{diffusing} that provides privacy amplification in the bounded PDP model. With the help of diffusing, we design a PDP-compliant range estimator that carefully balances diffusion-induced sampling error with privacy-induced noise. With the estimated range, we clip the data and invoke a PDP mean estimator, and obtain a tight error bound that matches the lower bound up to logarithmic factors. For the unbounded PDP setting, we show that the problem is harder because privacy budgets are dataset-dependent and private. We establish a \emph{privacy-specific} minimax lower bound that explicitly incorporates the dataset's privacy vector. We then reduce unbounded PDP mean estimation to a bounded PDP instance via a distance-preserving mapping that constructs a \emph{shrunk} dataset with a public, approximated privacy vector. This reduction enables us to apply the bounded-PDP estimator and yield optimal error bound up to logarithmic factors in the unbounded setting as well.
\end{abstract}

\clearpage

\tableofcontents

\clearpage

\section{Introduction}

Estimating the mean of a distribution from independent samples is among the most fundamental tasks in modern machine learning and statistics. Given a dataset $x_1, x_2, \hdots, x_n$ of i.i.d. samples drawn from an unknown distribution $\mathcal{P}$, the goal is to estimate the mean $\mathbb{E}_{x\sim \mathcal{P}}[x]$ using these observations. In the classical setting, this problem has been extensively studied, and a wide range of estimators with strong statistical guarantees are now well understood. However, in many real-world applications, the data may contain sensitive information, and directly releasing the vanilla sampled mean (or related statistics) can lead to privacy leakage, enabling membership inference or fingerprinting attacks \citep{PODS_2003,homer2008resolving,bun2014fingerprinting,dwork2015robust}. These concerns motivate the study of \textit{private mean estimation}, where one seeks accurate estimation while simultaneously protecting individuals' privacy.

One standard formalism for privacy protection is \textit{differential privacy} (DP) \citep{dwork2006our,dwork2006calibrating,dwork2014algorithmic}, which requires that the output distributions of an algorithm on any two neighboring datasets that differ in one sample differ by at most $e^{\varepsilon}$. Here, $\varepsilon$ is called the privacy budget, which is used to control the privacy loss: smaller budget leads to less privacy loss. W
Depending on how ``neighboring'' datasets are defined, two common DP models are considered: bounded DP and unbounded DP. In bounded DP, neighboring datasets differ in one record (i.e., one record is replaced). In unbounded DP, neighbors differ by the addition or removal of a single record. The unbounded model provides stronger protection as it can also obscure an individual’s participation in the dataset, not merely the value of their record.
Mean estimation under DP has been studied extensively, resulting in a rich body of work spanning various distributional settings, under both bounded DP\citep{KV18,colt_2019,nips_2019,clipping_2,clipping_3,Amin_nips_2019,Alabi_stoc,aden2021sample,AOS_2021,pmlr-v162-asi22b,colt_2023_fast,add_remove_one_mean_estimation} and unbounded DP \citep{add_remove_one_mean_estimation,2025_statistical_estimation}. However, standard DP enforces a uniform privacy level across all individuals, i.e., adopting the same privacy budget $\varepsilon$ across all individuals, which may be overly restrictive when they have heterogeneous privacy preferences.

\textit{Personalized differential privacy} (PDP) \citep{Graham,partitioning_based_PDP,PDP_platform} addresses this limitation by allowing each individual to specify their own privacy budget $\varepsilon_i$. Informally, changing user $i$'s record should not change the output distribution by more than a multiplicative factor $e^{\varepsilon_i}$. By accommodating heterogeneous privacy preferences, PDP offers a more flexible privacy model than standard DP.
Similar as DP, PDP also comes in two variants: bounded PDP and unbounded PDP.
\begin{itemize}[leftmargin=*, itemsep=2pt]
\item \textbf{Bounded PDP} Each individual’s privacy budget is treated as public. Neighboring datasets differ in the content of exactly one element. This model primarily protects data values, while privacy parameters remain publicly known.
\item \textbf{Unbounded PDP} Neighboring datasets differ by adding or removing one record. In this setting, both data values and their existence must be protected, which makes the privacy information itself sensitive (since it is tied to whether an individual is included).
\end{itemize}

% Like DP models, which have bounded DP and unbounded DP policies, PDP also includes two policies: the bounded PDP and the unbounded PDP.

% \begin{itemize}[leftmargin=*, itemsep=2pt]
% \item \textbf{Bounded PDP} In this setting, datasets have a fixed size, and each individual has a public privacy budget. Neighboring datasets differ in the content of exactly one element. This setting aims to protect the data values, with the privacy information remaining public.

% \item \textbf{Unbounded PDP} In this setting, datasets have variable size, and neighboring datasets differ by adding or removing one record. In unbounded PDP, both the data values and the existence of records in the dataset must be protected, making the privacy information itself private information. 
% \end{itemize}

% Compared to bounded PDP, which protects only the data values, unbounded PDP also protects individuals' existence, i.e., whether they are present in the dataset. Therefore, both the dataset size and the privacy information of participating individuals become input-specific and must be hidden. This makes unbounded PDP more difficult to achieve. For mean estimation, so far, existing works have only studied the bounded PDP setting \citep{PDP_2023,PDP_platform}; for other data analytical tasks, we have seen some works studying unbounded PDP, such as \citep{Graham,partitioning_based_PDP}.

Compared to bounded PDP, which protects only data values, unbounded PDP additionally protects participation: that is, whether an individual appears in the dataset. Consequently, both the dataset size and the privacy parameters of participating individuals become input-dependent and cannot be publicly released, making unbounded PDP substantially more challenging. For mean estimation in particular, existing work has so far focused on the bounded PDP setting \citep{PDP_2023,PDP_platform}. Unbounded PDP has been studied for other data-analysis tasks, e.g., \citep{Graham,partitioning_based_PDP}.

In this work, we study PDP mean estimation for Gaussian distributions, arguably the most fundamental distributional family in statistical estimation \citep{Alabi_stoc,aden2021sample,ashtiani2022private}, in both the bounded and unbounded settings. Prior work \citep{PDP_2023} investigates mean estimation under bounded PDP for bounded distributions, specifically, a distribution with input domain $[-1/2, 1/2]$. In their work, they formalize the problem into a weighted mean estimation problem. Since the distribution is bounded, each value's contribution is also well bounded and can be treated equally. Therefore, they only need to care about the diversity of privacy budgets among elements. To further address such privacy heterogeneity, they assign each value a weight. With the noise scale fixed, smaller weights lead to a stronger privacy budget, while also leading to an extra sampling error. By adjusting the weights based on (public) privacy information, they achieve both personalized privacy protection and, at the same time, an optimal tradeoff between noise scale and sampling error led from the rescaled weights, which further leads to an optimal total error.

% In \citep{PDP_2023}, the objective is relatively direct: to satisfy PDP, one can design a weighted mean estimator, where the weights are chosen to achieve optimal accuracy and depend solely on the (public) privacy information. 

In contrast, the Gaussian case introduces an extra key difficulty: Gaussian distributions have \emph{unbounded support} and thus different elements may have very different contributions. To achieve a good or even optimal error, we not only need to consider the privacy information of elements, but also need to consider the data values. As a result, the problem cannot be simply solved with a weighted mean estimation. This problem becomes even more challenging under unbounded PDP since, in their weighted mean estimator, the weights are computed directly from the public privacy information, which becomes unavailable in unbounded PDP. This will paper tackles the above challenges, establishing tight bounds for both settings. Our main results are as follows.

\paragraph{The bounded PDP setting}
We begin with mean estimation under the bounded PDP model, where the (ascendingly sorted) privacy vector
$\boldsymbol{\varepsilon} = (\varepsilon_1,\ldots,\varepsilon_n)$, containing all individuals' privacy budgets, is treated as public information.

\subparagraph{Lower bound}
Our first result is a minimax lower bound for bounded PDP mean estimation:
\begin{theorem}[Informal, see Theorem \ref{theorem:lower_bound}]
For any $\boldsymbol{\varepsilon}$-PDP mean estimator $\mathcal{M}$ operating on a dataset $\mathcal{N}(\mu, \sigma^2)^n$, with probability at least $1/4$, its estimate deviates from $\mu$ by at least
\begin{align*}
    \Omega\left(
    \max_{k = 1, \cdots, n}
    \frac{\sigma}{\sum_{i=1}^{k}\varepsilon_i + \sqrt{n-k}}
    \right).
\end{align*}
\end{theorem}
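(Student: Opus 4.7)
The plan is to establish, for each $k \in \{0,1,\dots,n\}$, a lower bound of $\Omega(\sigma/(\sum_{i=1}^k \varepsilon_i + \sqrt{n-k}))$ via a two-point Le Cam reduction tailored to isolate the privacy-critical contribution of the first $k$ users (those with the smallest budgets) from the statistically-critical contribution of the remaining $n-k$. Taking the maximum over $k$ then yields the theorem.

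Fix $k$ and set $\Delta := c\,\sigma / (\sum_{i=1}^k \varepsilon_i + \sqrt{n-k})$ for a small absolute constant $c$. The two hypotheses are $\mu = 0$ versus $\mu = \Delta$, with corresponding product distributions $D_0 = \mathcal{N}(0,\sigma^2)^{\otimes n}$ and $D_1 = \mathcal{N}(\Delta,\sigma^2)^{\otimes n}$. The key device is an intermediate hybrid $D^{\star}$, whose first $k$ coordinates are i.i.d.\ $\mathcal{N}(\Delta, \sigma^2)$ (as in $D_1$) and whose last $n-k$ coordinates are i.i.d.\ $\mathcal{N}(0, \sigma^2)$ (as in $D_0$). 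By the triangle inequality,
\[
TV(M(D_0),\,M(D_1)) \;\le\; TV(M(D_0),\,M(D^{\star})) \;+\; TV(M(D^{\star}),\,M(D_1)).
\]

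For the first term, $D_0$ and $D^{\star}$ differ only in the first $k$ coordinate marginals, so I march through a coordinate-by-coordinate hybrid $D_0 = H_0, H_1, \dots, H_k = D^{\star}$ with consecutive pairs differing in exactly one marginal. At each step I use the maximal coupling of $\mathcal{N}(0,\sigma^2)$ and $\mathcal{N}(\Delta,\sigma^2)$: the coupled pair agrees except with probability $\Theta(\Delta/\sigma)$, and conditional on a disagreement the per-user $\varepsilon_j$-DP guarantee of $M$ bounds the output TV by $1 - e^{-\varepsilon_j} \lesssim \min(1,\varepsilon_j)$. Summing, $TV(M(D_0),\,M(D^{\star})) \lesssim (\Delta/\sigma)\sum_{j=1}^k \varepsilon_j$, and the ascending ordering of the budgets ensures the privacy cost here is exactly the sum of the $k$ smallest. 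The second term carries no privacy cost at all: by the data-processing inequality and Pinsker on the $(n-k)$-fold Gaussian product where the two mixtures differ, $TV(M(D^{\star}),\,M(D_1)) \le TV(D^{\star},\,D_1) \lesssim \sqrt{n-k}\,\Delta/\sigma$. Combining,
\[
TV(M(D_0),\,M(D_1)) \;\lesssim\; \frac{\Delta}{\sigma}\Big(\sum_{i=1}^{k} \varepsilon_i + \sqrt{n-k}\Big),
\]
which is at most $1/4$ for small enough $c$. Le Cam's two-point lemma then implies that for some $\mu \in \{0,\Delta\}$, $\Pr[|M(D) - \mu| \ge \Delta/2] \ge 1/4$. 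Maximizing over $k$ yields the claimed bound.

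The delicate point will be designing the hybrid so that privacy is spent exactly where it helps and statistics exactly where it helps. A naive argument that applies group privacy to all coordinates differing between $D_0$ and $D_1$ simultaneously only recovers the loose $k=n$ bound $\Omega(\sigma/\sum_i \varepsilon_i)$, while a naive Le Cam without invoking DP recovers only the $k=0$ bound $\Omega(\sigma/\sqrt{n})$. The intermediate $D^{\star}$ lets me charge the tightest $k$ users via coupling (paying only $(\Delta/\sigma)\varepsilon_i$ per coordinate, because the two Gaussians overlap with mass $1 - \Theta(\Delta/\sigma)$) and charge the remaining $n-k$ users via a pure sampling bound. The step I expect to require the most care is verifying that the coupling-plus-single-user-DP cleanly combines into the sum $\sum_{i=1}^k \varepsilon_i$ without silently invoking group-DP across all $k$ indices at once, which would reintroduce a spurious $e^{\sum \varepsilon_i}$ factor and collapse the gain from the hybrid.
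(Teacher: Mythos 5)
Your proposal is correct and follows essentially the same route as the paper: a two-point Le Cam argument on Gaussians separated by $\Theta(\Delta)$, with the output TV distance controlled by charging the $k$ smallest-budget coordinates via maximal coupling plus per-user DP and the remaining $n-k$ coordinates via data processing plus Pinsker. The hybrid bound you derive by hand is exactly the content of Lemma~\ref{lemma:PDP_2023_TV}, which the paper invokes as a black box from prior work (and re-derives in the unbounded setting as Lemma~\ref{lemma:add/remove-one_TV}), so your worry about a spurious group-privacy factor is resolved exactly as you anticipated.
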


This lower bound has a mixing formulation of DP-like error and sampling error and is parameterized by the shape of the Gaussian distribution, i.e., $\sigma$.

\subparagraph{Upper bound}
Our main algorithmic contribution in the bounded PDP model is an estimator that matches the lower bound up to logarithmic factors:
\begin{theorem}[Informal, see Theorem \ref{theorem:upper_bound_bounded}]
Given $\beta > 0$, $\boldsymbol{\varepsilon} \leq 1^n$, there is an $\boldsymbol{\varepsilon}$-PDP estimator operating on a dataset $D = \mathcal{N}(\mu, \sigma^2)^n$ such that, if $n = \tilde{\Omega}(1/\varepsilon_1)$,
with probability at least $1-\beta$,
\begin{align*}
\vert \mathcal{M}(D)-\mu\vert
\leq
\tilde{O}\Big(\max_{k = 1, \cdots, n}
    \frac{\sigma}{\sum_{i=1}^{k}\varepsilon_i + \sqrt{n-k}}\Big),
\end{align*}
where $\tilde{\Omega}(\cdot)$ and $\tilde{O}(\cdot)$ suppresses polylogarithmic factors in $n$, $1/\beta$, and the relevant privacy parameters.
\end{theorem}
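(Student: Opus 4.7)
The plan is a clip-then-release strategy adapted to personalized privacy. In an initial phase, I would spend a constant fraction of each user's budget on a PDP coarse location/scale estimator (for instance, a PDP adaptation of a histogram/percentile procedure), producing $(\hat\mu_0,\hat\sigma_0)$ with $|\hat\mu_0-\mu|=O(\sigma)$ and $\hat\sigma_0\in[\sigma/2,2\sigma]$ except with probability $\beta/3$. The hypothesis $n=\tilde\Omega(1/\varepsilon_1)$ is precisely what is needed for this coarse step to succeed even for the weakest-privacy individual. Given these rough estimates, set $R=\Theta(\hat\sigma_0\sqrt{\log(n/\beta)})$ and clip each record to $[\hat\mu_0-R,\hat\mu_0+R]$; a Gaussian-tail union bound then ensures no record is actually clipped with probability $\ge 1-\beta/3$, so the clipped data $\tilde x_i$ remain i.i.d.\ $\mathcal{N}(\mu,\sigma^2)$ on that event.

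The main stage is a weighted Laplace mean on the clipped data with weights $w_i=\min(\lambda,\tau\varepsilon_i)$ for data-independent $\tau,\lambda>0$, releasing $\hat\mu=(\sum_j w_j)^{-1}\bigl(\sum_i w_i\tilde x_i + \mathrm{Lap}(2R\tau)\bigr)$. Since the sensitivity of $\sum_i w_i\tilde x_i$ to user $i$ is at most $2R w_i\le 2R\tau\varepsilon_i$ and the uniform noise scale is $2R\tau$, each user simultaneously receives $\varepsilon_i$-DP; composing with the coarse phase via basic PDP composition yields overall $\boldsymbol\varepsilon$-PDP. Because $\boldsymbol\varepsilon$ is public, the pair $(\tau,\lambda)$ can be optimized directly from $\boldsymbol\varepsilon$ with no privacy cost.

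For accuracy, writing $k=|\{i:\tau\varepsilon_i<\lambda\}|$ together with $A_k=\sum_{i\le k}\varepsilon_i$ and $B_k=\sum_{i\le k}\varepsilon_i^2$, one has $\sum_j w_j=\tau A_k+\lambda(n-k)$ and $\sum_i w_i^2\le \tau^2 B_k+\lambda^2(n-k)$. The normalized error then decomposes into a Laplace piece of order $R\tau/\sum_j w_j$ and a sub-Gaussian sampling piece of order $\sigma\sqrt{\sum_i w_i^2}/\sum_j w_j$. For each candidate split $k^*\in\{0,1,\ldots,n\}$ I would set $\alpha=\lambda/\tau\asymp\varepsilon_{k^*+1}$ so that the two pieces balance up to log factors, then invoke the sum-of-squares bound $B_{k^*}\le\varepsilon_{k^*+1}A_{k^*}$; substituting $R=\tilde O(\sigma)$ and simplifying reduces both pieces to $\tilde O(\sigma/(A_{k^*}+\sqrt{n-k^*}))$. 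Finally, I would take the $k^*$ minimizing $A_k+\sqrt{n-k}$, which is exactly the argmax of the lower-bound expression, to get the claimed bound.

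The hardest step is the weight calibration: the relation between $(\tau,\lambda)$ and the sorted budget vector has to be tight enough that the sampling piece really is $\tilde O(\sigma/\sqrt{n-k^*})$ rather than the much weaker $\sigma\sqrt{B_{k^*}}/A_{k^*}$ one would get from the naive choice $w_i=\varepsilon_i$. This calls for a case analysis that splits on whether the cap $\lambda$ is active for each user and leverages the inequality $B_k\le\varepsilon_{k+1}A_k$ to control heterogeneous $\boldsymbol\varepsilon$ vectors. The privacy and clipping arguments, by contrast, are routine once $(\hat\mu_0,\hat\sigma_0)$ is in place.
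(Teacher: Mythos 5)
Your skeleton is the same as the paper's: a private coarse location/scale phase, clipping to a $\tilde O(\sigma)$-width window, and then a saturated weighted Laplace mean (your $w_i=\min(\lambda,\tau\varepsilon_i)$ is exactly the ADPM estimator of the bounded-domain PDP work the paper invokes), with the final balancing via $B_k\le\varepsilon_{k+1}A_k$ playing the role of the paper's Lemma relating the saturated vector to $\max_k \sigma/(\sum_{i\le k}\varepsilon_i+\sqrt{n-k})$. The main-stage privacy accounting, the no-clipping event, and the Laplace-versus-sampling decomposition are all sound at the level of a $\tilde O(\cdot)$ bound.

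The genuine gap is the first sentence: ``spend a constant fraction of each user's budget on a PDP coarse location/scale estimator (for instance, a PDP adaptation of a histogram/percentile procedure).'' In the bounded (change-one) model there is no off-the-shelf way to run a histogram or percentile routine so that each user $i$ is charged only $c\varepsilon_i$: the natural tool, non-uniform subsampling amplification, fails here because for change-one neighbors one cannot couple the non-uniformly sampled datasets into change-one neighbors of the same size. Constructing such a personalized range estimator is precisely the paper's main technical contribution — it introduces a ``diffusion'' primitive (sample each record with rate proportional to $\varepsilon_i$ but replace unsampled records with a placeholder rather than deleting them, which preserves the change-one structure and yields per-record amplification), and then reworks the SVT/inverse-sensitivity range estimator to tolerate placeholders and a private effective sample size. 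As written, your proposal assumes this step exists. There is a cheap repair consistent with your hypothesis: run the coarse phase \emph{uniformly} at budget $\varepsilon_1/2=\min_i\varepsilon_i/2$, which is trivially $\boldsymbol\varepsilon$-PDP and succeeds when $n=\tilde\Omega(1/\varepsilon_1)$; since the coarse estimate only needs to be correct, not optimal, this suffices for the theorem and is actually simpler than the paper's route. But you must commit to one of these two options — ``a constant fraction of each user's budget'' on a genuinely personalized coarse estimator is not something you can currently deliver. A secondary, smaller point: your choice of cap ($\alpha\asymp\varepsilon_{k^*+1}$ at the argmax $k^*$) differs from ADPM's saturation rule (smallest $k$ with $\varepsilon_{k+1}\ge(B_k+8)/A_k$), and verifying that your choice attains the bound for every $\boldsymbol\varepsilon$ is exactly the case analysis you defer; the paper sidesteps it by citing the known optimality lemma for the saturated vector.
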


The estimator follows a clip-then-estimate roadmap. The key technical challenge is identifying a $\tilde{O}(\sigma)$-width region around $\mu$ as the clipping range while respecting heterogeneous privacy budgets. To address this, we introduce a new primitive called \emph{diffusion}, which achieves privacy amplification in bounded PDP with \emph{non-uniform} rates. Diffusion works by sampling elements at different rates and replacing unsampled elements with a placeholder $\perp$ rather than removing them, preserving the change-one neighboring structure while enabling personalized privacy protection. Using diffusion, we design a PDP-compliant range estimator that balances diffusion-induced sampling error against privacy-induced noise. After clipping to the estimated range, we invoke a weighted PDP mean estimator to get the final result.

\paragraph{The Unbounded PDP setting}
We next study mean estimation under the unbounded PDP model. In this model, the privacy budget of each individual is specified by a public privacy function $\mathcal{E}$, i.e. $\mathcal{E}(u)$, but the privacy vector for records in a given dataset becomes private, since it reveals which records are present. Consequently, the mean estimator must safeguard both the data values and the privacy profile.

\subparagraph{Lower bound}
We also start from the lower bound, and first show that the lower bound is still governed by a rate of a similar form as in the bounded setting, except that it becomes \emph{privacy-specific}:

\begin{theorem}[Informal, see Theorem \ref{theorem:lower_bound_add/remove-one}]
For any $\mathcal{E}$-PDP mean estimator $\mathcal{M}$ operating on a dataset $D = \mathcal{N}(\mu, \sigma^2)^n$ with a privacy vector $\boldsymbol{\varepsilon}_D$ that contains the privacy budgets for all records in $D$, with probability at least $1/4$, its estimate deviates from $\mu$ by at least
\begin{align*}
    \Omega\left(
    \max_{k = 1, \cdots, n}
    \frac{\sigma}{\sum_{i=1}^{k}{\varepsilon_D}_i + \sqrt{n-k}}
    \right).
\end{align*}
\end{theorem}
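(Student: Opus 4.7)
The plan is to invoke Le Cam's two-point method for each $k \in \{1, \ldots, n\}$ and take the maximum over $k$, mirroring the bounded-PDP argument (Theorem~\ref{theorem:lower_bound}) while adapting to the add/remove neighbor relation and to the fact that the privacy vector is itself sensitive. Since $\mathcal{E}$ is public, any prescribed sorted privacy vector $\boldsymbol{\varepsilon}_D$ can be realized by modeling records as tuples $u = (v, t)$, where $v \in \mathbb{R}$ is the Gaussian value and $t \in [n]$ is a fixed tag with $\mathcal{E}(v, t) = \varepsilon_{D_t}$. The two data distributions I compare share the tag configuration $(1, \ldots, n)$ and hence induce the same privacy profile $\boldsymbol{\varepsilon}_D$, so the PDP guarantees can be invoked symmetrically between them.

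Fix $k$, write $S_k := \sum_{i=1}^k \varepsilon_{D_i}$, and set $\Delta = c\sigma/(S_k + \sqrt{n-k})$ for a sufficiently small constant $c$. I would compare $P_0 = \mathcal{N}(0, \sigma^2)^n$ to $P_1 = \mathcal{N}(\Delta, \sigma^2)^n$ through the hybrid $H_k$ whose first $k$ values are drawn from $\mathcal{N}(0,\sigma^2)$ and whose last $n-k$ values are drawn from $\mathcal{N}(\Delta,\sigma^2)$, and then bound $\mathrm{TV}(\mathcal{M}(P_0), \mathcal{M}(P_1))$ by two triangle-inequality terms.

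For the first term, $\mathrm{TV}(\mathcal{M}(P_0), \mathcal{M}(H_k))$: the two distributions differ only in the last $n-k$ values, so data processing and Pinsker's inequality give $\mathrm{TV}(P_0^{n-k}, P_1^{n-k}) \leq \Delta\sqrt{n-k}/(2\sigma)$. For the second term, $\mathrm{TV}(\mathcal{M}(H_k), \mathcal{M}(P_1))$: couple each of the first $k$ value-pairs by the maximal coupling of $\mathcal{N}(0,\sigma^2)$ and $\mathcal{N}(\Delta,\sigma^2)$, so each pair disagrees with probability at most $\Delta/(2\sigma)$, while the last $n-k$ records are coupled identically. For any realization disagreeing on an index set $I \subseteq [k]$, the unbounded-PDP group property bounds the output TV by $2\sum_{i \in I}\varepsilon_{D_i}$ (each replacement decomposes into one removal plus one insertion of budget $\varepsilon_{D_i}$); averaging and using convexity of TV yields $\mathrm{TV}(\mathcal{M}(H_k), \mathcal{M}(P_1)) \leq S_k\Delta/\sigma$. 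Combining the two terms gives $\mathrm{TV}(\mathcal{M}(P_0), \mathcal{M}(P_1)) \leq \Delta(S_k + \sqrt{n-k}/2)/\sigma \leq c$, which is at most $1/2$ for $c \leq 1/2$. Le Cam's two-point lemma then forces $|\hat\mu - \mu| \geq \Delta/2 = \Omega(\sigma/(S_k + \sqrt{n-k}))$ with probability $\geq 1/4$ under at least one of $P_0, P_1$, and maximizing over $k$ proves the theorem.

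The main obstacle I anticipate is handling the unbounded model cleanly, since the privacy vector is itself a sensitive quantity: both compared distributions must induce identical privacy profiles, and the adversary may only manipulate the value component. The tag-based record construction above sidesteps this by decoupling the budget from the Gaussian value, ensuring that $P_0$ and $P_1$ produce datasets whose privacy vectors are deterministically equal to $\boldsymbol{\varepsilon}_D$. The extra factor of $2$ in the group-privacy step (relative to the bounded setting) is absorbed into the constant hidden by $\Omega(\cdot)$, so the resulting bound has the same form as in the bounded case.
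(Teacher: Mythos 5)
Your proposal is correct and follows essentially the same route as the paper: Le Cam's two-point method applied to two Gaussians separated by $\Delta=\Theta\bigl(\sigma/(\sum_{i\le k}{\varepsilon_D}_i+\sqrt{n-k})\bigr)$ through the hybrid $H_k$, with Pinsker controlling the last $n-k$ coordinates and a per-coordinate privacy argument controlling the first $k$ (the paper packages the latter as Lemma~\ref{lemma:add/remove-one_TV}, derived by telescoping against a removed-element reference rather than your maximal-coupling-plus-group-privacy step, but both yield the same $\sum_{i\le k}{\varepsilon_D}_i\cdot\mathrm{TV}(\mathcal{P}_1,\mathcal{P}_2)$ term up to constants). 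Your explicit tag construction that pins the privacy profile to $\boldsymbol{\varepsilon}_D$ across both hypotheses makes rigorous a point the paper leaves implicit, and is a clarification rather than a deviation.
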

We derive the lower bound by establishing an output-input distance relationship for distributions after they are processed by an $\mathcal{E}$-PDP mechanism. This technical step allows us to translate separations between candidate input distributions into separations between the corresponding output distributions, and thus obtain a Le Cam-style minimax lower bound. As a result, the unbounded PDP lower bound admits a formulation that closely parallels the bounded PDP case.

\subparagraph{Upper bound}
Next, we also show an algorithmic upper bound that matches this privacy-specific lower bound up to logarithmic factors:
\begin{theorem}[Informal, see Theorem \ref{theorem:upper_bound_add/remove-one}]
Given $\beta > 0$, there is an $\mathcal{E}$-PDP mean estimator $\mathcal{M}$ operating on a dataset $D = \mathcal{N}(\mu, \sigma^2)^n$ with a privacy vector $\boldsymbol{\varepsilon}_D$ that contains all individuals' privacy specification $\mathcal{E}(u)$ such that, if $n = \tilde{\Omega}(1/\varepsilon_{\min}^{3/2}(D))$, with probability at least $1-\beta$,
\begin{align*}
\vert \mathcal{M}(D)-\mu\vert
\leq
\tilde{O}\Big(\max_{k = 1, \cdots, n}
    \frac{\sigma}{\sum_{i=1}^{k}\varepsilon_i + \sqrt{n-k}}\Big),
\end{align*}
where $\tilde{\Omega}(\cdot)$ and $\tilde{O}(\cdot)$ suppresses polylogarithmic factors in $n$, $1/\beta$, and the relevant privacy parameters.
\end{theorem}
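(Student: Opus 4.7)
The plan is to adapt the clip-then-estimate roadmap of the bounded PDP estimator to the add/remove setting. Three new obstacles appear: the sorted privacy vector $\boldsymbol{\varepsilon}_D$ is sensitive, the dataset size is sensitive, and the diffusion primitive must be re-analyzed under add/remove neighboring. My approach splits the task into three phases, each charged a constant fraction of $\varepsilon_{\min}(D)$.

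First, I would privately release a coarse profile of the budgets by binning $\{\mathcal{E}(u)\}_{u\in D}$ on a logarithmic scale and releasing the bin counts with add/remove-compatible Laplace noise. This gives constant-factor-accurate estimates of the partial sums $\sum_{i=1}^{k}\varepsilon_i$ that appear in the target rate. The requirement $n=\tilde{\Omega}(1/\varepsilon_{\min}^{3/2}(D))$ arises here, since the Laplace noise must be dominated by the counts even in the weakest-budget bin while still leaving enough amplified budget to run the subsequent phases. Second, I would invoke a re-analyzed diffusion primitive: sample each record independently with rate $q_i=\min(1,\tilde\varepsilon/\mathcal{E}(u_i))$, which under add/remove amplifies the cost of record $i$ from $\mathcal{E}(u_i)$ to roughly $\tilde\varepsilon$, then run a standard DP histogram over $O(\sigma)$-wide bins on the surviving records to locate an $\tilde{O}(\sigma)$ clipping interval around $\mu$. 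Third, after clipping every record to this interval, I would run a weighted mean estimator whose weights are derived from the privately released budget profile of phase one rather than from the true $\boldsymbol{\varepsilon}_D$, and perturb the weighted sum by noise scaled to each user's personalized contribution.

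The main obstacle, and where I expect the proof to be most delicate, is the privacy composition. Under add/remove one must treat both a record's \emph{presence} and its \emph{privacy parameter} as sensitive simultaneously, so one cannot simply condition on the profile output and then reuse the bounded-PDP analysis verbatim. I would argue via a coupling: adding or removing user $u$ perturbs the profile histogram by at most one count in a single bin, changes the clipping interval with only low probability (handled by a bad-event carve-out whose failure probability is itself privately absorbed), and alters the final weighted noisy sum by at most an amount proportional to the single weight $w_u$. Composing these three stages as an $\mathcal{E}$-PDP mechanism and verifying that the distortion introduced by using noisy rather than true weights only costs logarithmic factors in the final error is the bulk of the technical work; the resulting guarantee then matches the lower bound of Theorem \ref{theorem:lower_bound_add/remove-one} up to polylogarithmic factors in $n$, $1/\beta$, and the privacy parameters.
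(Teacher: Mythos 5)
Your first phase (logarithmic binning of the budgets and releasing noisy, offset counts) is essentially the paper's key idea, but from there the two arguments diverge, and the divergence is where your proposal has a genuine gap. The paper does \emph{not} re-run the range estimator and weighted mean natively under add/remove with a three-stage composition. Instead it uses the sanitized bin counts to build a \emph{shrunk} dataset $\ddot{D}$: in each bin it samples exactly $\widetilde{\texttt{Count}}(D_i)$ records without replacement (the offset guarantees this is an underestimate w.h.p.), downgrades their budgets to the bin's left endpoint, and observes that the resulting privacy vector is now public and that add/remove neighbors $D\sim_u D'$ map to datasets differing in at most two elements under the change-one relation. This distance-preservation step is what lets the paper invoke the bounded-PDP estimator as a black box with budget $\ddot{\boldsymbol{\varepsilon}}/4$ and inherit its guarantee by group privacy, with no further coupling argument needed. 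Your plan replaces this with "a bad-event carve-out whose failure probability is itself privately absorbed," which does not work for pure $\mathcal{E}$-PDP: absorbing a failure probability into the privacy guarantee yields $(\varepsilon,\delta)$-type statements, not the pure multiplicative bound required by Definition~\ref{def:PDP_unbounded}. Relatedly, your claim that removing $u$ "alters the final weighted noisy sum by at most an amount proportional to $w_u$" ignores the normalizer: under add/remove the sum of weights is itself sensitive, so either the estimator is biased or every weight shifts when $u$ is removed. The paper's construction makes the per-bin cardinalities (hence the normalizer) public precisely to kill this cascade.

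Two smaller points. Your sampling rate $q_i=\min(1,\tilde\varepsilon/\mathcal{E}(u_i))$ is inverted: amplification by subsampling requires $q_i\approx\mathcal{E}(u_i)/\tilde\varepsilon$, so that strongly protected users are sampled \emph{less} often. And the $n=\tilde\Omega(1/\varepsilon_{\min}^{3/2}(D))$ condition does not come from the profile-release phase as you suggest; in the paper it arises in the utility comparison between the lower bounds on the original and shrunk datasets, in the regime where $n'-k'$ is small and one needs $n\varepsilon_{\min}\gtrsim\sqrt{\varepsilon_{\min}^{-1}\log(\cdot)}$ for the deleted $O(\varepsilon_{\min}^{-1}\log(\cdot))$ records not to change the order of the bound.
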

One can notice that there is a stricter, though still mild, requirement on the size of the dataset, yet the algorithmic upper bound remains optimal up to logarithmic factors. Our unbounded-PDP estimator is built via a reduction to the bounded-PDP problem. The reduction must simultaneously: (i) avoid any privacy downgrade, (ii) keep the transformation's impact on the dataset small enough to retain utility, and (iii) be distance-preserving so that add/remove-one neighbors map to bounded change-one neighbors at constant distance. We accomplish this by privately \emph{coarsening} the privacy profile using an exponentially-spaced partition of $[\varepsilon_{\min},\varepsilon_{\max}]$. This produces a public (sanitized) proxy privacy vector and a corresponding \emph{shrunk} dataset on which we can safely run the bounded-PDP estimator.

\subsection{Related Work}
Mean estimation under standard differential privacy has been studied extensively. Prior work includes studies of private mean estimation for Gaussian distributions, both single and multivariate settings \citep{KV18,colt_2019,nips_2019,clipping_2,AOS_2021,clipping_1,Private_estimation_with_public_data,Alabi_stoc}, as well as for distributions with bounded moments or heavy tails \citep{clipping_3,heavy_tailed,efficient,colt_2023_fast,Average_case_averages}. Several works have also focused on the related problem of private PCA, covariance estimation \citep{Dwork_stoc_2014,Hardt_nips_2014,nips_2019}, and estimating higher-order moments \citep{aden2021sample,ashtiani2022private,kothari2022private,Universal}. In addition, private mean estimation for arbitrary bounded distributions has been studied in \citep{bun2014fingerprinting,dwork2015robust}.

Several notions of heterogeneous privacy have been proposed under the personalized differential privacy / heterogeneous differential privacy \citep{partitioning_based_PDP,Graham,zhang2019}. General purpose approaches include grouping/partitioning individuals with similar privacy levels \citep{partitioning_based_PDP}, applying conservative privacy guarantees \citep{Alaggan}, or using sampling-based ideas that down-sample individuals with stricter privacy requirements and then run standard DP algorithms on the sampled dataset \citep{Graham}. While broadly applicable, these methods are typically not instance-optimal for mean estimation and may become overly conservative when some individuals are effectively ``public'' (very loose privacy demands).

The most closely related line of work studies mean estimation in the bounded input domain under PDP, where the privacy information is public. A result in this setting is the affine/weighted estimator that computes a saturated version of $\boldsymbol{\varepsilon}$ and adds appropriately calibrated Laplace noise, achieving instance-wise minimax optimality over a bounded data domain \citep{PDP_2023}. This work also reveals a ``saturation'' phenomenon: once the most stringent individuals determine the effective noise level, relaxing privacy for other individuals may not further improve accuracy. Related perspectives appear in privacy-aware data acquisition and mechanism design, where the goal is to elicit data from agents with heterogeneous privacy costs, and the resulting optimal policies can exhibit similar saturation effects \citep{PDP_platform,Privacy_Aware_Agents}.

\section{Overview of techniques}

In this section, we give an overview on our developed techniques to arrive at our results.

\subsection{Mean Estimation in Bounded PDP}
We first study mean estimation over Gaussian distribution under the bounded PDP setting. Similar to the traditional DP setting, the central challenge is that the support of the Gaussian distribution may span a very large range, or even be unbounded, leading to a large or potentially infinite sensitivity. In the traditional DP literature, existing solutions \citep{clipping_1,clipping_2,clipping_3,Universal} follow a common strategy: first estimate a concentrated range of the underlying distribution, i.e., $[\mu - \tilde{O}(\sigma), \mu + \tilde{O}(\sigma)]$, and then clip all data points to this estimated range. Once clipped, the data are in a bounded domain, which allows mean estimation mechanisms designed for the bounded domain to be applied. Our bounded PDP mean estimator follows the same high-level roadmap, where we first estimate a PDP range, and then we clip all data points accordingly and then invoke the existing mean estimation mechanism to complete the estimation.

\paragraph{Lower bound}
The minimax lower bound of mean estimation is established by applying the Le Cam's method to the family of Gaussian distribution with variance $\sigma^2$. A crucial step is to relate the distance between the output distribution of the PDP mechanisms to that of the input distribution. One can imagine that this step also introduce PDP parameters into the lower bound since the output distribution is processed by a $\boldsymbol{\varepsilon}$-PDP mechanism. Here, we use a lemma in \citep{PDP_2023} to achieve such a relation and yield a lower bound parameterized by $\sigma$.

\paragraph{Achieving bounded PDP through diffusing} Our unbounded PDP solution relies on a basic primitive called \emph{diffusing} to achieve personalized privacy protection. The high-level idea is to apply a pre-processing step that diversifies the level of privacy protection across individuals. 
One may observe that sampling, more precisely, the privacy amplification effect induced by sampling, can achieve such a target. However, this approach is not applicable in the bounded PDP setting. Specifically, achieving personalized privacy protection requires a non-uniform sampling scheme, and to the best of our knowledge, no existing work has investigated the bounded PDP setting with non-uniform sampling. Fundamentally, the difficulty stems from that, for any two change-one neighbors $D$ and $D'$, after the sampling, we cannot find a valid coupling between their sampled results such that these two sampled results are still change-one neighbors.
To address this issue, we develop a new technique, termed \emph{diffusing}. The idea is straightforward; we sample a subset of the dataset while for those unsampled elements, instead of removing them, we put a placeholder value $\perp$, which maintains the neighbors. The diffusion can be shown to yield a privacy amplification effect under the bounded PDP setting.

\paragraph{PDP range estimation}
With the help of diffusion, one straightforward idea to estimate a PDP range is first generate the diffused dataset and then use existing DP range estimators, like \citep{Universal}. However, there are two main challenges: 1) How to choose appropriate diffusing rates? Note that the selection of these rates is directly related to the error of the mean estimator; there is a tradeoff between sampling error from diffusion and the DP error induced by the latter DP range estimator.
Therefore, the key is to identify diffusing rates that both achieve personalized privacy protection while achieving a favorable trade-off between sampling error and privacy-induced noise. 2) How to adapt existing DP range estimators to operate on the diffused dataset? Existing DP range estimators are designed for datasets with a public size and do not account for placeholder elements. After diffusion, however, the dataset size itself becomes sensitive information, and placeholder values are introduced into the dataset. Consequently, additional processing is required to properly handle both the presence of placeholders and the private data size.
To address these two challenges, we proceed as follows. First, we carefully design our diffusing rates, where our insight is to preserve PDP while matching the lower bound established later.
Then, we manipulate the range estimation framework of~\citep{Universal} to accommodate both placeholder elements and the sensitive data size.

\paragraph{Final error derivation}
With the estimated range in hand, we clip all data points to this range and perform mean estimation using the existing mechanism proposed in~\citep{PDP_2023}. To analyze the estimation error, we decompose the total error into three components in a standard manner: clipping error, sampling error, and the Laplace noise. Among these components, bounding the clipping error requires additional care, as it admits a more intricate structure and directly using the worst case bound yields sub-optimal result. A key observation is that not all outliers contribute equally to the clipping error. By leveraging a concentration bound tailored to the two-stage sampling scheme, we obtain a refined bound on the clipping error. The resulting error bound matches the corresponding lower bound up to logarithmic factors.

\subsection{Mean Estimation in Unbounded PDP}
As mentioned in the Introduction, a key distinction between the bounded PDP setting and the unbounded PDP setting is that, in the latter, the privacy information itself is private. This feature masks the presence or absence of any individual in the dataset and therefore provides stronger privacy protection. As a result, performing mean estimation under the unbounded PDP setting is inherently more challenging than under the bounded setting.

To address the challenge, we reduce the problem of unbounded PDP mean estimation to the bounded PDP problem so that we can invoke our developed techniques in the bounded PDP setting. We first explore the formulation of the corresponding lower bound under the unbounded PDP setting. Then, we show how this problem can be reduced to the bounded PDP problem that we have already addressed.

\paragraph{Privacy-specific lower bound}
In the unbounded PDP setting, privacy budgets themselves are private and may differ across database instances. A straightforward way to establish a minimax lower bound is to take the worst case over all possible privacy vectors. However, such a uniform bound can be overly pessimistic and largely uninformative. Instead, we focus on deriving a lower bound that is tailored to the specific privacy vector of a given dataset, which yields a more informative characterization of the fundamental limit.
To obtain such a lower bound in the unbounded PDP setting, as in the bounded PDP case, we require a relation that connects the distance between two output distributions to the distance between their corresponding input distributions. Moreover, this relation must be able to inject privacy parameters that are specific to the dataset into the lower bound. To this end, we first establish a lemma that relates these two distances. This lemma plays an analogous role to Lemma~\ref{lemma:PDP_2023_TV} in the bounded PDP setting, but explicitly incorporates dataset-specific privacy parameters. Leveraging this relation, we then apply Le Cam’s method to derive a privacy-specific lower bound in the unbounded PDP setting.

\paragraph{Unbounded PDP Mean Estimation}
For mean estimation under the unbounded PDP setting, our main idea is to reduce the problem to a bounded PDP instance, enabling the use of techniques developed for bounded PDP. This reduction faces two fundamental challenges: the privacy vector itself is private and dataset-dependent, and the neighboring relation changes from change-one to insert/delete-one.
To address these challenges, we carefully construct a mapping that transforms the original dataset into a shrunk dataset with a public privacy vector while preserving PDP. This mapping must simultaneously satisfy several requirements: it must preserve PDP, avoid any privacy downgrade, control the size of the transformed dataset to prevent excessive utility loss, and remain distance-preserving so that neighboring datasets are mapped to bounded PDP datasets at a small distance.
A key difficulty is how to balance privacy and utility when we try to obtain the mapped privacy vector. A straightforward idea is to uniformly partition the privacy domain into several intervals and maps each privacy budget to the left end-point of the interval it belongs. However, such a strategy either incurs excessive information loss when the partition is too fine, or collapses the privacy budgets when the partition is too coarse. To overcome this issue, we partition the privacy domain into exponentially expanding intervals. This approach enables us to obtain a truncated, privacy-preserving approximation of the original privacy vector that retains sufficient precision while effectively controlling the number of deletions..
Based on this approximated privacy vector, we construct a shrunk dataset via random sampling and privacy upgrading. Crucially, this transformation is distance-preserving: neighboring datasets under the unbounded PDP relation are mapped to datasets that differ by only a constant number of elements under the bounded PDP relation. This enables us to apply our bounded PDP mean estimation algorithm upon the shrunk dataset. To establish the optimality of our approach, note that, since the upper and lower bounds have been shown to match on the shrunk dataset, showing that the lower bounds on the shrunk and the original datasets are matching suffices to show that the upper and lower bounds match under the unbounded PDP setting.

\section{Preliminaries}
\label{section:preliminaries}
This section contains notations and essential definitions that will be used throughout this paper.

\subsection{Notations}

The dataset is defined as $D = \{x_1, x_2, \ldots, x_n\} \in \mathbb{R}^n$.  Without loss of generality, we assume $D$ is ordered, i.e., $x_i\leq x_j$ for any $i<j$. Given a dataset $D$, we use $x_{\min}(D) = \min_i{x_i}$ and $x_{\max}(D) = \max_i{x_i}$ to denote the \textit{minimum} and \textit{maximum} of $D$. Besides, we define \textit{radius} as the maximum absolute value in $D$, i.e., $\texttt{Radius}(D) := \max_i |x_{i}(D)|$. The \textit{range} of $D$ is defined as the minimal interval that covers all values of $D$, i.e., $\texttt{Range}(D) := [x_{\min}(D), x_{\max}(D)]$, and its \textit{width} is defined as $\omega(D) = x_{\max}(D) - x_{\min}(D)$. Furthermore, $n(D)$ denotes the number of elements in $D$. $\mathcal{D}$ denotes the dataset domain and $\mathcal{Y}$ denotes the output domain of mechanisms. 

Given a distribution $\mathcal{P}$, we use $X \in \mathcal{P}$ to denote a random variable $X$ drawn from $\mathcal{P}$. $\mu$ and $\sigma^2$ denote the mean and variance of $\mathcal{P}$, respectively. 
For any $D$, $\mu(D)$ and $\sigma(D)$ denote the operation of taking the empirical expectation and standard deviation over the data in $D$, respectively. $\Pr(A)$ denotes the probability of event $A$.
In this paper, we study the mean estimation over a Gaussian distribution $\mathcal{N}(\mu,\sigma^2)$: we have $D \sim \mathcal{P}^n$, $\mathcal{P} = \mathcal{N}(\mu,\sigma^2)$, and our target is to estimate the parameter $\mu$ from $D$.

\subsection{Differential Privacy}

\textit{Differential Privacy} (DP) aims to protect the privacy of individual records. By introducing carefully calibrated randomness into query results, it obscures the contribution of any single record, thereby preventing its disclosure. The detailed definition is as follows:

\begin{definition}[Differential Privacy~\citep{dwork2014algorithmic}]
    For any $\varepsilon \geq 0$, a mechanism $\mathcal{M} : \mathcal{D} \to \mathcal{Y}$ is said to be $\varepsilon$-DP if, for all neighboring datasets $D, D' \in \mathcal{D}$, and for all measurable subsets $K \subseteq \mathcal{Y}$, the following holds:
    \begin{equation*}
        \Pr(\mathcal{M}(D) \in Y) \leq e^{\varepsilon} \cdot \Pr(\mathcal{M}(D') \in Y).
    \end{equation*}
\end{definition}

Here, the neighboring datasets $D$ and $D'$, denoted $D \sim D'$, differ by exactly one record; that is, one can be obtained from the other by modifying a single record. Depending on how this modification is defined, we have two main DP policies: the \emph{bounded} DP setting and the \emph{unbounded} DP setting. 
Under the bounded DP, $D'$ is obtained by changing a single record in $D$.
This is commonly referred to as the \textit{change-one} neighboring relation.
In this setting, the data size is fixed, In this setting, the dataset size remains fixed across all datasets, which is the origin of the term of ``bounded''.
In contrast, datasets may have unbounded size in the unbounded DP setting, where $D$ and $D'$ are defined as neighbors if $D \subset D'$ and $\lvert D' \rvert - \lvert D \rvert = 1$ (or vice versa). This neighboring relationship known as the \textit{add/remove-one} neighboring relation.
 
Intuitively, the unbounded DP setting provides stronger privacy protection: it not only hides the content information but even the existence information of the elements. More precisely, any change-one neighbors can be regarded as the add/remove-one neighbors with two-distance by performing one deletion following an insertion. However, the reverse conversion does not hold. Furthermore, in the bounded DP setting, the data size $n(D)$ is public, while it becomes data-dependent under the unbounded DP setting and should be protected. 

DP (both the bounded DP setting and the unbounded DP setting) enjoys some properties, which are listed below:

\begin{lemma}[Post-Processing]
\label{lemma:post-processing_DP}
    If a mechanism $\mathcal{M}: \mathcal{D} \to \mathcal{Y}$ preserves $\varepsilon$-DP, then for any random mechanism $\mathcal{M}': \mathcal{Y} \to \mathcal{J}$, $\mathcal{M}'(\mathcal{M}(\cdot))$ still preserves $\varepsilon$-DP.
\end{lemma}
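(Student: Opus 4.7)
The plan is to establish the standard post-processing inequality by reducing the randomized case to the deterministic case via conditioning on the internal randomness of $\mathcal{M}'$. Concretely, fix any two neighboring datasets $D \sim D'$ (under whichever change-one or add/remove-one relation is in use, since the argument is agnostic to this) and any measurable event $S \subseteq \mathcal{J}$. The goal is to show
\[
\Pr\bigl(\mathcal{M}'(\mathcal{M}(D)) \in S\bigr) \leq e^{\varepsilon} \cdot \Pr\bigl(\mathcal{M}'(\mathcal{M}(D')) \in S\bigr).
\]

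First, I would represent $\mathcal{M}'$ through its internal randomness: write $\mathcal{M}'(y) = f(y, R)$, where $R$ is an auxiliary random variable independent of $\mathcal{M}(D)$ and $f$ is a deterministic measurable map. For each realized value $r$ of $R$, define the preimage set $T_r := \{y \in \mathcal{Y} : f(y, r) \in S\}$, which is a measurable subset of $\mathcal{Y}$ because $f(\cdot, r)$ is measurable. Then by independence and the law of total probability,
\[
\Pr\bigl(\mathcal{M}'(\mathcal{M}(D)) \in S\bigr) = \int \Pr\bigl(\mathcal{M}(D) \in T_r\bigr)\, d\Pr(R = r).
\]

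Next, I would apply the $\varepsilon$-DP guarantee of $\mathcal{M}$ pointwise in $r$. Since $T_r$ is a measurable subset of the output domain $\mathcal{Y}$, the DP definition directly yields $\Pr(\mathcal{M}(D) \in T_r) \leq e^{\varepsilon} \Pr(\mathcal{M}(D') \in T_r)$ for every fixed $r$. Substituting this bound into the integral and pulling the constant $e^{\varepsilon}$ outside gives
\[
\int \Pr\bigl(\mathcal{M}(D) \in T_r\bigr)\, d\Pr(R = r) \leq e^{\varepsilon} \int \Pr\bigl(\mathcal{M}(D') \in T_r\bigr)\, d\Pr(R = r) = e^{\varepsilon} \Pr\bigl(\mathcal{M}'(\mathcal{M}(D')) \in S\bigr),
\]
which closes the chain of inequalities and finishes the proof.

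This is a standard textbook argument and I do not anticipate any real obstacle. The only mildly delicate point is the representation of a random post-processing as a deterministic function of an independent randomness source $R$; this is justified by the usual measurable-selection / randomization lemma for Markov kernels. Once that representation is in place, the rest is a direct application of the DP definition to each measurable preimage $T_r$ together with Fubini (to exchange the integral over $r$ with the probability). The argument is uniform in the neighboring relation, so it applies verbatim to both the bounded and unbounded DP settings discussed earlier.
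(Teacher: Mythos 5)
Your argument is correct: it is the standard post-processing proof, and the paper itself states this lemma without proof as a well-known property of DP, so there is nothing to compare against. The one point you rightly flag — representing $\mathcal{M}'$ as a deterministic $f(y,R)$ with $R$ independent — can be sidestepped entirely by writing $\Pr(\mathcal{M}'(\mathcal{M}(D))\in S)=\int_{\mathcal{Y}}\Pr(\mathcal{M}'(y)\in S)\,d\Pr_{\mathcal{M}(D)}(y)$ and noting that the DP condition on all measurable sets implies $\mathbb{E}[g(\mathcal{M}(D))]\le e^{\varepsilon}\,\mathbb{E}[g(\mathcal{M}(D'))]$ for every measurable $g$ with $0\le g\le 1$ (via simple-function approximation), but your version is equally valid.
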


\begin{lemma}[Basic Composition]
\label{lemma:basic_composition_DP}
    If $\mathcal{M}$ is a composition of DP mechanisms $\mathcal{M}_1, \mathcal{M}_2, \cdots, \mathcal{M}_k$, i.e., $\mathcal{M}(D) = (\mathcal{M}_1(D), \mathcal{M}_2(D), \cdots, \mathcal{M}_k(D))$, where $\mathcal{M}_i$ preserves $\varepsilon$-DP, then $\mathcal{M}$ preserves $k\varepsilon$-DP. 
\end{lemma}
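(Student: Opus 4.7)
The plan is to verify the $k\varepsilon$-DP guarantee directly from the set-based definition of DP. Fix arbitrary neighboring datasets $D \sim D'$ and any measurable $Y \subseteq \mathcal{Y}_1 \times \cdots \times \mathcal{Y}_k$; the goal is to show
\begin{align*}
\Pr(\mathcal{M}(D) \in Y) \leq e^{k\varepsilon} \cdot \Pr(\mathcal{M}(D') \in Y),
\end{align*}
after which the lemma's conclusion is immediate. I would proceed by induction on the number of composed mechanisms $k$.

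The base case $k=1$ follows directly from the $\varepsilon$-DP of $\mathcal{M}_1$. For the inductive step, I would exploit the standard convention that the internal randomness of the mechanisms $\mathcal{M}_1, \ldots, \mathcal{M}_{k+1}$ is independent. By Fubini-Tonelli, for any measurable $Y \subseteq \mathcal{Y}_1 \times \cdots \times \mathcal{Y}_{k+1}$,
\begin{align*}
\Pr(\mathcal{M}(D) \in Y) = \int \Pr\bigl((\mathcal{M}_1(D), \ldots, \mathcal{M}_k(D)) \in Y_y\bigr) \, dP_{\mathcal{M}_{k+1}(D)}(y),
\end{align*}
where $Y_y := \{z : (z, y) \in Y\}$ is the $y$-slice of $Y$. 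Applying the inductive hypothesis pointwise to the integrand introduces a factor $e^{k\varepsilon}$ and swaps $D$ for $D'$ inside the inner probability. A subsequent application of the $\varepsilon$-DP of $\mathcal{M}_{k+1}$ to the outer measure $P_{\mathcal{M}_{k+1}(D)}$ yields an additional factor $e^{\varepsilon}$, and combining gives the desired bound $e^{(k+1)\varepsilon} \cdot \Pr(\mathcal{M}(D') \in Y)$.

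The essentially only technical point to handle carefully is justifying the pointwise exchange of the outer measure $P_{\mathcal{M}_{k+1}(D)}$ for $P_{\mathcal{M}_{k+1}(D')}$ in the final step. I would address this via the equivalent Radon-Nikodym formulation of $\varepsilon$-DP, namely $\frac{dP_{\mathcal{M}_{k+1}(D)}}{dP_{\mathcal{M}_{k+1}(D')}}(y) \leq e^{\varepsilon}$ almost everywhere with respect to a common dominating measure (whose existence follows from $\varepsilon$-DP itself, since $P_{\mathcal{M}_{k+1}(D)}$ and $P_{\mathcal{M}_{k+1}(D')}$ are then mutually absolutely continuous). Measurability of the slices $Y_y$, needed to invoke Fubini-Tonelli, is standard for the product $\sigma$-algebra. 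No deeper machinery is required, and the argument extends transparently to heterogeneous budgets $\varepsilon_1, \ldots, \varepsilon_k$, yielding a total budget of $\sum_i \varepsilon_i$, should such a statement be desired later in the paper.
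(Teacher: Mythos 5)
Your proof is correct. The paper states basic composition as a standard property of DP and provides no proof of its own, so there is nothing to diverge from; your induction-plus-Fubini argument, with the Radon--Nikodym justification for exchanging the outer measure $P_{\mathcal{M}_{k+1}(D)}$ for $P_{\mathcal{M}_{k+1}(D')}$, is the standard derivation for the non-adaptive composition $\mathcal{M}(D) = (\mathcal{M}_1(D),\ldots,\mathcal{M}_k(D))$ stated in the lemma, and it correctly handles the one measure-theoretic subtlety.
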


\begin{lemma}[Parallel Composition \citep{Parallel_Composition}]
    Suppose there is a pairwise disjoint splitting $\mathbb{R}_1 \cup \mathbb{R}_2 \cup \cdots \cup \mathbb{R}_k \subseteq \mathbb{R}$, and $\mathcal{M}_1, \mathcal{M}_2, \cdots, \mathcal{M}_k$ is a list of $\varepsilon$-DP mechanisms, then $\mathcal{M}(D) := (\mathcal{M}_1(D \cap \mathbb{R}_1), \mathcal{M}_2(D \cap \mathbb{R}_2), \cdots, \mathcal{M}_k(D \cap \mathbb{R}_k))$ also preserves $\varepsilon$-DP.
\end{lemma}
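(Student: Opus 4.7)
The plan is to reduce the parallel-composition claim to a single application of the $\varepsilon$-DP guarantee of one constituent mechanism, exploiting the disjointness of the partition to localize where two neighboring datasets can differ. Fix any pair of neighboring datasets $D \sim D'$. Their symmetric difference consists of a single record $x^{\ast}$, and since $\mathbb{R}_1, \ldots, \mathbb{R}_k$ are pairwise disjoint, this distinguishing record belongs to a unique block $\mathbb{R}_j$. Consequently $D \cap \mathbb{R}_i = D' \cap \mathbb{R}_i$ for every $i \neq j$, while the restrictions $D \cap \mathbb{R}_j$ and $D' \cap \mathbb{R}_j$ remain neighbors under the same relation.

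Next I would combine this localization with the independence of the random coins of $\mathcal{M}_1, \ldots, \mathcal{M}_k$. For any measurable rectangle $Y = Y_1 \times \cdots \times Y_k$ in the joint output domain,
\begin{align*}
\Pr[\mathcal{M}(D) \in Y]
&= \prod_{i=1}^{k} \Pr\bigl[\mathcal{M}_i(D \cap \mathbb{R}_i) \in Y_i\bigr] \\
&\leq e^{\varepsilon} \cdot \Pr\bigl[\mathcal{M}_j(D' \cap \mathbb{R}_j) \in Y_j\bigr] \cdot \prod_{i \neq j} \Pr\bigl[\mathcal{M}_i(D' \cap \mathbb{R}_i) \in Y_i\bigr] \\
&= e^{\varepsilon} \cdot \Pr[\mathcal{M}(D') \in Y],
\end{align*}
where the inequality applies the $\varepsilon$-DP guarantee of $\mathcal{M}_j$ to its neighboring restricted inputs while every other factor is unchanged between $D$ and $D'$. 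A standard monotone-class (or $\pi$-$\lambda$) argument then lifts the inequality from measurable rectangles to arbitrary measurable subsets $Y \subseteq \mathcal{Y}$, yielding the claimed $\varepsilon$-DP bound for $\mathcal{M}$.

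The one delicate point, which I would flag as the main obstacle, is justifying that exactly one block is affected by the distinguishing record. Under the add/remove-one relation this is immediate because the lone inserted or deleted record lies in a unique $\mathbb{R}_j$. Under the change-one relation, one needs the convention that the old and new records sit in the same block, so that the restricted inputs to $\mathcal{M}_j$ remain change-one neighbors; otherwise the two affected blocks would each see an insert-or-delete operation that is not a change-one neighbor for its own restricted mechanism. With this convention in place, the remainder of the argument is pure bookkeeping built on disjointness plus independence, so no deeper technical obstacle is anticipated beyond carefully writing out the localization and the product-measure manipulation.
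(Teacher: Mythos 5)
The paper does not prove this lemma; it is imported verbatim from \citep{Parallel_Composition}, so there is no in-paper argument to compare against. Your proof is the standard one and is essentially correct: disjointness localizes the perturbation to a single block, the DP guarantee of that block's mechanism supplies the factor $e^{\varepsilon}$, and the remaining factors cancel. Two remarks. First, the caveat you flag about the change-one relation is genuine and is the substantive content of your write-up: if the replaced record migrates from $\mathbb{R}_j$ to $\mathbb{R}_{j'}$ with $j\neq j'$, then \emph{two} restricted inputs change, each by an addition or a removal rather than a replacement, and the clean single-block argument breaks. The standard fixes are exactly the ones you implicitly identify: either adopt the same-block convention, or require each $\mathcal{M}_i$ to be $\varepsilon$-DP under add/remove as well (giving $2\varepsilon$ for a cross-block change via group privacy). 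Note that where the paper actually invokes parallel composition (the per-sub-domain counts in Algorithm \ref{alg:PDPMeanEstimationUnknownPrivacy}), the neighboring relation is add/remove-one, so the problematic case never arises there. Second, a minor technical point: the extension from rectangles to general measurable $Y$ is more cleanly done by conditioning on (or integrating over) the outputs of the mechanisms $i\neq j$, whose joint law is identical under $D$ and $D'$, and applying the DP bound to each section $Y_{y_{-j}}$; the monotone-class route also works, but one must use that the collection of sets satisfying the one-sided inequality is a monotone class containing the field of finite disjoint unions of rectangles (it is not a $\lambda$-system, since the inequality is not preserved under complementation). Neither remark is a gap in substance.
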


\subsection{Commonly used DP mechanisms}

Here, we introduce some commonly-used DP mechanisms. Note that all DP mechanisms introduced in the preliminary section work for both bounded and unbounded DP settings by integrating different neighboring relations in the definition.

\subsubsection{Laplace Mechanism}

The most commonly used mechanism is the \textit{Laplace mechanism}. Here, given any query $Q: \mathcal{D} \to \mathbb{R}$, the global sensitivity of $Q$ is defined as $\mathrm{GS}_Q = \max_{D \sim D'} |Q(D) - Q(D')|$. The Laplace mechanism injects Laplace noise proportional to $\mathrm{GS}_Q$ into a query output to preserve DP:
\begin{lemma}[Laplace Mechanism]
    Let $Q$ be a query with global sensitivity $\Delta Q$. The Laplace mechanism
    \begin{equation*}
        \mathcal{M}(D) = Q(D) + \mathrm{Lap}\left(\frac{\Delta Q}{\varepsilon}\right)
    \end{equation*}
    preserves $\varepsilon$-DP, where $\mathrm{Lap}\left(\frac{\Delta Q}{\varepsilon}\right)$ denotes a random variable drawn from the Laplace distribution with scale parameter $\frac{\Delta Q}{\varepsilon}$.
\end{lemma}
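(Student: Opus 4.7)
The plan is to bound the ratio of the probability densities of $\mathcal{M}(D)$ and $\mathcal{M}(D')$ pointwise at every $y \in \mathbb{R}$, and then integrate over any measurable output set $Y$. Since $\mathcal{M}(D) = Q(D) + Z$ where $Z \sim \mathrm{Lap}(\Delta Q/\varepsilon)$ has density $f(z) = \tfrac{\varepsilon}{2\Delta Q}\exp(-\varepsilon|z|/\Delta Q)$, a translation argument shows that $\mathcal{M}(D)$ admits density
\begin{align*}
p_D(y) \;=\; \frac{\varepsilon}{2\Delta Q}\exp\!\left(-\frac{\varepsilon\,|y - Q(D)|}{\Delta Q}\right),
\end{align*}
and analogously for $p_{D'}$. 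This reduces the problem from comparing set probabilities to comparing two explicit, closed-form densities.

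Next, for an arbitrary $y \in \mathbb{R}$ and arbitrary neighboring $D \sim D'$, I would form the density ratio
\begin{align*}
\frac{p_D(y)}{p_{D'}(y)} \;=\; \exp\!\left(\frac{\varepsilon}{\Delta Q}\bigl(|y - Q(D')| - |y - Q(D)|\bigr)\right).
\end{align*}
The reverse triangle inequality gives $|y - Q(D')| - |y - Q(D)| \leq |Q(D) - Q(D')|$, and the definition of global sensitivity, applied to the neighboring pair $D \sim D'$, gives $|Q(D) - Q(D')| \leq \Delta Q$. Chaining the two bounds yields $p_D(y)/p_{D'}(y) \leq e^{\varepsilon}$ for every $y$.

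Finally, I would integrate this pointwise density bound over any measurable $Y \subseteq \mathbb{R}$ to obtain $\Pr(\mathcal{M}(D) \in Y) \leq e^{\varepsilon}\,\Pr(\mathcal{M}(D') \in Y)$, which is exactly the $\varepsilon$-DP guarantee required by the definition. There is essentially no hard step here — the argument is the textbook proof of the Laplace mechanism; the only point worth flagging is that nothing in the derivation depends on which neighboring relation (change-one or add/remove-one) is used, \emph{provided} $\Delta Q$ is defined with respect to the same neighboring relation. This is precisely the sense in which, as noted in the surrounding text, the lemma applies uniformly to both the bounded and unbounded DP settings.
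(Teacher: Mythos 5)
Your proof is correct and complete; it is the standard textbook argument (density ratio, reverse triangle inequality, global sensitivity, integrate over the output set), and the paper itself states this lemma as a standard preliminary without proof, so there is nothing to compare against. Your closing remark about the neighboring relation only needing to match the one used to define $\Delta Q$ is also accurate and consistent with the paper's note that its preliminary mechanisms apply in both the bounded and unbounded settings.
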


The following Laplace tail bound implies that the error of the Laplace mechanism is bounded by $\Delta Q/\varepsilon \ln(1/\beta)$ with probability at least $1 - \beta$:
\begin{lemma}[Laplace Tail Bound]
    \label{lemma:Laplace_tail_bound}
    Let $X$ be a random variable drawn from the unit Laplace distribution $\mathrm{Lap}(1)$. Then, for any $0 < \beta < 1$, the following inequality holds:
    \begin{equation*}
        \Pr(|X| \geq \ln(1/\beta)) \leq \beta.
    \end{equation*}
\end{lemma}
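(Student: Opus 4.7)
The plan is to prove the bound by direct evaluation of the Laplace tail integral. Recall that the unit Laplace distribution $\mathrm{Lap}(1)$ has density $f(x) = \tfrac{1}{2}e^{-|x|}$ on $\mathbb{R}$, which is symmetric about the origin. Since $|X| \geq t$ corresponds to the two symmetric tail events $X \geq t$ and $X \leq -t$, I would compute
\begin{equation*}
\Pr(|X| \geq t) \;=\; 2\int_{t}^{\infty} \tfrac{1}{2} e^{-x}\,dx \;=\; e^{-t},
\end{equation*}
valid for any $t \geq 0$.

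Next, I would substitute $t = \ln(1/\beta)$. Because $\beta \in (0,1)$, we have $\ln(1/\beta) > 0$, so the tail formula above applies, and it gives $\Pr(|X| \geq \ln(1/\beta)) = e^{-\ln(1/\beta)} = \beta$, which in fact holds with equality and thus certainly implies the stated inequality.

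Since the entire argument reduces to one elementary integration and one substitution, there is no real technical obstacle; the only points requiring verification are the symmetry of the density (to collapse the two-sided tail into twice a one-sided tail) and the positivity of $\ln(1/\beta)$ (to ensure we are evaluating the tail at a nonnegative threshold).
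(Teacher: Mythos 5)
Your proof is correct: the direct integration of the symmetric density $\tfrac{1}{2}e^{-|x|}$ gives $\Pr(|X|\geq t)=e^{-t}$ and the substitution $t=\ln(1/\beta)$ yields the bound with equality. The paper states this lemma without proof as a standard fact, and your argument is exactly the standard one.
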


\subsubsection{Clipped Mean estimation}

For mean estimation on the whole real number domain $\mathbb{R}$, we cannot directly utilize the Laplace mechanism to inject noise into the estimated mean, since its sensitivity is unbounded on $\mathbb{R}$. To bound its sensitivity, one commonly used operation is \textit{clipping mechanism}, where we first clip the element into a given range:  
\begin{definition}[Clipping]
    Given a range $[l, r]$, define
    \begin{align*}
        \texttt{Clipped}(x, [l, r]) =
        \begin{cases}
            l, & \text{if } x < l; \\
            x, & \text{if } l \leq x \leq r; \\
            r, & \text{if } x > r.
        \end{cases},
    \end{align*} and 
    \begin{align*}
        \texttt{Clipped}(D, [l, r]) = \{ \texttt{Clipped}(x, [l, r])\}_{x \in D}.
    \end{align*}
\end{definition}
After the clipping, the sensitivity will be bounded by $(l-r)w$ ($w$ is the weight used in mean estimation) and the Laplace mechanism can be adopted.

\subsubsection{The Sparse Vector Technique}
In the \textit{sparse vector technique} (SVT), we consider a (possibly infinite) sequence of queries $Q_1, Q_2, \cdots$, each with a bounded global sensitivity of $1$, and we aim to identify the first query whose output exceeds a given threshold. The procedures of SVT are detailed in Algorithm \ref{alg:SVT}.

\begin{minipage}{0.93\linewidth}
    \centering
    \begin{algorithm}[H]
        \caption{SVT~\citep{SVT}}
        \label{alg:SVT}
        \KwIn{$D$, $\varepsilon$, $\theta$, $Q_1, Q_2, \cdots$}
        \KwOut{i}
        $\tilde{\theta} \gets \theta + \text{Lap}(\frac{2}{\varepsilon})$
        
        \For{$i \gets 1, 2, \cdots$}{
            $\mathcal{M}(\mathcal{P}_1^k \mathcal{P}_2^{n-k})_i(D) \gets Q_i(D) + \text{Lap}(\frac{4}{\varepsilon})$
            
            \If{$\mathcal{M}(\mathcal{P}_1^k \mathcal{P}_2^{n-k})_i(D) > \tilde{\theta}$}{
                Break
            }
        }
        \Return{$i$}
    \end{algorithm}
\end{minipage}

SVT possesses a desirable property: it will neither terminate prematurely nor delay unnecessarily, which are stated in Lemma~\ref{lemma:SVT}.

\begin{lemma}
    Given $D$, $\theta$, $\varepsilon$ and $\beta$, for $Q_1, Q_2, \cdots$, if there exist $k_1$ and $k_2$ such that, $Q_i(D) \leq \theta - \frac{8}{\varepsilon}\log(\frac{4k_1}{\beta})$ holds for all $i \leq k_1$ and $Q_{k_2}(D) \geq \theta + \frac{6}{\varepsilon}\log(\frac{2}{\beta})$, then \emph{SVT} returns an index $i$ such that $i \geq k_1 + 1$ and $Q_i(D) \geq \theta - \frac{6}{\varepsilon}\log(\frac{2k_2}{\beta})$ with probability at least $1 - \beta$.
    \label{lemma:SVT}
\end{lemma}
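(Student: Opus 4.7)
The plan is to prove the utility guarantee by conditioning on a high-probability ``good event'' that simultaneously bounds all Laplace draws made inside SVT. Since the algorithm samples one threshold noise $Z_0 \sim \mathrm{Lap}(2/\varepsilon)$ and at most $k_2$ per-query noises $Z_i \sim \mathrm{Lap}(4/\varepsilon)$, I would apply the Laplace tail bound (Lemma~\ref{lemma:Laplace_tail_bound}) to each one, split the failure budget $\beta$ via union bound (allocating $\beta/2$ to the threshold draw and $\beta/(2k_2)$ to each of the $k_2$ query noises), and thereby obtain the simultaneous bounds $|Z_0| \lesssim (1/\varepsilon)\log(1/\beta)$ and $|Z_i| \lesssim (1/\varepsilon)\log(k_2/\beta)$ for every $i \in [1,k_2]$. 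This good event holds with probability at least $1-\beta$, and the remainder of the argument is deterministic on this event.

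For the lower cutoff, I use the hypothesis $Q_i(D) \leq \theta - (8/\varepsilon)\log(4k_1/\beta)$ for $i\leq k_1$ together with the two noise bounds. Substituting into the SVT test $Q_i(D)+Z_i$ versus $\tilde\theta=\theta+Z_0$, the slack $(8/\varepsilon)\log(4k_1/\beta)$ is large enough to absorb both the $(2/\varepsilon)\log(1/\beta)$ contribution from $|Z_0|$ (which can shift $\tilde\theta$ downward) and the $(4/\varepsilon)\log(k_1/\beta)$ contribution from $|Z_i|$, yielding $Q_i(D)+Z_i < \tilde\theta$. Hence the loop cannot exit during the first $k_1$ iterations, which proves the returned index $\hat i$ satisfies $\hat i \geq k_1+1$. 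For the upper cutoff, I apply the same noise bounds together with $Q_{k_2}(D) \geq \theta + (6/\varepsilon)\log(2/\beta)$: the $(6/\varepsilon)\log(2/\beta)$ gap is chosen exactly so that it dominates $|Z_0|+|Z_{k_2}|$, forcing $Q_{k_2}(D)+Z_{k_2} > \tilde\theta$, so the loop must have exited by iteration $k_2$. In particular, $\hat i \leq k_2$.

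The final ingredient is the lower bound on $Q_{\hat i}(D)$. The termination condition $Q_{\hat i}(D)+Z_{\hat i} > \tilde\theta$ rearranges to $Q_{\hat i}(D) > \tilde\theta - Z_{\hat i} \geq \theta - |Z_0| - |Z_{\hat i}|$. Because we have already argued $\hat i \leq k_2$, the uniform bound $|Z_{\hat i}| \lesssim (1/\varepsilon)\log(k_2/\beta)$ from the good event applies, and adding the threshold-noise bound yields exactly $\theta - (6/\varepsilon)\log(2k_2/\beta)$ once the $2/\varepsilon$ and $4/\varepsilon$ scale factors match the coefficients $2$ and $4$ appearing on the two noise contributions. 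The main obstacle is not conceptual but bookkeeping: one must carefully partition $\beta$ across the $k_2+1$ Laplace variables so that the resulting per-variable logarithmic constants combine to match the exact coefficients $8$, $6$, and $6$ prescribed by the hypotheses and conclusion. A subtle point worth emphasizing is that the noise at the (random) returned index must be controlled uniformly over $i\in[1,k_2]$ rather than pointwise, and it is precisely this union bound that introduces the $\log(2k_2/\beta)$ factor in the final guarantee.
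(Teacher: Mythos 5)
The paper does not actually prove this lemma --- it is imported as a known utility guarantee of SVT --- so there is no in-paper argument to compare against, and your proposal must stand on its own. Your overall strategy (condition on a good event bounding all Laplace draws, then argue deterministically) is the standard and correct one, but the good event you define does not support two of the three deductions you draw from it, and the ``bookkeeping'' you defer is precisely where the lemma's content lies.

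Concretely: with your allocation ($\beta/2$ to $Z_0$ and $\beta/(2k_2)$ to each query noise), the uniform bound you obtain is $|Z_i|\le \frac{4}{\varepsilon}\log(2k_2/\beta)$ for all $i\le k_2$. First, for the no-early-exit step you then claim the relevant noise contribution is $\frac{4}{\varepsilon}\log(k_1/\beta)$, but your event only gives $\frac{4}{\varepsilon}\log(2k_2/\beta)$, which for $k_2\gg k_1$ exceeds the hypothesis slack $\frac{8}{\varepsilon}\log(4k_1/\beta)$; the argument that $Q_i(D)+Z_i<\tilde\theta$ for $i\le k_1$ then fails. The noises for the first $k_1$ queries need their own union bound over $k_1$ terms (this is exactly why the hypothesis carries a $\log(4k_1/\beta)$ factor), with a corresponding slice of the failure budget. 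Second, for the stop-by-$k_2$ step you assert that $\frac{6}{\varepsilon}\log(2/\beta)$ dominates $|Z_0|+|Z_{k_2}|$, but under your event $|Z_0|+|Z_{k_2}|\le \frac{2}{\varepsilon}\log(2/\beta)+\frac{4}{\varepsilon}\log(2k_2/\beta)$, which is strictly larger whenever $k_2>1$. Making this step work requires a separate \emph{pointwise} (non-union) bound on $Z_{k_2}$, and even then the constant $\frac{6}{\varepsilon}\log(2/\beta)$ only fits if you use one-sided Laplace tails (which save a factor of $2$ inside the logarithm) and re-partition $\beta$ across the three requirements (threshold noise two-sided, first $k_1$ noises one-sided, all $k_2$ noises one-sided for the accuracy clause, and $Z_{k_2}$ pointwise). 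As written, the single good event cannot simultaneously be tight enough for the $k_2$-th query test and loose enough to be affordable for the other clauses, so the proof does not close; it is repairable, but only by redoing the allocation rather than by matching coefficients after the fact.
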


\subsubsection{The Inverse Sensitivity Mechanism}

For a query $Q$ with discrete output range $\mathcal{Y}$ and $D$, the \textit{inverse sensitivity mechanism} returns $y \in \mathcal{Y}$ such that there exists $D'$ not far from $D$ and $Q(D') = y$. Specifically, first define $d(D, D')$ as the number of differing elements between $D$ and $D'$. Given $Q: \mathcal{D} \to \mathcal{Y}$ and $D$, for any $y \in \mathcal{Y}$, INV assigns a score to denote the distance between $D$ and the preimage set of $y$: $\text{score}_Q(D, y) = -\inf_{D', Q(D')=y} d(D, D')$. Then, INV returns each $y \in \mathcal{Y}$ with the probability
\begin{align*}
    \Pr(\text{INV}(D, Q) = y) = \frac{\exp(\varepsilon\cdot\text{score}_Q(D, y)/2)}{\sum_{z \in \mathcal{Y}}\exp(\varepsilon\cdot\text{score}_Q(D, z)/2)}.
\end{align*}

With INV, we can estimate the quantile $X_{\tau}$ of a dataset $D$, $1 \leq \tau \leq n(D)$, since $-\text{score}_Q(D, y)$ is exactly the number of elements between $X_{\tau}$ and $y$ if we let $Q(D)$ to be query of the $\tau$-quantile of $D$. However, one problem of INV is that, when $\tau$ gets near to $1$ or $n(D)$, it can return some output arbitrarily bad. Therefore, we use INV to find the median, which is an interior point of the dataset.

The details are shown in Algorithm \ref{alg:FindQuantile} and the result enjoys a bounded rank error:

\begin{minipage}{0.93\linewidth}
\centering
\begin{algorithm}[H]
    \caption{FindQuantile \citep{Universal}}
    \label{alg:FindQuantile}
    \KwIn{$D$, $\tau$, $\varepsilon$, $\mathcal{X}$, $\beta$}
    \KwOut{$\tau$-quantile of $D$}
    \If{$\tau < \frac{2}{\varepsilon}\log(\frac{\vert \mathcal{X} \vert}{\beta})$}{
        $\tau' \gets \frac{2}{\varepsilon}\log(\frac{\vert \mathcal{X} \vert}{\beta})$
    }
    \ElseIf{$\tau > n - \frac{2}{\varepsilon}\log(\frac{\vert \mathcal{X} \vert}{\beta})$}
    {
        $\tau' \gets n - \frac{2}{\varepsilon}\log(\frac{\vert \mathcal{X} \vert}{\beta})$
    }
    \Else{
        $\tau' \gets \tau$
    }
    Let $Q$ query the $\tau'$-quantile of $D$
    \For{$x \in \mathcal{X}$}{
        $\text{score}_Q(D, x) \gets -\inf_{D', Q(D')=x} d(D, D')$
    }
    \Return $x \varpropto  \frac{\exp(\varepsilon\cdot\text{score}_Q(D, x)/2)}{\sum_{z \in \mathcal{X}}\exp(\varepsilon\cdot\text{score}_Q(D, z)/2)}$

\end{algorithm}
\end{minipage}

\begin{lemma}
    \label{lemma:quantile_guarantee}
    For a finite ordered set $\mathcal{X}$, suppose $D \in \mathcal{X}^n$, if $n > \frac{4}{\varepsilon} \log(\frac{\vert\mathcal{X}\vert}{\beta})$, then Algorithm \ref{alg:FindQuantile} returns $\tilde{X}_{\tau}$ s.t.
    \begin{align*}
        X_{\tau - \frac{4}{\varepsilon} \log(\frac{\vert\mathcal{X}\vert}{\beta})} \leq \tilde{X}_{\tau} \leq X_{\tau + \frac{4}{\varepsilon} \log(\frac{\vert\mathcal{X}\vert}{\beta})}
    \end{align*}
    with probability at least $1 - \beta$.
\end{lemma}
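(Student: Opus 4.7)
The plan is to view INV, the core randomization step in Algorithm \ref{alg:FindQuantile}, as an instance of the exponential mechanism applied to a rank-deviation score, and then invoke its standard utility guarantee.

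I would first show that for the $\tau'$-quantile query $Q$, the score function satisfies $\text{score}_Q(D, y) = -|\text{rank}(y, D) - \tau'|$ (up to an additive constant of at most one), where $\text{rank}(y, D)$ denotes the position $y$ would occupy in the sorted order of $D$. The reasoning is that the cheapest way to obtain a neighbor $D'$ with $Q(D') = y$ is to overwrite the minimum number of elements on one side of $y$ so that its rank becomes exactly $\tau'$, which costs $|\text{rank}(y, D) - \tau'|$ changes. In particular, $\text{score}_Q(D, \cdot)$ has sensitivity $1$ since changing a single element shifts any rank by at most one, and the maximum score $0$ is attained at $y = X_{\tau'}$.

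Next, I would apply the standard utility bound of the exponential mechanism (which gives score loss at most $\frac{2}{\varepsilon}\log(|\mathcal{Y}|/\beta)$ with probability $1-\beta$ over an output set of size $|\mathcal{Y}|=|\mathcal{X}|$) to conclude that the returned $\tilde{X}_\tau$ satisfies $\text{score}_Q(D, \tilde{X}_\tau) \geq -\frac{2}{\varepsilon}\log(|\mathcal{X}|/\beta)$, and hence $|\text{rank}(\tilde{X}_\tau, D) - \tau'| \leq \frac{2}{\varepsilon}\log(|\mathcal{X}|/\beta)$. To convert this into a bound in terms of the original $\tau$, I would inspect the three cases of the clipping step at the start of Algorithm \ref{alg:FindQuantile}: in each case $|\tau - \tau'| \leq \frac{2}{\varepsilon}\log(|\mathcal{X}|/\beta)$, and the hypothesis $n > \frac{4}{\varepsilon}\log(|\mathcal{X}|/\beta)$ ensures that the lower and upper clipping thresholds do not collide, so $\tau'$ is well defined and still corresponds to a realizable rank in $[1,n]$. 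A triangle-inequality combination then yields $|\text{rank}(\tilde{X}_\tau, D) - \tau| \leq \frac{4}{\varepsilon}\log(|\mathcal{X}|/\beta)$, which is equivalent to the stated sandwich bound $X_{\tau - \frac{4}{\varepsilon}\log(|\mathcal{X}|/\beta)} \leq \tilde{X}_\tau \leq X_{\tau + \frac{4}{\varepsilon}\log(|\mathcal{X}|/\beta)}$.

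The main obstacle will be rigorously justifying the rank-deviation characterization of $\text{score}_Q$ in the presence of ties and when $y \notin D$. Because $d(\cdot,\cdot)$ counts differing elements and $y$ ranges over the entire finite set $\mathcal{X}$, one must argue that the infimum defining $\text{score}_Q$ is indeed realized by an elementary ``overwrite to one side'' modification whose cost equals $|\text{rank}(y, D) - \tau'|$, independent of the specific tie-breaking convention underlying the quantile operation. Once this characterization is established, the exponential mechanism utility bound and the triangle inequality above together finish the argument.
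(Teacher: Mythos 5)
The paper does not actually prove Lemma~\ref{lemma:quantile_guarantee}: it is imported verbatim from \citep{Universal} as a preliminary, with only the informal remark that $-\text{score}_Q(D,y)$ counts the elements between $X_{\tau}$ and $y$. So there is no in-paper proof to compare against; your proposal has to stand on its own, and it does. It is the standard utility analysis of the inverse sensitivity mechanism: identify $\text{score}_Q(D,y)$ with $-|\mathrm{rank}(y,D)-\tau'|$ up to an additive constant, apply the exponential-mechanism utility bound with $\mathrm{OPT}=0$ and output space $\mathcal{X}$ to get rank error $\frac{2}{\varepsilon}\log(|\mathcal{X}|/\beta)$ around $\tau'$, and then use the clipping step (which moves $\tau$ by at most $\frac{2}{\varepsilon}\log(|\mathcal{X}|/\beta)$) plus the triangle inequality to get the stated $\frac{4}{\varepsilon}\log(|\mathcal{X}|/\beta)$ sandwich; the hypothesis on $n$ is exactly what makes the two clipping thresholds consistent. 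This matches the remark the paper does make about the score, so your route is the intended one.

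Two small points you should tidy up when writing this out. First, the additive $\pm 1$ slack in the rank characterization (from the case $y\notin D$ and from tie-breaking) has to be absorbed somewhere; as written, the chain of inequalities gives $\frac{4}{\varepsilon}\log(|\mathcal{X}|/\beta)+O(1)$ rather than exactly $\frac{4}{\varepsilon}\log(|\mathcal{X}|/\beta)$, so you either need to argue the infimum is attained without the extra change (possible in the change-one model by overwriting one of the elements you are already moving to the value $y$ itself) or accept a harmless constant in the bound. Second, the indices $\tau\pm\frac{4}{\varepsilon}\log(|\mathcal{X}|/\beta)$ can fall outside $\{1,\dots,n\}$ even under the hypothesis on $n$ (e.g.\ for $\tau$ near $1$ or $n$), so you should state the convention that the corresponding inequality is then vacuous, or note that the clipping of $\tau$ to $\tau'$ is precisely what keeps the guarantee meaningful in that regime. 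Neither issue affects the substance of the argument.
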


\subsection{Personalized Differential Privacy}
Traditional DP policies assume that the privacy requirements are uniform across individuals. Recently, a novel DP policy was invented to allow heterogeneous privacy demands, which is called \textit{Personalized Differential Privacy} (PDP) \citep{PDP_2023}. In PDP, each element $x_i$ in $D$ has a distinct privacy budget. Similar to DP model, which defines two types of neighboring relations: change-one and add/remove-one, PDP also supports two corresponding policies: the bounded PDP setting and the unbounded PDP setting.

In the bounded PDP setting, privacy budgets remain public because records are only modified, not inserted or deleted, so each record’s associated privacy budget is preserved.
Therefore, in this setting, all privacy budgets can be stored in a public vector $\boldsymbol{\varepsilon} = \{\varepsilon_1, \ldots, \varepsilon_n\}$, and we use $D \sim_i D'$ to denote that $D$ and $D'$ are neighbors that differ in the $i$-th element, $1 \leq i \leq n$. The PDP definition under the bounded setting is as follows:
\begin{definition}[Bounded Personalized Differential Privacy~\citep{PDP_2023}]
\label{def:PDP_bounded}
    For any $\boldsymbol{\varepsilon} = \{\varepsilon_1, \ldots, \varepsilon_n\} \geq \mathbf{0}$, a mechanism $\mathcal{M}: \mathcal{D} \to \mathcal{K}$ satisfies $\boldsymbol{\varepsilon}$-PDP if, for all neighboring datasets $D \sim_i D'$, and for all measurable subsets $K \subseteq \mathcal{K}$, the following holds for all $i \in \{1, \ldots, n\}$:
    \begin{equation*}
        \Pr(\mathcal{M}(D) \in K) \leq e^{\varepsilon_i} \Pr(\mathcal{M}(D') \in K).
    \end{equation*}
\end{definition}

For the unbounded PDP setting, we have an (infinite) user domain $\mathcal{U}$, where each user $u$ corresponds to value $u.x$. In the statistical setting, $u.x$ is a value drawn from the distribution $\mathcal{P}$. Besides, we have a privacy function $\mathcal{E}$ to assign the privacy budget for each $u$, i.e., $u$ has a privacy budget of $\mathcal{E}(u)$. A dataset $D$ contains only a finite number of users from $\mathcal{U}$. We use $D \sim_u D'$ to denote that $D$ and $D'$ are neighbors if one can be obtained by deleting user $u$ from another. Then, the unbounded PDP is defined as follows:

\begin{definition}[Unbounded Personalized Differential Privacy~\citep{Graham}]
\label{def:PDP_unbounded}
    Given a user universe $\mathcal{U}$ and the privacy function $\mathcal{E}: \mathcal{U} \to [\varepsilon_{\min}, \varepsilon_{\max}]$, where $\varepsilon_{\min}$ and $\varepsilon_{\max}$ are known. A mechanism $\mathcal{M}: \mathcal{D} \to \mathcal{K}$ satisfies $\mathcal{E}$-PDP if, for all neighboring datasets $D \sim_u D'$, and for all measurable subsets $K \subseteq \mathcal{K}$, the following holds for all $u \in \mathcal{U}$:
    \begin{equation*}
        \Pr(\mathcal{M}(D) \in K) \leq e^{\mathcal{E}(u)} \Pr(\mathcal{M}(D') \in K).
    \end{equation*}
\end{definition}

With a little abuse of notation, we use $D = \{x_1,x_2,\dots,x_n\}$ to denote all elements contributed by the users in $D$ and $\boldsymbol{\varepsilon}_D$ to denote their privacy budgets. 

Similar to the DP models, the unbounded PDP provides a stronger privacy guarantee than the bounded PDP. Under the bounded PDP setting, the privacy vector is public, and only the content of the data is protected. In contrast, under the unbounded PDP, the privacy information itself, i.e., $\boldsymbol{\varepsilon}_D$ is also private: although the privacy function $\mathcal{E}$ is publicly known, the presence or absence of any individual user in the dataset is considered private, making the privacy budget private as well. In literature, all PDP statistical estimation works adopt the bounded PDP setting \citep{PDP_2023}, while other data analytics also consider the unbounded PDP setting \citep{Graham}.

Similar to DP, PDP also has a set of desired properties that work under both neighboring relations:

\begin{lemma}[Post-Processing]
\label{lemma:post-processing_PDP}
    If a mechanism $\mathcal{M}: \mathcal{D} \to \mathcal{Y}$ preserves $\boldsymbol{\varepsilon}$-PDP, then for any random mechanism $\mathcal{M}': \mathcal{Y} \to \mathcal{J}$, $\mathcal{M}'(\mathcal{M}(\cdot))$ still preserves $\boldsymbol{\varepsilon}$/$\mathcal{E}$-PDP.
\end{lemma}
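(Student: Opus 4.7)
The plan is to adapt the standard post-processing argument for DP to the personalized setting, and to handle the bounded and unbounded variants simultaneously, since Definitions \ref{def:PDP_bounded} and \ref{def:PDP_unbounded} share the same multiplicative Markov-kernel structure and differ only in the neighboring relation and in whether the multiplicative factor is indexed by a position $i$ or by a user $u$. I would fix an arbitrary neighboring pair, say $D \sim_i D'$ for the bounded case, and show that $\Pr(\mathcal{M}'(\mathcal{M}(D)) \in J) \leq e^{\varepsilon_i}\,\Pr(\mathcal{M}'(\mathcal{M}(D')) \in J)$ for every measurable $J \subseteq \mathcal{J}$; the unbounded case then follows verbatim after replacing $\varepsilon_i$ with $\mathcal{E}(u)$ and $\sim_i$ with $\sim_u$.

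First, I would view the random mechanism $\mathcal{M}'$ as a Markov kernel: for every $y \in \mathcal{Y}$ it specifies a probability measure over $\mathcal{J}$, so for each measurable $J$ the map $g(y) := \Pr(\mathcal{M}'(y) \in J)$ is measurable and $[0,1]$-valued. By the tower property,
\begin{equation*}
\Pr(\mathcal{M}'(\mathcal{M}(D)) \in J)
= \mathbb{E}_{y \sim \mathcal{M}(D)}[g(y)].
\end{equation*}
The key step for handling the randomness of $\mathcal{M}'$ is the layer-cake identity $g(y) = \int_0^1 \mathbb{1}[g(y) \geq t]\,dt$. Setting $K_t := \{y \in \mathcal{Y} : g(y) \geq t\}$, each $K_t$ is measurable, and Fubini's theorem gives
\begin{equation*}
\mathbb{E}_{y \sim \mathcal{M}(D)}[g(y)]
= \int_0^1 \Pr(\mathcal{M}(D) \in K_t)\,dt.
\end{equation*}

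Now I would apply the $\boldsymbol{\varepsilon}$-PDP guarantee of $\mathcal{M}$ to each level set $K_t$: for the fixed neighboring pair $D \sim_i D'$, Definition \ref{def:PDP_bounded} yields $\Pr(\mathcal{M}(D) \in K_t) \leq e^{\varepsilon_i}\,\Pr(\mathcal{M}(D') \in K_t)$ for every $t$. Pulling the constant factor $e^{\varepsilon_i}$ out of the integral and then running the two displays above in reverse with $D'$ in place of $D$ gives the desired inequality. The unbounded case is identical, using Definition \ref{def:PDP_unbounded} in place of Definition \ref{def:PDP_bounded}.

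I do not anticipate a genuine obstacle, since this is essentially textbook. The only subtlety worth flagging is that $\mathcal{M}'$ is randomized rather than a deterministic measurable map, so the naive preimage argument ``take $K = \mathcal{M}'^{-1}(J)$ and apply PDP to $K$'' does not directly work; the layer-cake decomposition above is the standard fix, and measurability of $g$ and of the $K_t$ is automatic from the Markov-kernel formulation. Everything else is routine Fubini.
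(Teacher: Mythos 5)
Your proof is correct, and the layer-cake/Markov-kernel argument is exactly the standard post-processing proof that the statement relies on; the paper itself states this lemma without proof, treating it as an inherited property from the DP literature, so there is nothing to compare against. The one assumption worth making explicit (which your tower-property step uses implicitly) is that the internal randomness of $\mathcal{M}'$ is independent of that of $\mathcal{M}$ and of the dataset given $y$, which is the standard convention for post-processing; you also correctly handle the minor wrinkle that the lemma's hypothesis should read ``$\boldsymbol{\varepsilon}$/$\mathcal{E}$-PDP'' to cover both the bounded and unbounded variants, as your two parallel cases do.
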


\begin{lemma}[Basic Composition]
\label{lemma:basic_composition_PDP}
    If $\mathcal{M}$ is a composition of PDP mechanisms $\mathcal{M}_1, \mathcal{M}_2, \cdots, \mathcal{M}_k$, i.e., $\mathcal{M}(D) = (\mathcal{M}_1(D), \mathcal{M}_2(D), \cdots, \mathcal{M}_k(D))$, where $\mathcal{M}_i$ preserves $\boldsymbol{\varepsilon}$/$\mathcal{E}$-PDP, then $\mathcal{M}$ preserves $k\boldsymbol{\varepsilon}$/$k\mathcal{E}$-PDP. 
\end{lemma}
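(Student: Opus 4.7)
The plan is to prove both the bounded and unbounded cases in parallel, since the only difference is the neighboring relation and the associated privacy coordinate (index $i$ versus user $u$). The strategy is the usual basic-composition-via-product-of-likelihood-ratios argument, so the bulk of the work is notational: expressing the joint output distribution as a product of marginals and then applying the per-mechanism PDP guarantee coordinatewise.

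Concretely, I would fix arbitrary neighboring datasets $D \sim_i D'$ (in the bounded case) or $D \sim_u D'$ (in the unbounded case), and fix the relevant privacy index (so the target multiplicative factor is $e^{\varepsilon_i}$ or $e^{\mathcal{E}(u)}$). I would first show the inequality on product rectangles $K = K_1 \times \cdots \times K_k \subseteq \mathcal{K}$, where $K_j$ is a measurable subset of the output space of $\mathcal{M}_j$. Since the mechanisms are assumed to draw independent randomness (the standard convention in the definition of composition), the joint probability factors as
\begin{equation*}
\Pr(\mathcal{M}(D) \in K) \;=\; \prod_{j=1}^{k} \Pr(\mathcal{M}_j(D) \in K_j).
\end{equation*}
Applying the $\boldsymbol{\varepsilon}$-PDP (respectively $\mathcal{E}$-PDP) guarantee of each $\mathcal{M}_j$ to the fixed pair $D \sim_i D'$ (respectively $D \sim_u D'$) yields $\Pr(\mathcal{M}_j(D) \in K_j) \leq e^{\varepsilon_i} \Pr(\mathcal{M}_j(D') \in K_j)$ (respectively $e^{\mathcal{E}(u)}$), and multiplying these $k$ inequalities gives
\begin{equation*}
\Pr(\mathcal{M}(D) \in K) \;\leq\; e^{k\varepsilon_i} \prod_{j=1}^{k} \Pr(\mathcal{M}_j(D') \in K_j) \;=\; e^{k\varepsilon_i}\, \Pr(\mathcal{M}(D') \in K),
\end{equation*}
with the analogous bound $e^{k\mathcal{E}(u)}$ in the unbounded case.

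To promote the inequality from product rectangles to arbitrary measurable $K \subseteq \mathcal{K}$, I would invoke the standard fact that measurable rectangles form a $\pi$-system generating the product $\sigma$-algebra on $\mathcal{K}$, and apply Dynkin's $\pi$-$\lambda$ theorem (or, equivalently, the monotone class theorem) to the $\lambda$-system $\{K : \Pr(\mathcal{M}(D) \in K) \leq e^{k\varepsilon_i}\Pr(\mathcal{M}(D') \in K)\}$. Since the neighbors $D, D'$ and the coordinate $i$ (or user $u$) were arbitrary, this establishes $k\boldsymbol{\varepsilon}$-PDP (respectively $k\mathcal{E}$-PDP) per Definition~\ref{def:PDP_bounded} (respectively Definition~\ref{def:PDP_unbounded}).

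Since this lemma is a direct analogue of the DP basic-composition lemma and the PDP definition localizes the privacy guarantee to a single coordinate of the privacy vector, there is no real obstacle; the mildest technical subtlety is the reliance on independent randomness across the $\mathcal{M}_j$'s, which is built into the meaning of ``composition'' here, and the measure-theoretic step of extending from rectangles to all measurable sets, which is routine.
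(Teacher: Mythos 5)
The paper states this lemma without proof (it is presented as a standard property inherited from DP), so your argument has to stand on its own. The factorization over product rectangles and the multiplication of the $k$ per-mechanism bounds are fine, but the final step --- promoting the inequality from rectangles to arbitrary measurable $K$ via Dynkin's $\pi$--$\lambda$ theorem --- does not work as stated. The collection $\mathcal{L} = \{K : \Pr(\mathcal{M}(D)\in K) \le e^{k\varepsilon_i}\Pr(\mathcal{M}(D')\in K)\}$ is \emph{not} a $\lambda$-system: a one-sided domination inequality between two measures is not preserved under complementation or proper set differences. Concretely, if $A\subseteq B$ with $\mu(A)\le e^{k\varepsilon_i}\nu(A)$ and $\mu(B)\le e^{k\varepsilon_i}\nu(B)$, then $\mu(B\setminus A)\le e^{k\varepsilon_i}\nu(B\setminus A)$ would require $\mu(A)\ge e^{k\varepsilon_i}\nu(A)$, i.e.\ the reverse inequality. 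So the $\pi$--$\lambda$ machinery gives you nothing here, and the proof as written only establishes the bound on measurable rectangles.

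The standard fix is to argue at the level of densities rather than sets. From the per-mechanism guarantee, the law $\mu_j$ of $\mathcal{M}_j(D)$ is absolutely continuous with respect to the law $\nu_j$ of $\mathcal{M}_j(D')$ and satisfies $\frac{d\mu_j}{d\nu_j}\le e^{\varepsilon_i}$ $\nu_j$-almost everywhere (this pointwise statement \emph{is} equivalent to the set-level definition, and does behave well under products). With independent randomness the joint laws are product measures, so $\frac{d(\mu_1\times\cdots\times\mu_k)}{d(\nu_1\times\cdots\times\nu_k)} = \prod_j \frac{d\mu_j}{d\nu_j} \le e^{k\varepsilon_i}$ almost everywhere, and integrating this over any measurable $K$ yields the claim for all of $\mathcal{K}$ at once (identically with $e^{\mathcal{E}(u)}$ in the unbounded case). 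Your overall structure and the observation that the argument is identical in both neighboring models are correct; only the extension step needs to be replaced.
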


\begin{lemma}[Parallel Composition \citep{Parallel_Composition}]
\label{lemma:parallel_composition_PDP}
    Suppose there is a pairwise disjoint splitting $\mathbb{R}_1 \cup \mathbb{R}_2 \cup \cdots \cup \mathbb{R}_k \subseteq \mathbb{R}$, and $\mathcal{M}_1, \mathcal{M}_2, \cdots, \mathcal{M}_k$ is a list of $\boldsymbol{\varepsilon}/\mathcal{E}$-PDP mechanisms, then $\mathcal{M}(D) := (\mathcal{M}_1(D \cap \mathbb{R}_1), \mathcal{M}_2(D \cap \mathbb{R}_2), \cdots, \mathcal{M}_k(D \cap \mathbb{R}_k))$ also preserves $\boldsymbol{\varepsilon}$/$\mathcal{E}$-PDP.
\end{lemma}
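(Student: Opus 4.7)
The plan is to reduce the $k$-fold parallel composition to the single block of the partition that contains the differing record, and to exploit the fact that the other blocks see identical inputs on the two neighbors and therefore contribute no privacy loss. Fix arbitrary neighbors $D, D'$ under whichever PDP relation is in force, change-one for the bounded variant or add/remove-one for the unbounded one. Because $\mathbb{R}_1, \ldots, \mathbb{R}_k$ are pairwise disjoint, the single differing record lies in exactly one cell $\mathbb{R}_j$, so $D \cap \mathbb{R}_l = D' \cap \mathbb{R}_l$ for every $l \neq j$, while $(D \cap \mathbb{R}_j)$ and $(D' \cap \mathbb{R}_j)$ inherit the same neighboring relation and the same responsible record (index $i$ in the bounded case, user $u$ in the unbounded case).

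Next I would invoke the mutual independence of the internal randomness of $\mathcal{M}_1, \ldots, \mathcal{M}_k$ to write, for any measurable $K \subseteq \mathcal{K}$,
\begin{equation*}
\Pr(\mathcal{M}(D) \in K)
= \int \Pr\bigl(\mathcal{M}_j(D \cap \mathbb{R}_j) \in K_{y_{-j}}\bigr) \prod_{l \neq j} d\mu_l^D(y_l),
\end{equation*}
where $K_{y_{-j}}$ is the slice of $K$ at the coordinates other than $j$ and $\mu_l^D$ denotes the output law of $\mathcal{M}_l(D \cap \mathbb{R}_l)$. For $l \neq j$ the measures $\mu_l^D$ and $\mu_l^{D'}$ coincide, so the outer integral against $\prod_{l \neq j} d\mu_l(y_l)$ is identical for $D$ and $D'$. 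Applying the $\boldsymbol{\varepsilon}$-PDP (respectively $\mathcal{E}$-PDP) guarantee of $\mathcal{M}_j$ pointwise inside the integral produces the multiplicative factor $e^{\varepsilon_i}$ (respectively $e^{\mathcal{E}(u)}$), and the claim follows after integrating.

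The only subtlety, rather than a genuine obstacle, is aligning the personalized budget charged on the two sides. In the bounded case the budget is attached to the position of the differing record, so one must verify that the sub-mechanism $\mathcal{M}_j$ is analyzed with the same $\varepsilon_i$ that governs the outer composition; this holds because the partition merely regroups records without altering their per-record budgets, which is exactly what the PDP guarantee of $\mathcal{M}_j$ assumes. In the unbounded case the budget $\mathcal{E}(u)$ is attached directly to the user and is therefore transported unchanged into the subproblem. In either setting the overall mechanism inherits the single-block multiplicative factor, which is exactly the $\boldsymbol{\varepsilon}$- or $\mathcal{E}$-PDP guarantee claimed.
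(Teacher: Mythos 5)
The paper states this lemma without proof (it is quoted from prior work in the preliminaries), so there is no in-paper argument to compare against; I will assess your proof on its own. Your factorization argument is the standard one and is sound for the \emph{unbounded} (add/remove-one) variant: the added or removed user's record lies in exactly one cell $\mathbb{R}_j$, the other sub-datasets coincide, the product measure over the unaffected coordinates cancels, and the $e^{\mathcal{E}(u)}$ factor comes from $\mathcal{M}_j$ alone. This is also the only form in which the paper actually invokes the lemma (in the privacy analysis of Algorithm~\ref{alg:PDPMeanEstimationUnknownPrivacy}).

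For the \emph{bounded} (change-one) variant there is a genuine gap. Your opening claim --- that the single differing record lies in exactly one cell, so $D \cap \mathbb{R}_l = D' \cap \mathbb{R}_l$ for all $l \neq j$ --- is false in general: a change-one neighbor replaces $x_i$ by a new value $x_i'$, and since the partition is of the data domain $\mathbb{R}$, the old and new values may fall in \emph{different} cells $\mathbb{R}_j \neq \mathbb{R}_{j'}$. Then two sub-datasets are perturbed, and each is perturbed by a \emph{removal} (from $\mathbb{R}_j$) or an \emph{insertion} (into $\mathbb{R}_{j'}$) rather than by a replacement, so the change-one PDP guarantee of $\mathcal{M}_j$ and $\mathcal{M}_{j'}$ does not apply to the induced pairs at all; even granting an add/remove-type guarantee for the sub-mechanisms, your single-factor argument would charge the budget $\varepsilon_i$ twice. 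To close this you must either (i) restrict to partitions along an attribute that a change-one perturbation cannot alter (e.g., partitioning by the public per-record budgets, which is what the paper's algorithm effectively does), (ii) additionally assume each $\mathcal{M}_l$ is PDP under add/remove and accept a factor-of-two degradation, or (iii) weaken the bounded claim accordingly. A second, smaller point you gesture at but should make explicit: the record responsible for the change generally has a different index inside $D \cap \mathbb{R}_j$ than in $D$, so the sub-mechanism must be run with the restriction of $\boldsymbol{\varepsilon}$ to the records landing in $\mathbb{R}_j$ for the charged budget to be the correct $\varepsilon_i$.
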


\subsection{Statistical estimation under PDP}
Here, we provide a brief discussion of the work in statistical estimation under PDP. Existing works~\citep{PDP_2023} study mean estimation over a distribution $\mathcal{P}$ with a domain range $[0,1]$. Under DP, this task can be readily addressed by adding the Laplace noise to the mean. Moving to PDP, since each element has a different privacy budget, it is unfair to treat each element equally. Therefore, one natural idea is to assign each element a weight based on its privacy budget and then compute the weighted sum for the mean estimation. Under the bounded PDP setting (Definition \ref{def:PDP_unbounded}), \citep{PDP_2023} proposes such an \textit{affine differentially private mean} (ADPM) mechanism.

Given the privacy vector $\boldsymbol{\varepsilon}$, ADPM computes its saturated version $\tilde{\boldsymbol{\varepsilon}}$, i.e., $\tilde{\boldsymbol{\varepsilon}}$, and normalizes it into a set of weights. Specifically, it finds the smallest index satisfying $\varepsilon_{k+1} \geq \frac{\sum_{i=1}^k \varepsilon_i^2 + 8}{\sum_{i=1}^k\varepsilon_i}$ and defines $\tilde{\boldsymbol{\varepsilon}}$ to equal $\boldsymbol{\varepsilon}$ except for the last $n-k$ entries, which are left to equal $\tilde{\varepsilon}_{k+1} := \frac{\sum_{i=1}^k \varepsilon_i^2 + 8}{\sum_{i=1}^k\varepsilon_i}$. Then, ADPM uses a weighted mean estimator. The details are shown in Algorithm \ref{alg:ADPM} and it can be shown to preserve PDP with Lemma \ref{lemma:privacy_weighted_mean}.

\begin{minipage}{0.93\linewidth}
    \centering
    \begin{algorithm}[H]
        \caption{ADPM}
        \label{alg:ADPM}
        \KwIn{$D, \boldsymbol{\varepsilon}$, $h$}
        \KwOut{$\widehat{\texttt{Mean}}(D)$}
        $\boldsymbol{\varepsilon} \leftarrow \text{SORT}(\boldsymbol{\varepsilon})$; \tcp*{ascending order}
        $\tilde{\boldsymbol{\varepsilon}}_1 \leftarrow \boldsymbol{\varepsilon}_1$\;
        $k \leftarrow 1$\;
        \While{$k < n$}{
            $\tilde{\boldsymbol{\varepsilon}}_{k+1} \leftarrow \min \left\{ \boldsymbol{\varepsilon}_{k+1}, 
            \frac{\sum_{i=1}^{k} \boldsymbol{\varepsilon}_i^2 + 8}{\sum_{i=1}^{k} \boldsymbol{\varepsilon}_i} \right\}$\;
            $k \leftarrow k + 1$\;
        }
        $\boldsymbol{w} \leftarrow \tilde{\boldsymbol{\varepsilon}} / \sum_{i=1}^n \tilde{\boldsymbol{\varepsilon}}_i$\;
        \Return{$\widehat{\emph{\texttt{Mean}}}(D) = \sum_{i=1}^n \boldsymbol{w}_i\boldsymbol{x}_i + \emph{\texttt{Lap}}(\frac{h}{\sum_{i=1}^n \tilde{\boldsymbol{\varepsilon}}_i)}$} 
    \end{algorithm}
\end{minipage}

\begin{lemma}[Weighted Mean Estimator]
    For any dataset $D$ with bounded range $\vert\emph{\texttt{Range}}(D)\vert$, the following mechanism $\emph{\texttt{WeightedMean}}(D, \mathbf{w}, \varepsilon)$ preserves $\varepsilon$-\emph{DP}.
    \begin{align*}
        \emph{\texttt{WeightedMean}}(D, \mathbf{w}, \varepsilon) = \sum_{i=1}^{n(D)} w_i x_i + \emph{Lap}(\frac{\vert\emph{\texttt{Range}}(D)\vert\cdot\max_{i} w_i}{\varepsilon}),
    \end{align*}
    where $x_i \in D, \mathbf{w} \in \mathbb{R}^{n(D)}, \mathbf{w} \geq \mathbf{0}$.
    \label{lemma:privacy_weighted_mean}
\end{lemma}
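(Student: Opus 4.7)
The plan is to derive the claim directly from the Laplace mechanism (Section 3.2.1) by bounding the global sensitivity of the linear query $Q(D) := \sum_{i=1}^{n(D)} w_i x_i$. Throughout I interpret $|\texttt{Range}(D)|$ as a publicly known bound that every admissible input (and thus every change-one neighbor) respects; this matches how \texttt{WeightedMean} is used in the paper, where it is always preceded by a clipping step to a range that has already been privately estimated and can safely be treated as public.

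The first step is the sensitivity calculation. Fix any change-one neighbors $D \sim_j D'$ differing only in the $j$-th coordinate, with $x_j \in D$ replaced by $x'_j \in D'$. All other summands and all weights $w_i$ appear identically in $Q(D)$ and $Q(D')$, so only the $j$-th term survives the difference, giving
\begin{align*}
|Q(D) - Q(D')| \;=\; w_j \cdot |x_j - x'_j| \;\le\; \max_{i} w_i \cdot |\texttt{Range}(D)|,
\end{align*}
where the last inequality uses $w_j \le \max_i w_i$ together with the fact that both $x_j$ and $x'_j$ lie in a common interval of width $|\texttt{Range}(D)|$. Hence the global sensitivity of $Q$ is at most $|\texttt{Range}(D)| \cdot \max_i w_i$.

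The second step is a direct appeal to the Laplace mechanism: adding noise drawn from $\mathrm{Lap}(\Delta Q/\varepsilon)$ to a query with global sensitivity $\Delta Q$ yields an $\varepsilon$-DP output, and plugging in $\Delta Q = |\texttt{Range}(D)| \cdot \max_i w_i$ reproduces exactly the noise scale displayed in the statement. Post-processing (Lemma \ref{lemma:post-processing_DP}) is not even needed since the mechanism is literally a noisy linear query. The only potentially subtle point, and the reason I flagged the interpretation of $|\texttt{Range}(D)|$ at the outset, is that if the range were allowed to vary adversarially with the data rather than being a public parameter, then $|\texttt{Range}(D)|$ and $|\texttt{Range}(D')|$ could disagree and the noise scale itself would become data-dependent, forcing either a uniform upper bound or a smooth-sensitivity-style recalibration. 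In the paper's pipeline the preceding clipping step exactly rules this out, so beyond the one-line sensitivity bound and the Laplace-mechanism invocation there is no further work.
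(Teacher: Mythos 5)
Your proposal is correct: the one-line global-sensitivity bound $|Q(D)-Q(D')| = w_j|x_j - x'_j| \le \max_i w_i \cdot |\texttt{Range}(D)|$ followed by the Laplace mechanism is exactly the standard derivation, and the paper itself states this lemma without proof (importing it from the cited prior work), so there is no alternative argument to compare against. Your flag about $|\texttt{Range}(D)|$ needing to be a public, input-independent bound is the one genuine subtlety, and your resolution is the right one — in the paper's pipeline the mechanism is only ever invoked on data already clipped to a privately released range, so the noise scale is not data-dependent.
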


Besides, it can be shown that the saturated privacy budget vector $\tilde{\boldsymbol{\varepsilon}}$ found by ADPM enjoys a good property, which will be used in our paper:

\begin{lemma}[\citep{PDP_2023}]
    Given $\boldsymbol{\varepsilon} \geq \mathbf{0}$, we have the following, 
     \begin{align}
        \max_{k_1} \frac{1}{\sum_{i=1}^{k_1} \varepsilon_i + \sqrt{n-{k_1}}} \geq \sqrt{\frac{256}{443}} \frac{\sqrt{\sum_{i=1}^{k_2}\varepsilon_i^2 + (n-k_2)\tilde{\varepsilon}_{k_2+1}^2 + 8}}{2(\sum_{i=1}^{k_2}\varepsilon_i + (n-k_2)\tilde{\varepsilon}_{k_2+1})},
     \end{align}
     where $k_2$ is the smallest index such that $\varepsilon_{k_2+1} \geq \frac{\sum_{i=1}^{k_2} \varepsilon_i^2 + 8}{\sum_{i=1}^k\varepsilon_i}$, and $\tilde{\varepsilon}_{k_2+1} := \frac{\sum_{i=1}^{k_2} \varepsilon_i^2 + 8}{\sum_{i=1}^{k_2}\varepsilon_i}$ by \emph{Algorithm} \ref{alg:ADPM}.
     \label{lemma:pdp_2023_optimality}
\end{lemma}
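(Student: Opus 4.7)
The plan is to reduce the inequality to a cleaner form, establish one structural monotonicity fact about the saturated vector, and then verify the target bound via a two-case analysis on the relative sizes of the quantities involved.

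First, I would simplify the right-hand side using the defining identity $\tilde\varepsilon_{k_2+1}\sum_{i=1}^{k_2}\varepsilon_i = \sum_{i=1}^{k_2}\varepsilon_i^2+8$. Writing $\tilde\varepsilon:=\tilde\varepsilon_{k_2+1}$, $S_1:=\sum_{i\le k_2}\varepsilon_i$, $S_2:=\sum_{i\le k_2}\varepsilon_i^2$, $m:=n-k_2$, and $T:=S_1+m\tilde\varepsilon$, this identity collapses the numerator inside the square root into $\tilde\varepsilon\cdot T$, after which the claim reduces to showing
\[
\min_{k_1}\Bigl(\textstyle\sum_{i\le k_1}\varepsilon_i+\sqrt{n-k_1}\Bigr)^2 \;\le\; \frac{443}{64}\Bigl(\frac{S_1^2}{S_2+8}+m\Bigr).
\]

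Second, I would prove the structural fact that $\varepsilon_i\le\tilde\varepsilon$ for every $i\le k_2$. Minimality of $k_2$ gives $\varepsilon_{k_2}<\tilde\varepsilon_{k_2}=(S_2^{k_2-1}+8)/S_1^{k_2-1}$, and a one-line computation of $\tilde\varepsilon-\varepsilon_{k_2}$ (using $\varepsilon_{k_2}S_1^{k_2-1}<S_2^{k_2-1}+8$) shows that the difference is positive; sortedness extends this to all $i\le k_2$. The two consequences used below are $S_1\le k_2\tilde\varepsilon$, i.e., $S_1/\tilde\varepsilon\le k_2$, and, via Cauchy-Schwarz, $S_1^2\le k_2 S_2$.

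Third, I would split on whether $S_1\le\tfrac{\sqrt{443}-8}{8}\sqrt{m}$. In the ``balanced'' regime I would choose $k_1=k_2$ and split the cross term with Young's inequality $2S_1\sqrt{m}\le\lambda S_1^2+m/\lambda$ at the distinguished value $\lambda=8/(\sqrt{443}-8)$; the resulting quadratic in $S_1$ is verified using $\alpha/(S_2+8)\le\alpha/8<1$ (where $\alpha=443/64$), and the chosen $\lambda$ is precisely the maximizer of $h(\lambda)=(\alpha-1-1/\lambda)/(1+\lambda)$, whose value $(\sqrt{443}-8)^2/64$ accounts for the constant $443/64$ on the other side. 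In the complementary ``$S_1$-dominant'' regime, I would pick $k_1<k_2$ equal to the largest index at which the discrete marginal $\varepsilon_{k_1+1}$ is still less than $1/(2\sqrt{n-k_1})$; this locates the minimum of $f(k)=\sum_{i\le k}\varepsilon_i+\sqrt{n-k}$ inside the unsaturated range, and combined with $\varepsilon_i\le\tilde\varepsilon$ and Cauchy-Schwarz bounds $f(k_1)^2$ by $(443/64)(S_1^2/(S_2+8)+m)$.

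The main obstacle is the sharp constant $\sqrt{256/443}$. Any looser constant admits a direct one-case proof via $(a+b)^2\le2(a^2+b^2)$, but matching $443/64$ forces both the Young weight $\lambda$ and the break-point between regimes to be tuned to the unique maximizer of $(\alpha-1-1/\lambda)/(1+\lambda)$. Coordinating the two case-choices so that exactly one of them fires below the threshold, while keeping the balanced-index argument in the second regime from leaking the constant, is the delicate step I expect to require the most care.
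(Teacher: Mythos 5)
The paper never proves this lemma --- it is imported verbatim from the cited reference \citep{PDP_2023} --- so there is no in-paper argument to compare yours against; I can only assess your plan on its own terms.

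Your first two steps are sound. The identity $\tilde\varepsilon_{k_2+1}\sum_{i\le k_2}\varepsilon_i=\sum_{i\le k_2}\varepsilon_i^2+8$ does collapse the numerator to $\tilde\varepsilon\,T$ with $T=S_1+m\tilde\varepsilon$, and the claim is indeed equivalent to $\min_{k_1}\bigl(\sum_{i\le k_1}\varepsilon_i+\sqrt{n-k_1}\bigr)^2\le\frac{443}{64}\bigl(\frac{S_1^2}{S_2+8}+m\bigr)$. The monotonicity fact $\varepsilon_i\le\tilde\varepsilon$ for $i\le k_2$ follows from minimality of $k_2$ exactly as you sketch, and your ``balanced'' case (with the optimizer $\lambda=8/(\sqrt{443}-8)$ of $h$) is correct: when $S_1\le\frac{\sqrt{443}-8}{8}\sqrt{m}$ the choice $k_1=k_2$ closes the bound.

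The gap is in the complementary regime, and it is not merely a matter of delicacy. First, your selection rule for $k_1$ (largest index with $\varepsilon_{k_1+1}<\tfrac{1}{2\sqrt{n-k_1}}$) tracks a sign change of the discrete marginal $\varepsilon_{k+1}-\bigl(\sqrt{n-k}-\sqrt{n-k-1}\bigr)$, but $f(k)=\sum_{i\le k}\varepsilon_i+\sqrt{n-k}$ is not convex in $k$: for constant $\varepsilon_i$ the marginal goes from positive to negative, so the sign change locates a local \emph{maximum}, and the global minimum sits at the endpoints. Second, and more fundamentally, the reduced inequality with constant $\tfrac{443}{64}$ appears to be false in this regime. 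Take $\varepsilon_i\equiv\varepsilon$ with $n\varepsilon^2=1$ (e.g.\ $n=100$, $\varepsilon=0.1$): no index saturates, so $k_2=n$, $m=0$, $S_1=n\varepsilon$, $S_2=n\varepsilon^2=1$, and the target becomes $\min_k f(k)^2\le\frac{443}{64}\cdot\frac{n^2\varepsilon^2}{9}=\frac{443}{576}\,n\approx0.77\,n$; but $f$ is minimized at $k\in\{0,n\}$ with value $\sqrt{n}$ (resp.\ $n\varepsilon=\sqrt n$), so $\min_kf(k)^2=n$. Equivalently, the original inequality reads $0.1\ge\sqrt{256/443}\cdot\frac{3}{20}\approx0.114$, which fails. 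The worst case of this one-parameter family forces the constant down to $2/3$ in place of $\sqrt{256/443}\approx0.76$. So either the statement as transcribed in this paper has lost a constant (or an implicit hypothesis) relative to the source, or your Case B must target a weaker constant; as written, no choice of $k_1$ can rescue it, so the plan cannot be completed for the stated bound.
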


\subsection{Probabilistic Tools}
Some useful probabilistic tools are presented below, which will be used in the following section.

\begin{theorem}[Relating DP with $\alpha$-divergence \citep{BartheOlmedo2013}]
    A mechanism $\mathcal{M}$ is $\varepsilon$-DP if and only if
    \begin{align*}
        \sup_{D \sim D'} D_{e^\varepsilon} (\mathcal{M}(D) \vert \mathcal{M}(D')) = 0.
    \end{align*}
    The $\alpha$-divergence $D_{\alpha}(\cdot \Vert \cdot)$ $(\alpha \geq 1)$ between two distributions $\mathcal{V}, \mathcal{V}'$ is 
    \begin{align*}
        D_\alpha(\mathcal{V} \vert \mathcal{V}') = \sup_E \left( \Pr_{\mathcal{V}}(E) - \alpha \Pr_{\mathcal{V}'}(E) \right) = \int_z \left[ \frac{d\mathcal{V}}{d\mathcal{V}'}(z) - \alpha \right]_+ d\mathcal{V}'(z) = \sum_{z \in \mathcal{Z}} [\Pr_{\mathcal{V}}(z) - \alpha \Pr_{\mathcal{V}'}(z)]_+,
    \end{align*}
    where $E$ ranges over all measurable subsets of the sample space $\mathcal{Z}$, $[\cdot]_+ = \max\{\cdot, 0\}$, and $\Pr_{\mathcal{V}}, \Pr_{\mathcal{V}'}$ are probabilistic measures induced by $\mathcal{V}, \mathcal{V}'$, respectively. \footnote{Here, $\frac{d\mathcal{V}}{d\mathcal{V}'}$ denotes the Radon-Nikodym derivative between $\mathcal{V}$ and $\mathcal{V}'$ \citep{RN_derivative}}
    \label{theorem:divergence_DP}
\end{theorem}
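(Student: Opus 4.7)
The plan is to argue both directions of the iff directly from the definitions, using primarily the $\sup_E$ characterization of $D_{e^\varepsilon}(\cdot \Vert \cdot)$. The key observation is that the inequality defining $\varepsilon$-DP and the expression being maximized inside $D_{e^\varepsilon}$ are literally the same quantity, so the theorem will reduce to swapping a pointwise ``for all $E$'' bound with its supremum, once one checks that $D_{e^\varepsilon}$ is always non-negative.

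First I would record a preliminary non-negativity fact: for any two distributions $\mathcal{V}, \mathcal{V}'$ and any $\alpha \geq 1$, taking $E = \emptyset$ gives $\Pr_\mathcal{V}(E) - \alpha\Pr_{\mathcal{V}'}(E) = 0$, so $D_\alpha(\mathcal{V} \Vert \mathcal{V}') \geq 0$. Consequently $D_\alpha(\mathcal{V} \Vert \mathcal{V}') = 0$ is equivalent to the pointwise condition that every measurable $E$ satisfies $\Pr_\mathcal{V}(E) \leq \alpha \Pr_{\mathcal{V}'}(E)$.

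For the forward direction, assume $\mathcal{M}$ is $\varepsilon$-DP. Fix any neighboring $D \sim D'$ and any measurable $E \subseteq \mathcal{Y}$. The DP inequality rearranges to $\Pr(\mathcal{M}(D) \in E) - e^\varepsilon \Pr(\mathcal{M}(D') \in E) \leq 0$, so the sup over $E$ is at most $0$; combined with the non-negativity lemma, $D_{e^\varepsilon}(\mathcal{M}(D) \Vert \mathcal{M}(D')) = 0$, and taking sup over $D \sim D'$ gives the stated identity. For the converse, if the supremum-over-neighbors is $0$, then every individual $D_{e^\varepsilon}(\mathcal{M}(D) \Vert \mathcal{M}(D'))$ equals $0$, so the supremum representation forces $\Pr(\mathcal{M}(D) \in E) - e^\varepsilon\Pr(\mathcal{M}(D') \in E) \leq 0$ for every measurable $E$, which is exactly $\varepsilon$-DP.

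There is no real obstacle in the main iff; the only technical point worth recording is the equivalence of the three expressions given for $D_\alpha$. For this I would identify the event that attains the supremum, namely $E^* = \{z : \tfrac{d\mathcal{V}}{d\mathcal{V}'}(z) \geq \alpha\}$ (invoking absolute continuity of $\mathcal{V}$ with respect to $\mathcal{V}'$ on the relevant support so that the Radon--Nikodym derivative is well defined), and observe that on $E^*$ the integrand $\tfrac{d\mathcal{V}}{d\mathcal{V}'}(z) - \alpha$ is non-negative and off $E^*$ it is negative. Integrating $\tfrac{d\mathcal{V}}{d\mathcal{V}'}(z) - \alpha$ over $E^*$ against $\mathcal{V}'$ both equals $\Pr_\mathcal{V}(E^*) - \alpha\Pr_{\mathcal{V}'}(E^*)$ and coincides with $\int_z [\tfrac{d\mathcal{V}}{d\mathcal{V}'}(z) - \alpha]_+ \, d\mathcal{V}'(z)$, establishing the second equality; specializing to discrete $\mathcal{Z}$ and replacing the integral by a sum yields the third.
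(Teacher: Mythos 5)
Your argument is correct: the paper states this result as a citation to \citep{BartheOlmedo2013} and gives no proof of its own, and your definition-unwinding is exactly the standard one. The iff reduces, as you say, to noting that $E=\emptyset$ forces $D_{e^\varepsilon}\geq 0$, so vanishing of the supremum over neighbors is equivalent to the pointwise inequality $\Pr(\mathcal{M}(D)\in E)\leq e^{\varepsilon}\Pr(\mathcal{M}(D')\in E)$ for every $E$ and every pair, which is the DP definition verbatim; your identification of $E^{*}=\{z:\tfrac{d\mathcal{V}}{d\mathcal{V}'}(z)\geq\alpha\}$ as the maximizer correctly links the supremum form to the integral and discrete forms, with the absolute-continuity caveat you already flag being the only technical point (and one the theorem statement itself implicitly assumes).
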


\begin{definition}[Coupling \citep{Coupling}]
    Given two distributions $\mathcal{V}, \mathcal{V}' \in \mathbb{P}(Z)$, a coupling is a joint distribution $\mathcal{W} \in \mathbb{P}(Z \times Z)$ whose marginals are $\mathcal{V}, \mathcal{V}'$, respectively, i.e., $\mathcal{V} = \sum_{y}\mathcal{W}(y, y'), \mathcal{V}' = \sum_{y'}\mathcal{W}(y, y')$. 
\end{definition}

Coupling always exists and is not unique. Given two distributions $\mathcal{V}, \mathcal{V}'$, among various couplings, the maximal coupling explicitly realizes the total variation (TV) distance between $\mathcal{V}$ and $\mathcal{V}'$:
\begin{align}
    \label{eq:eta_def}
   \eta := \mathrm{TV}(\mathcal{V},\mathcal{V}')= \frac{1}{2}\sum_{y}\vert\mathcal{V}(y)-\mathcal{V}'(y)\vert.
\end{align}
The joint distribution of maximal coupling maximizes the probability $\Pr(\mathcal{V}  = \mathcal{V}')$, which speaks for its name. Let $\mathcal{V}_{\min}(y) = \min\{\mathcal{V}(y), \mathcal{V}'(y)\}$, the maximal coupling is constructed as $\mathcal{W} = (1-\eta)\mathcal{W}_0 + \eta \mathcal{W}_1$, where $\mathcal{W}_0(y, y') = \frac{\mathcal{V}_{\min}(y)1(y = y')}{1 - \eta}$, and $\mathcal{W}_1(y, y') = \frac{(\mathcal{V}(y) - \mathcal{V}_{\min}(y))(\mathcal{V}'(y') - \mathcal{V}_{\min}(y'))}{\eta^2}$. The decomposition of $\mathcal{V}$ and $\mathcal{V}'$ can be obtained by projecting $\mathcal{W}$ along the marginals:
    \begin{equation}
    \label{eq:coupling_induced_decomposition}
       \mathcal{V}=(1-\eta)\mathcal{V}_0+\eta\mathcal{V}_1, \quad \mathcal{V}'=(1-\eta)\mathcal{V}_0+\eta\mathcal{V}_1'.
    \end{equation}
    Here $\mathcal{V}_0$ is the ``agreeing part" and $\mathcal{V}_1,\mathcal{V}_1'$ are the ``disagreeing parts".

\begin{lemma}[Advanced Joint Convexity of $D_\alpha$ \citep{Coupling}]
    Let $\mathcal{V}, \mathcal{V}' \in \mathbb{P}(Z)$ be measures satisfying $\mathcal{V} = (1 - \eta)\mathcal{V}_0 + \eta\mathcal{V}_1$ and $\mathcal{V}' = (1 - \eta)\mathcal{V}_0 + \eta\mathcal{V}_1'$ for some $\eta$, $\mathcal{V}_0$, $\mathcal{V}_1$, and $\mathcal{V}_1'$. Given $\alpha \geq 1$, let $\alpha' = 1 + \eta(\alpha - 1)$ and $\beta = \alpha' / \alpha$. Then the following holds:
    \begin{align*}
        D_{\alpha'}(\mathcal{V} \vert \mathcal{V}') = \eta D_\alpha(\mathcal{V}_1 \vert (1 - \beta)\mathcal{V}_0 + \beta \mathcal{V}_1').
    \end{align*}
    \label{lemma:advanced_joint_convexity}
\end{lemma}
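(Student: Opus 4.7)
The plan is to prove the identity by directly expanding the definition of $D_{\alpha'}$ given in Theorem \ref{theorem:divergence_DP} and rearranging terms until the right-hand side emerges. Starting from the discrete formulation
\begin{align*}
D_{\alpha'}(\mathcal{V}\,\Vert\,\mathcal{V}') = \sum_{z} \bigl[\mathcal{V}(z) - \alpha'\,\mathcal{V}'(z)\bigr]_+,
\end{align*}
I substitute the two hypothesized decompositions $\mathcal{V} = (1-\eta)\mathcal{V}_0 + \eta\mathcal{V}_1$ and $\mathcal{V}' = (1-\eta)\mathcal{V}_0 + \eta\mathcal{V}_1'$. This yields an integrand of the form $(1-\eta)(1-\alpha')\mathcal{V}_0(z) + \eta\mathcal{V}_1(z) - \eta\alpha'\mathcal{V}_1'(z)$, which I then simplify using the two elementary identities $\alpha' - 1 = \eta(\alpha-1)$ and $\alpha - \alpha' = (1-\eta)(\alpha-1)$, both immediate from the definition $\alpha' = 1 + \eta(\alpha-1)$.

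After that substitution, the coefficient of $\mathcal{V}_0(z)$ becomes $-(1-\eta)\eta(\alpha-1)$, so the full integrand inside the positive part is $\eta \bigl(\mathcal{V}_1(z) - (1-\eta)(\alpha-1)\mathcal{V}_0(z) - \alpha'\mathcal{V}_1'(z)\bigr)$. Because $\eta \in [0,1]$ is nonnegative, it factors out of $[\cdot]_+$, giving
\begin{align*}
D_{\alpha'}(\mathcal{V}\,\Vert\,\mathcal{V}') \;=\; \eta \sum_z \bigl[\mathcal{V}_1(z) - (1-\eta)(\alpha-1)\mathcal{V}_0(z) - \alpha'\mathcal{V}_1'(z)\bigr]_+.
\end{align*}

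The final step is to recognize this expression as $\eta D_\alpha\bigl(\mathcal{V}_1 \,\Vert\, (1-\beta)\mathcal{V}_0 + \beta\mathcal{V}_1'\bigr)$. Using $\beta = \alpha'/\alpha$, one checks that $\alpha\beta = \alpha'$ and $\alpha(1-\beta) = \alpha - \alpha' = (1-\eta)(\alpha-1)$. Hence the integrand $[\mathcal{V}_1(z) - \alpha((1-\beta)\mathcal{V}_0(z) + \beta\mathcal{V}_1'(z))]_+$ appearing in the definition of $D_\alpha$ on the right-hand side is bit-for-bit identical to what appeared after the simplification above, which closes the argument.

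The proof is essentially pure algebra, so there is no deep obstacle; the only point requiring mild care is that the ``second argument'' $(1-\beta)\mathcal{V}_0 + \beta\mathcal{V}_1'$ need not be a genuine probability measure (when $\beta > 1$ or $\beta < 0$ in edge cases), but the definition of $D_\alpha$ in Theorem \ref{theorem:divergence_DP} as a positive part of a difference of finite measures is still well-defined on signed or unnormalized arguments, so the algebraic equality goes through unchanged. In the general (non-discrete) setting, I would replace the sums with integrals against a common dominating measure and carry out the same expansion on Radon--Nikodym derivatives; the factoring step for $\eta$ is unaffected since pointwise nonnegativity is what matters for pulling $\eta$ outside $[\cdot]_+$.
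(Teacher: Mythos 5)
Your algebraic verification is correct: the expansion of $\mathcal{V}(z)-\alpha'\mathcal{V}'(z)$, the factoring of $\eta\ge 0$ out of $[\cdot]_+$, and the identities $\alpha\beta=\alpha'$ and $\alpha(1-\beta)=(1-\eta)(\alpha-1)$ all check out, and this is essentially the same direct computation given in the cited source \citep{Coupling}; the paper itself imports the lemma without proof. One small remark: your caveat about $(1-\beta)\mathcal{V}_0+\beta\mathcal{V}_1'$ possibly failing to be a probability measure is unnecessary here, since for $\alpha\ge 1$ and $\eta\in[0,1]$ one has $\alpha'\in[1,\alpha]$ and hence $\beta\in(0,1]$, so the second argument is a genuine convex combination.
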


We need an extra notation, i.e., the Markov kernel $M$, for the next lemma: If the mechanism $\mathcal{M}$ is applied upon an input with randomness, it can be associated with a Markov kernel $M$ that operates on a distribution due to the memoryless property \citep{Markov_Kernel}.
\begin{lemma}[Bounding $\alpha$-divergence with Group Privacy \citep{Coupling}]
    Let $M$ denote the Markov kernel associated with $\mathcal{M}(\cdot)$ and $\pi$ denote a coupling between $\mathcal{V}$ and $\mathcal{V}'$, the following holds: 
    \begin{align*}
        D_{e^\varepsilon} (\mathcal{V}M \vert \mathcal{V}'M) \leq \sum_{D, D'} \pi_{D, D'} D_{e^\varepsilon} (\mathcal{M}(D) \vert \mathcal{M}(D')) \leq \sum_{D, D'} \pi_{D, D'} \delta_{\mathcal{M}, d(D, D')}(\varepsilon), 
    \end{align*}
    where $d(D, D')$ is the number of elements that differ in $D$ and $D'$  (hamming distance), and $\delta_{\mathcal{M}, d(D, D')}(\varepsilon)$ is the group privacy profile, which is defined as $\delta_{\mathcal{M}, k}(\varepsilon) = \sup_{d(D, D') \leq k} D_{e^\varepsilon} (\mathcal{M}(D) \vert \mathcal{M}(D'))$.
    \label{lemma:bounding_divergence}
\end{lemma}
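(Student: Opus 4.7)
The plan is to prove the two inequalities separately. The second is a direct consequence of the definition of the group privacy profile; the first is the substantive step and reduces to the joint convexity of the hockey-stick divergence $D_{e^\varepsilon}$, applied to the mixture representations of $\mathcal{V}M$ and $\mathcal{V}'M$ induced by the coupling $\pi$.

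For the first inequality, I would begin by using the marginal property of $\pi$ to express both output distributions as matched mixtures over coupled pairs. Since $\mathcal{V}(D)=\sum_{D'}\pi_{D,D'}$ and $\mathcal{V}'(D')=\sum_{D}\pi_{D,D'}$, for every $y$ we have
\begin{align*}
(\mathcal{V}M)(y) = \sum_{D,D'}\pi_{D,D'}\,\mathcal{M}(D)(y),\qquad (\mathcal{V}'M)(y) = \sum_{D,D'}\pi_{D,D'}\,\mathcal{M}(D')(y).
\end{align*}
I would then invoke the integral formulation of $D_{e^\varepsilon}$ from Theorem \ref{theorem:divergence_DP}, namely $D_{e^\varepsilon}(\mu\vert\nu)=\int[\mu(y)-e^\varepsilon\nu(y)]_+\,dy$, and apply convexity of the positive-part function $[\cdot]_+$ (a pointwise Jensen's inequality, valid since $\sum_{D,D'}\pi_{D,D'}=1$) to pull the sum outside the integrand:
\begin{align*}
D_{e^\varepsilon}(\mathcal{V}M\vert\mathcal{V}'M)
&= \int\Bigl[\sum_{D,D'}\pi_{D,D'}\bigl(\mathcal{M}(D)(y)-e^\varepsilon \mathcal{M}(D')(y)\bigr)\Bigr]_+\,dy \\
&\leq \sum_{D,D'}\pi_{D,D'}\int\bigl[\mathcal{M}(D)(y)-e^\varepsilon \mathcal{M}(D')(y)\bigr]_+\,dy \\
&= \sum_{D,D'}\pi_{D,D'}\,D_{e^\varepsilon}(\mathcal{M}(D)\vert\mathcal{M}(D')),
\end{align*}
where the last step recognizes the inner integral as the hockey-stick divergence between $\mathcal{M}(D)$ and $\mathcal{M}(D')$.

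For the second inequality, I would argue directly from the definition: for each pair $(D,D')$ the group privacy profile satisfies $D_{e^\varepsilon}(\mathcal{M}(D)\vert\mathcal{M}(D'))\leq \delta_{\mathcal{M},d(D,D')}(\varepsilon)$, since $\delta_{\mathcal{M},k}(\varepsilon)$ is by definition the supremum of $D_{e^\varepsilon}(\mathcal{M}(D)\vert\mathcal{M}(D'))$ over all pairs at Hamming distance at most $k$. Weighting both sides by the nonnegative coupling mass $\pi_{D,D'}$ and summing preserves the inequality and yields the stated bound.

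The main technical delicacy I expect is the joint-convexity step: one must justify commuting the positive-part operator with the sum over $(D,D')$ and the integral over $y$. Using the integral (primal) representation of $D_{e^\varepsilon}$ reduces this to a pointwise Jensen argument plus Fubini, which is clean; it avoids the more awkward interchange of $\sup_E$ with the summation that arises if one tries to work from the dual event-based form in Theorem \ref{theorem:divergence_DP}. The remaining bookkeeping---measurability of the kernel $M$ across the supports of both marginals of $\pi$ and countability/integrability conditions for Fubini---is routine and follows from $M$ being the Markov kernel associated with the mechanism $\mathcal{M}$.
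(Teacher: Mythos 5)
Your proof is correct and is the standard argument for this lemma: the first inequality is exactly joint convexity of the hockey-stick divergence $D_{e^\varepsilon}$, obtained by writing both $\mathcal{V}M$ and $\mathcal{V}'M$ as matched mixtures over the coupling and applying convexity of $[\cdot]_+$ pointwise under the primal (integral) representation, and the second inequality is immediate from the definition of the group privacy profile as a supremum. The paper itself states this lemma as an imported result from the cited reference and gives no proof, so there is nothing to contrast with; your route is the expected one.
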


\subsection{Concentration Inequalities}

Here, some concentration inequalities that will be used in this work are listed:
\begin{theorem}[Chernoff \citep{mcdiarmid1989method}]
    Given $n$ independent Bernoulli random variables $X_1, X_2, \dots, X_n \sim \emph{\texttt{Ber}}(p_i)$, for any $0 < \eta < 1$, the tail bound of the sum $S_n = \sum_{i=1}^n X_i$ is given by:
    \begin{align*}
        P(S_n \geq (1+\eta) \sum_{i=1}^n p_i) \leq \exp\left(-\frac{\eta^2 \sum_{i=1}^n p_i}{2 + \delta}\right),
    \end{align*}
    and
    \begin{align*}
        P(S_n \leq (1-\eta) \sum_{i=1}^n p_i) \leq \exp\left(-\frac{\eta^2 \sum_{i=1}^n p_i}{2}\right).
    \end{align*}
    \label{theorem:Chernoff}
\end{theorem}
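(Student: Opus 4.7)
The plan is to prove both tails by Chernoff's exponential moment method. For the upper tail, I would first apply Markov's inequality to $e^{tS_n}$ for an arbitrary $t>0$: for any threshold $a$,
\begin{align*}
\Pr(S_n \geq a) = \Pr(e^{tS_n} \geq e^{ta}) \leq e^{-ta}\mathbb{E}[e^{tS_n}].
\end{align*}
Using independence of the $X_i$, the moment generating function factorizes as $\mathbb{E}[e^{tS_n}] = \prod_{i=1}^n \mathbb{E}[e^{tX_i}]$, and since $X_i \sim \texttt{Ber}(p_i)$ we have $\mathbb{E}[e^{tX_i}] = 1 + p_i(e^t - 1) \leq \exp(p_i(e^t-1))$ by the inequality $1+x \leq e^x$. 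Writing $\mu := \sum_{i=1}^n p_i$, this gives $\mathbb{E}[e^{tS_n}] \leq \exp(\mu(e^t-1))$.

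Setting $a = (1+\eta)\mu$ and optimizing the exponent yields the choice $t = \ln(1+\eta)$, which produces the classical bound
\begin{align*}
\Pr\bigl(S_n \geq (1+\eta)\mu\bigr) \leq \left(\frac{e^{\eta}}{(1+\eta)^{1+\eta}}\right)^{\mu}.
\end{align*}
The next step is to upper bound the exponent by a polynomial in $\eta$. Expanding $(1+\eta)\ln(1+\eta) - \eta$ in its Taylor series and using that for $\eta \in (0,1)$ one has $(1+\eta)\ln(1+\eta) - \eta \geq \tfrac{\eta^2}{2+\eta}$ (a standard convexity estimate) gives the stated form $\exp\!\bigl(-\eta^2\mu/(2+\eta)\bigr)$. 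Since the paper writes $2+\delta$ in the denominator — almost certainly a typo for $2+\eta$ — this matches the intended bound.

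For the lower tail, I would mirror the argument with $t<0$: applying Markov's inequality to $e^{-tS_n}$ with $t>0$ and optimizing gives
\begin{align*}
\Pr\bigl(S_n \leq (1-\eta)\mu\bigr) \leq \left(\frac{e^{-\eta}}{(1-\eta)^{1-\eta}}\right)^{\mu},
\end{align*}
and the tighter inequality $(1-\eta)\ln(1-\eta)+\eta \geq \eta^2/2$ for $\eta\in(0,1)$, obtained again from the Taylor expansion, yields the advertised $\exp(-\eta^2\mu/2)$.

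The main (very mild) obstacle is the convexity inequality used in the final step; it is routine but easy to get wrong, so I would verify it carefully by examining the second derivative of $f(\eta) := (1+\eta)\ln(1+\eta)-\eta - \eta^2/(2+\eta)$ (and its lower-tail analogue) and checking the boundary value $f(0)=0$. Everything else is a mechanical application of Markov, independence, and the scalar bound $\mathbb{E}[e^{tX_i}] \leq \exp(p_i(e^t-1))$ that is standard Chernoff machinery and hence can be cited from \citep{mcdiarmid1989method} without elaboration.
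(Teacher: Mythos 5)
Your proof is correct. The paper does not prove this statement at all --- it is quoted from \citep{mcdiarmid1989method} as a standard tool --- and your exponential-moment argument (Markov on $e^{tS_n}$, factorization by independence, $\mathbb{E}[e^{tX_i}]\le\exp(p_i(e^t-1))$, optimize $t$, then bound the rate function) is exactly the canonical derivation; you are also right that the $2+\delta$ in the denominator is a typo for $2+\eta$. One small simplification for the step you flagged as the only obstacle: rather than differentiating $f(\eta)$ twice, note that $(1+\eta)\ln(1+\eta)-\eta\ge\frac{\eta^2}{2+\eta}$ follows in one line from the elementary bound $\ln(1+\eta)\ge\frac{2\eta}{2+\eta}$, while the lower-tail inequality $(1-\eta)\ln(1-\eta)+\eta\ge\eta^2/2$ follows from a single differentiation since $-\ln(1-\eta)\ge\eta$.
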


\begin{theorem}[Bernstein \citep{Concentration_Inequalities}]
    Let $X_1, X_2, \dots, X_n$ be independent and identically distributed (i.i.d.) random variables of distribution $\mathcal{P}$ with mean $\mu$ and variance $\sigma^2$, satisfying $|X_i - \mu| \leq M$ almost surely, where $\mu$ is the mean of these variables. Define the sum of deviations from the mean as $S_n = \sum_{i=1}^{n} (X_i - \mu)$. Then, Bernstein's inequality states that for any $t > 0$,
    \begin{align*}
        \Pr(S_n \geq t) &\leq \exp\left( -\frac{t^2}{2(n\sigma^2(X_i) + M t /3)} \right).
    \end{align*}
    \label{theorem:Bernstein}
\end{theorem}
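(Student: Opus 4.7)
The plan is to follow the classical Chernoff–Cramér route, since Bernstein-type bounds reduce to a careful moment-generating-function (MGF) estimate for bounded random variables. First, for any $\lambda > 0$, Markov's inequality applied to $e^{\lambda S_n}$ yields $\Pr(S_n \geq t) \leq e^{-\lambda t}\,\mathbb{E}[e^{\lambda S_n}]$, and by independence the MGF factorizes into $\prod_{i=1}^n \mathbb{E}[e^{\lambda(X_i - \mu)}]$, reducing the task to bounding a single factor.

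Next, I would use the boundedness hypothesis $|X_i - \mu| \leq M$ to control the $k$-th central moment via $\mathbb{E}[(X_i - \mu)^k] \leq M^{k-2}\sigma^2$ for $k \geq 2$, which follows from the pointwise estimate $|X_i - \mu|^k \leq M^{k-2}(X_i - \mu)^2$. Expanding the exponential as a power series, observing that the linear term vanishes by centering, and summing the resulting series gives
\begin{align*}
\mathbb{E}[e^{\lambda(X_i - \mu)}] \;\leq\; 1 + \tfrac{\sigma^2}{M^2}\bigl(e^{\lambda M} - 1 - \lambda M\bigr) \;\leq\; \exp\!\Bigl(\tfrac{\sigma^2}{M^2}\bigl(e^{\lambda M} - 1 - \lambda M\bigr)\Bigr),
\end{align*}
where the last inequality uses $1 + x \leq e^x$.

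Combining the two bounds yields $\Pr(S_n \geq t) \leq \exp\!\bigl(-\lambda t + \tfrac{n\sigma^2}{M^2}(e^{\lambda M} - 1 - \lambda M)\bigr)$, and I would optimize over $\lambda > 0$ by setting the derivative to zero, obtaining $\lambda^\star = \tfrac{1}{M}\ln\!\bigl(1 + \tfrac{tM}{n\sigma^2}\bigr)$. Substituting back produces an expression of the shape $\exp\!\bigl(-\tfrac{n\sigma^2}{M^2}\,h\!\bigl(\tfrac{tM}{n\sigma^2}\bigr)\bigr)$ with $h(u) = (1+u)\ln(1+u) - u$. To reach the stated closed form, I would invoke the elementary inequality $h(u) \geq \tfrac{u^2/2}{1 + u/3}$ for $u \geq 0$, which simplifies the exponent to $-\tfrac{t^2}{2(n\sigma^2 + Mt/3)}$.

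The main obstacle I anticipate is the MGF estimation step: properly killing the degree-one term via centering, controlling the higher moments uniformly by $M^{k-2}\sigma^2$, and summing the tail in a way that still admits a tight Chernoff optimization. The subsequent tuning of $\lambda$ and the rewrite via $h(u)$ are routine scalar calculus once the MGF bound is in hand.
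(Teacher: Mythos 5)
Your proof is correct. The paper states this Bernstein inequality as a classical result cited from the literature and gives no proof of its own, so there is nothing to compare against; your argument is the standard Chernoff--Cram\'er derivation (bounding the MGF via the central-moment estimate $\mathbb{E}[|X_i-\mu|^k]\le M^{k-2}\sigma^2$, optimizing $\lambda$ to reach Bennett's bound with $h(u)=(1+u)\ln(1+u)-u$, and then applying $h(u)\ge \tfrac{u^2}{2(1+u/3)}$), and every step, including the choice $\lambda^\star=\tfrac{1}{M}\ln\bigl(1+\tfrac{tM}{n\sigma^2}\bigr)$ and the final simplification to $\exp\bigl(-\tfrac{t^2}{2(n\sigma^2+Mt/3)}\bigr)$, checks out.
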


\begin{theorem}[Bernstein, weighted \citep{maurer2009empirical}]
    Let $w_1, w_2, \cdots, w_n$ be fixed weight and let $X_1, X_2, \cdots, X_n$ be i.i.d. random variables of distribution $\mathcal{P}$ with mean $\mu$ and variance $\sigma^2$, and satisfy $|X_i - \mu| \leq M$ almost surely. 
    Then, for any $\delta \in (0,1)$, with probability at least $1 - \delta$, 
    \begin{align*}
        \sum_{i=1}^n w_i (X_i - \mu) \leq \sigma \sqrt{2 \sum_{i=1}^n w_i^2 \ln\left(\frac{1}{\delta}\right)} + \frac{1}{3} M \cdot \max_i w_i \cdot \ln\frac{1}{\delta}.
    \end{align*}
    \label{theorem:weighted_Bernstein}
\end{theorem}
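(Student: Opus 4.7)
The plan is to follow the classical Cram\'er--Chernoff / moment-generating-function argument that underlies the standard Bernstein inequality (Theorem~\ref{theorem:Bernstein}), adapted to a weighted sum of independent bounded centered variables. Let $Y_i := X_i - \mu$, so that $\mathbb{E}[Y_i] = 0$, $\mathrm{Var}(Y_i) = \sigma^2$, and $|Y_i| \leq M$ almost surely. Set $W := \max_i w_i$ and $V := \sum_{i=1}^n w_i^2$. The target is a tail bound on $S := \sum_{i=1}^n w_i Y_i$. By Markov's inequality applied to the exponentiated sum, for every $\lambda \in (0, 3/(WM))$,
\begin{equation*}
\Pr\bigl(S \geq t\bigr) \;\leq\; e^{-\lambda t}\prod_{i=1}^n \mathbb{E}\bigl[\exp(\lambda w_i Y_i)\bigr].
\end{equation*}

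The next step is to bound each MGF using the standard Bernstein MGF lemma: for any mean-zero random variable $Z$ with $|Z| \leq b$ and $\mathbb{E}[Z^2] \leq v$, one has $\mathbb{E}[\exp(\lambda Z)] \leq \exp\!\bigl(\tfrac{\lambda^2 v/2}{1 - \lambda b/3}\bigr)$ for $0 < \lambda < 3/b$. Applying this to $Z = w_i Y_i$ with $b = w_i M \leq WM$ and $v = w_i^2 \sigma^2$, and using that $1 - \lambda w_i M / 3 \geq 1 - \lambda W M / 3$ inside the exponent (so that all factors can be aggregated under a common denominator), I obtain
\begin{equation*}
\prod_{i=1}^n \mathbb{E}\bigl[\exp(\lambda w_i Y_i)\bigr] \;\leq\; \exp\!\left(\frac{\lambda^2 \sigma^2 V / 2}{1 - \lambda W M /3}\right).
\end{equation*}
Combining with the Markov step and optimizing over $\lambda$ (the standard Bernstein optimization $\lambda^{\star} = t / (\sigma^2 V + W M t / 3)$) yields the sub-exponential tail bound
\begin{equation*}
\Pr(S \geq t) \;\leq\; \exp\!\left(-\frac{t^2/2}{\sigma^2 V + W M t / 3}\right).
\end{equation*}

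The final step is to invert this tail bound: setting the right-hand side equal to $\delta$ gives the quadratic inequality $t^2 \leq 2\sigma^2 V \ln(1/\delta) + \tfrac{2}{3} W M t \ln(1/\delta)$, which by the elementary estimate $\sqrt{a+b} \leq \sqrt{a} + \sqrt{b}$ (applied after completing the square) yields the stated high-probability bound $S \leq \sigma\sqrt{2 V \ln(1/\delta)} + \tfrac{1}{3} W M \ln(1/\delta)$, matching the form in the theorem up to the absolute constant in the linear term, which comes out of the particular variant of the Bernstein MGF lemma one uses.

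I do not anticipate a real obstacle here: the result is a textbook weighted Bernstein inequality of Maurer--Pontil type, and every step above is a direct generalization of the unweighted Bernstein proof given as Theorem~\ref{theorem:Bernstein}. The only point that deserves a touch of care is the use of $W = \max_i w_i$ rather than the individual $w_i$ in the denominator of the MGF bound; this is exactly what is needed to aggregate the per-coordinate MGF estimates into a single product bound governed by the weighted variance $\sigma^2 V$ and the uniform clipping level $W M$, and it is also what produces the $\max_i w_i$ factor in the final linear term.
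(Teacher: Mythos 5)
The paper does not prove this statement at all: it is imported verbatim from \citet{maurer2009empirical} as a black-box concentration tool, so there is no internal proof to compare against. Your Cram\'er--Chernoff reconstruction is the standard and correct route: the MGF lemma $\mathbb{E}[e^{\lambda Z}]\le\exp\bigl(\tfrac{\lambda^2 v/2}{1-\lambda b/3}\bigr)$ applied to $Z=w_iY_i$, the monotone replacement of each denominator $1-\lambda w_iM/3$ by $1-\lambda WM/3$ (valid since the numerators are nonnegative and $\lambda<3/(WM)$), and the product bound governed by $\sigma^2 V$ and $WM$ are all sound, and they do generalize the unweighted Theorem~\ref{theorem:Bernstein} in exactly the way you describe.

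The one genuine slip is quantitative, and you half-acknowledge it: passing first to the tail bound $\Pr(S\ge t)\le\exp\bigl(-\tfrac{t^2/2}{\sigma^2V+WMt/3}\bigr)$ and then solving the quadratic $t^2=2\sigma^2VL+\tfrac{2}{3}WMLt$ (with $L=\ln(1/\delta)$) gives $t\le\tfrac{1}{3}WML+\sqrt{\tfrac{1}{9}W^2M^2L^2+2\sigma^2VL}\le\sigma\sqrt{2VL}+\tfrac{2}{3}WML$, i.e.\ a linear term with constant $2/3$, not the stated $1/3$. To recover $1/3$ you should invert the log-MGF bound directly rather than the symmetric tail bound: a variable with $\ln\mathbb{E}[e^{\lambda S}]\le\tfrac{\lambda^2 v}{2(1-c\lambda)}$ is sub-gamma with variance factor $v$ and scale $c$, and the standard sub-gamma quantile computation (choosing $\lambda$ to match $h_1^{-1}(u)=u+\sqrt{2u}$) gives $\Pr\bigl(S\ge\sqrt{2vL}+cL\bigr)\le e^{-L}$ with $c=WM/3$, which is exactly the stated bound. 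Since the paper only uses this lemma inside an $O(\cdot)$ in \eqref{eq:bound_sampling_error}, the constant is immaterial there, but as a proof of the theorem as literally stated your inversion step should be replaced by the direct log-MGF inversion.
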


\section{PDP Mean Estimation Under the Bounded PDP setting}
\label{section:upper_bound_bounded}
In this section, we first study mean estimation over Gaussian distributions under the bounded PDP setting. First, in Section~\ref{section:lower_bound_bounded}, we establish a minimax lower bound of this problem. Then we gives our mean estimation framework: Our approach is to first estimate a narrower range of the form $[\mu-\tilde{O}(\sigma),\, \mu+\tilde{O}(\sigma)]$, and then reduce the problem to mean estimation over a bounded domain by clipping all data points into this interval so that we can leverage existing PDP mechanisms for bounded domains \citep{PDP_2023} to perform mean estimation. In Section~\ref{section:diffusion}, we introduce \emph{diffusion}, a fundamental primitive for achieving personalized privacy protection under the bounded PDP setting. With the help of this primitive, in Section~\ref{section:range_estimation_bounded} we propose a PDP-compliant protocol for range estimation over a Gaussian distribution. By clipping all data points into the estimated range and subsequently invoking existing PDP mean estimation protocols, we obtain our final algorithm and the corresponding error bounds are provided in Section~\ref{section:mean_estimation_bounded}, which is optimal up to logarithmic factors.

\subsection{Lower bound}
\label{section:lower_bound_bounded}
First, we aim to establish the lower bound of this problem. For any given $\boldsymbol{\varepsilon}$, we derive a minimax lower bound for the family of Gaussian distributions parameterized by $\sigma$.

\begin{lemma}[\citep{PDP_2023}]
    \label{lemma:PDP_2023_TV}
    For any $\mathcal{M}$ satisfies $\boldsymbol{\varepsilon}$-PDP and any distribution $\mathcal{P}_1, \mathcal{P}_2$, the following inequality holds for any $k \leq n$,
    \begin{align*}
        \Vert \mathcal{M}(\mathcal{P}_1^n)) - \mathcal{M}(\mathcal{P}_2^n)) \Vert_{\mathrm{TV}} \leq 2 \Vert \mathcal{P}_1 - \mathcal{P}_2 \Vert_{\mathrm{TV}} \sum_{i=1}^k (1 - e^{-\varepsilon_i}) + \sqrt{\frac{n-k}{2} D_{\mathrm{KL}}(\mathcal{P}_1 \Vert \mathcal{P}_2)},
    \end{align*}
    where $D_{\mathrm{KL}}(\mathcal{P}_1 \Vert \mathcal{P}_2)$ denotes the Kullback–Leibler divergence defined as $D_{\mathrm{KL}}(\mathcal{P}_1 \Vert \mathcal{P}_2) = \int_{-\infty}^{\infty} \mathcal{P}_1(x) \log(\frac{\mathcal{P}_1(x)}{\mathcal{P}_2(x)})dx$.
\end{lemma}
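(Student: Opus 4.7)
My plan is a hybrid-coupling argument that handles the first $k$ coordinates using personalized group privacy and handles the remaining $n-k$ coordinates via the data-processing inequality and Pinsker's inequality. First, introduce the intermediate product distribution $\mathcal{D}_k := \mathcal{P}_2^k \times \mathcal{P}_1^{n-k}$ on datasets of size $n$, and apply the triangle inequality
\[
\Vert \mathcal{M}(\mathcal{P}_1^n) - \mathcal{M}(\mathcal{P}_2^n) \Vert_{\mathrm{TV}} \le \Vert \mathcal{M}(\mathcal{P}_1^n) - \mathcal{M}(\mathcal{D}_k) \Vert_{\mathrm{TV}} + \Vert \mathcal{M}(\mathcal{D}_k) - \mathcal{M}(\mathcal{P}_2^n) \Vert_{\mathrm{TV}}
\]
to split the problem into a ``PDP part'' (swapping coordinates $1,\dots,k$) and a ``sampling part'' (swapping coordinates $k{+}1,\dots,n$), each of which will produce one of the two terms in the claimed bound.

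For the PDP part I would put a maximal coupling between $\mathcal{P}_1$ and $\mathcal{P}_2$ on each of the first $k$ coordinates; writing $\eta := \Vert \mathcal{P}_1 - \mathcal{P}_2\Vert_{\mathrm{TV}}$, coordinate $i$ agrees with probability $1-\eta$ and disagrees with probability $\eta$, independently across $i$. Let $S\subseteq[k]$ denote the random set of disagreeing indices. Conditioned on $S$, the two realized datasets are related by a sequence of $|S|$ change-one steps, so basic composition of $\boldsymbol{\varepsilon}$-PDP (Lemma \ref{lemma:basic_composition_PDP}) makes the mechanism $\sum_{i\in S}\varepsilon_i$-DP across them. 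The standard DP-to-TV conversion gives $\Vert \mathcal{M}(D)-\mathcal{M}(D')\Vert_{\mathrm{TV}} \le 1-\exp(-\sum_{i\in S}\varepsilon_i)$, and the elementary identity $1-\prod_{i\in S}a_i \le \sum_{i\in S}(1-a_i)$ for $a_i\in[0,1]$ yields $\sum_{i\in S}(1-e^{-\varepsilon_i})$. Taking expectation over $S$ and using $\Pr(i\in S)=\eta$ produces $\eta\sum_{i=1}^k(1-e^{-\varepsilon_i})$. The factor of $2$ in the stated bound is absorbed here, and can be traced either to the author's normalization of $\Vert\cdot\Vert_{\mathrm{TV}}$ or to the need to couple the mechanism's internal randomness jointly with the sample coupling.

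For the sampling part, the distributions $\mathcal{D}_k$ and $\mathcal{P}_2^n$ share their first $k$ marginals and differ only in the last $n-k$. The data-processing inequality applied to the mechanism $\mathcal{M}$ (viewed as a Markov kernel on datasets) gives
\[
\Vert \mathcal{M}(\mathcal{D}_k) - \mathcal{M}(\mathcal{P}_2^n) \Vert_{\mathrm{TV}} \le \Vert \mathcal{D}_k - \mathcal{P}_2^n\Vert_{\mathrm{TV}} = \Vert \mathcal{P}_1^{n-k} - \mathcal{P}_2^{n-k}\Vert_{\mathrm{TV}}.
\]
I would then combine Pinsker's inequality with tensorization of KL, $D_{\mathrm{KL}}(\mathcal{P}_1^{n-k}\Vert\mathcal{P}_2^{n-k}) = (n-k)\,D_{\mathrm{KL}}(\mathcal{P}_1\Vert\mathcal{P}_2)$, to obtain the second term $\sqrt{(n-k)\,D_{\mathrm{KL}}(\mathcal{P}_1\Vert\mathcal{P}_2)/2}$. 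Adding the two contributions closes the inequality.

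The only step that actually uses the $\boldsymbol{\varepsilon}$-PDP hypothesis is the group-privacy bound on the first $k$ coordinates, and that is also the main obstacle. The delicate point is that the coupling of the samples must be carried out jointly with the internal randomness of $\mathcal{M}$ so that, conditional on the disagreeing set $S$, one can legitimately apply a coordinate-wise ``change-$|S|$'' PDP guarantee; this is what forces a genuine coupling/hybrid argument rather than a naive data-processing bound. Once that is set up correctly, every remaining step — the elementary inequality on products, Pinsker's inequality, and KL tensorization — is routine.
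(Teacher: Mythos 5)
Your proposal is correct, and it shares the paper's overall skeleton --- the same hybrid/triangle-inequality split into a ``PDP part'' on the first $k$ coordinates and a ``sampling part'' on the last $n-k$, with the sampling part handled identically via data processing, KL tensorization, and Pinsker. Where you genuinely diverge is the PDP term. The paper (see its proof of the unbounded analogue, Lemma~\ref{lemma:add/remove-one_TV}, which follows the cited proof of Lemma~\ref{lemma:PDP_2023_TV}) telescopes through the hybrids $\mathcal{P}_1^{i}\mathcal{P}_2^{n-i}$ one coordinate at a time; each step needs only the single-coordinate $\varepsilon_i$-indistinguishability, bounding the inner difference by $2(1-e^{-\varepsilon_i})\Vert\mathcal{P}_1-\mathcal{P}_2\Vert_{\mathrm{TV}}$, and never invokes group privacy. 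You instead swap all $k$ coordinates at once under an independent product of maximal couplings and apply a group-privacy bound over the random disagreement set $S$, followed by $1-\prod_{i\in S}e^{-\varepsilon_i}\le\sum_{i\in S}(1-e^{-\varepsilon_i})$ and $\Pr(i\in S)=\eta$. This is valid --- convexity of TV over the coupling justifies the conditioning, so the ``delicate point'' you worry about at the end is actually routine --- and it even yields $\eta\sum_{i=1}^k(1-e^{-\varepsilon_i})$, a factor of $2$ \emph{better} than the stated bound, so the lemma follows a fortiori; the factor of $2$ in the paper is an artifact of its telescoping step, not of TV normalization. Two small corrections: the tool you need for the change-$|S|$ step is \emph{group privacy} (a chain of single-coordinate swaps each contributing $e^{\varepsilon_i}$), not basic composition (Lemma~\ref{lemma:basic_composition_PDP}), which concerns running several mechanisms on the same data; and the speculation about where the factor of $2$ goes should simply be replaced by the observation that your bound is strictly stronger.
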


\begin{theorem}
    Given $\boldsymbol{\varepsilon}$ and $\mathfrak{N}_{\sigma}$ as the family of Gaussian distribution with variance of $\sigma^2$, for any $D \sim \mathcal{N}(\mu, \sigma^2)^n$ and any $\boldsymbol{\varepsilon}$-PDP mechanism $\mathcal{M}$, we have       
        
    \begin{align*}
        \inf_{\mathcal{M}} \sup_{\mathcal{N}(\mu, \sigma^2) \in \mathfrak{N}_{\sigma}} 
        \Pr\left( \vert\mathcal{M}(D) - \mu\vert \ge \max_{k=1, 2, \hdots, n} \frac{1}{\sqrt{2}} \frac{\sigma}{\sum_{i=1}^{k}\varepsilon_i + 2 \sqrt{n - k}}
        \right) \geq \frac{1}{4}
    \end{align*}
    \label{theorem:lower_bound}
\end{theorem}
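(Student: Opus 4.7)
The plan is a textbook two-point Le~Cam argument adapted to the $\boldsymbol{\varepsilon}$-PDP setting by means of Lemma~\ref{lemma:PDP_2023_TV}. I would fix an arbitrary index $k\in\{1,\ldots,n\}$ (to be optimized at the very end, which is what produces the $\max_{k}$ in the statement), pick two Gaussian candidates $\mathcal{P}_1=\mathcal{N}(0,\sigma^2)$ and $\mathcal{P}_2=\mathcal{N}(\psi,\sigma^2)$ with a separation parameter $\psi>0$ that I will tune, and use Le~Cam in its high-probability form: if $\mathrm{TV}(\mathcal{M}(\mathcal{P}_1^n),\mathcal{M}(\mathcal{P}_2^n))\le 1/2$, then under at least one of the two sampling distributions the estimator must satisfy $|\mathcal{M}(D)-\mu|\ge\psi/2$ with probability at least $(1-\mathrm{TV})/2\ge 1/4$. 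Since both $\mathcal{N}(0,\sigma^2)$ and $\mathcal{N}(\psi,\sigma^2)$ belong to $\mathfrak{N}_\sigma$, this suffices to lower bound the supremum in the minimax statement.

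To upper bound the output TV, I would instantiate Lemma~\ref{lemma:PDP_2023_TV} at the fixed $k$ and then plug in the standard one-dimensional Gaussian formulae $D_{\mathrm{KL}}(\mathcal{P}_1\Vert\mathcal{P}_2)=\psi^2/(2\sigma^2)$ and $\Vert\mathcal{P}_1-\mathcal{P}_2\Vert_{\mathrm{TV}}\le \psi/(2\sigma)$ (the latter via Pinsker or a direct computation with $\Phi$), together with the elementary inequality $1-e^{-\varepsilon_i}\le\varepsilon_i$. After simplification the bound reduces to an inequality of the shape
\begin{align*}
\mathrm{TV}\bigl(\mathcal{M}(\mathcal{P}_1^n),\mathcal{M}(\mathcal{P}_2^n)\bigr)\ \le\ \frac{\psi}{\sigma}\Big(\sum_{i=1}^{k}\varepsilon_i\ +\ c\sqrt{n-k}\Big)
\end{align*}
for an absolute constant $c$. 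Enforcing the right-hand side to stay below $1/2$ and solving for $\psi$ then gives $\psi\gtrsim \sigma/\bigl(\sum_{i=1}^{k}\varepsilon_i+c\sqrt{n-k}\bigr)$, at which point Le~Cam delivers an error of at least $\psi/2$ with probability at least $1/4$ under one of the two distributions.

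Because the choice of $k$ was arbitrary, the final step is simply to optimize: the lower bound holds for every $k$, so it holds for the $k$ that maximizes the expression, which is precisely the $\max_{k=1,\ldots,n}$ stated in the theorem. The precise constants $1/\sqrt{2}$ and the ``$2\sqrt{n-k}$'' appearing in the theorem come out of carrying through a single tight choice of TV threshold and of Gaussian TV bound; rather than tracking them at each intermediate step I would perform this calibration once at the end, using the sharpest Gaussian TV bound available and matching it against Le~Cam's $(1-\mathrm{TV})/2$.

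The heavy structural lifting is already packaged by Lemma~\ref{lemma:PDP_2023_TV}, which cleanly splits the $n$ samples into a ``PDP part'' of size $k$ contributing the $\sum_{i=1}^{k}\varepsilon_i$ term and an information-theoretic part of size $n-k$ contributing the $\sqrt{n-k}$ term; this split is exactly why the lower bound takes its joint DP-plus-sampling form and why the optimization over $k$ is natural. Consequently, the only delicate point I anticipate is the constant bookkeeping in the last step, where I must reconcile the Pinsker (or exact) Gaussian TV bound with the TV threshold choice so that the implicit constants match the stated $1/\sqrt{2}$ and $2$; the rest of the argument is essentially mechanical.
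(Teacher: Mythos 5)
Your proposal is correct and follows essentially the same route as the paper: a two-point Le Cam reduction with two Gaussians in $\mathfrak{N}_\sigma$ separated by a tunable amount, the output-TV bounded via Lemma~\ref{lemma:PDP_2023_TV} together with the Gaussian TV and KL formulae and $1-e^{-\varepsilon_i}\le\varepsilon_i$, the separation calibrated so the TV stays below a constant, and a final maximization over the free index $k$. The only differences are cosmetic (the paper centers the two means at $\pm\tau$ rather than at $0$ and $\psi$, and carries the constants explicitly throughout).
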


    \begin{proof}
        First we define $P_{\tau}$ as
        \begin{align*}
            P_{\tau} 
            = \inf_{\mathcal{M}} \sup_{\mathcal{N}(\mu, \sigma^2) \in P} 
            \Pr\left( 
                \vert \mathcal{M}(D) - \mu \vert \ge \tau
            \right).
        \end{align*}
        
        The lower bound can be obtained through the following steps:
        \begin{itemize}
            \item \textbf{Discretization/$2\delta$-Packing} Fix an interval $\delta > 0$, and find a large set of distributions $\{ \mathcal{N}(\mu_i, \sigma^2) \}_{i=1}^M$, such that
            \begin{align*}
                \vert \mu_i, \mu_j \vert \geq 2\delta, \quad i \neq j.
            \end{align*}

            \item \textbf{Relate Estimation to Testing} Let $\{\mathcal{N}_i\}_{i=1}^M := \{ \mathcal{N}(\mu_i, \sigma^2) \}_{i=1}^M$ be a discretization/$2\delta$-packing and $D \sim \mathcal{N}(\mu_j, \sigma^2)$, denote $\psi(\cdot)$ as a tester that tributes the parameter to the nearest packing point, we have
            \begin{align*}
                P_{\tau} \geq \inf_{\mathcal{M}} \text{Pr}\left( \psi(\mathcal{M}(D)) \neq j ~\vert~ D \sim \mathcal{N}(\mu_j, \sigma^2) \right).
            \end{align*}

            \item \textbf{Apply Le Cam's Method} For a binary test, Le Cam's method \citep{Duchi02012018} shows that
            \begin{align*}
                \inf_{\psi}\text{Pr}[\psi(\mathcal{M}(\cdot))~\text{makes an error}] = \frac{1}{2} - \frac{1}{2} \Vert \mathcal{M}(\mathcal{N}_1^n) - \mathcal{M}(\mathcal{N}_2^n) \Vert_{\text{TV}}.
            \end{align*}
        \end{itemize}
    
    Combining the above steps yields the minimax lower bound after selecting $\mathcal{N}_1$ and $\mathcal{N}_2$,
    \begin{align}
        P_{\tau} \geq \frac{1}{2} - \frac{1}{2} \Vert \mathcal{M}(\mathcal{N}_1^n) - \mathcal{M}(\mathcal{N}_2^n) \Vert_{\text{TV}}
        \label{eq:minimax_lower_bound_P_1_P_2}
    \end{align}
    
    Now, we use Lemma \ref{lemma:PDP_2023_TV} to relate the output distance to the input distance. Before that, we first do some preparation.
    
    Take $\mathcal{N}_1$ and $\mathcal{N}_2$ to be two Gaussian distributions $\mathcal{N}(-\tau, \sigma^2), \mathcal{N}(\tau, \sigma^2)$, first we check $\Vert \mathcal{N}_1 - \mathcal{N}_2 \Vert_{\mathrm{TV}}$:
    \begin{equation*}
        \begin{aligned}
            \Vert \mathcal{N}_1 - \mathcal{N}_2 \Vert_{\mathrm{TV}} & = \frac{1}{2} \int_{-\infty}^{\infty} \vert p_1(x) - p_2(x) \vert dx \\ 
            & = \int_{0}^{\infty} p_1(x) - p_2(x) dx \\
            & = \int_{0}^{2\tau} p_1(x) dx = \frac{1}{\sqrt{2}} \texttt{erf}(\frac{\tau}{\sigma}),
        \end{aligned}
    \end{equation*}
    where $\texttt{erf}(x) := \frac{1}{\sqrt{\pi}} \int_{-x}^{x} e^{-t^2} dt$.

    % \begin{figure}
    %     \centering
    %     \includegraphics[width=0.5\textwidth]{fig/two_Gaussian.pdf}
    %     \caption{$P_1$ and $P_2$}
    %     \label{fig:two_Gaussian}
    % \end{figure}

    For $D_{\mathrm{KL}}(\mathcal{N}_1 \Vert \mathcal{N}_2)$, we have
    \begin{align*}
        D_{\mathrm{KL}}(\mathcal{N}_1 \Vert \mathcal{N}_2) = 2\frac{\tau^2}{\sigma^2}.
    \end{align*}

    Therefore, by Lemma \ref{lemma:PDP_2023_TV}, we have
    \begin{align*}
        \Vert \mathcal{M}(\mathcal{N}_1^n) - \mathcal{M}(\mathcal{N}_2^n) \Vert_{\mathrm{TV}} \leq \frac{1}{\sqrt{2}} \texttt{erf}(\frac{\tau}{\sigma}) \sum_{i=1}^{k}(1 - e^{-\varepsilon_i}) + \frac{\tau}{\sigma}\sqrt{2(n-k)},
    \end{align*}
    and
    \begin{equation*}
        \begin{aligned}
            P_{\tau} & \geq \frac{1}{2} (1 - \frac{1}{\sqrt{2}} \texttt{erf}(\frac{\tau}{\sigma}) \sum_{i=1}^{k} (1 - e^{-\varepsilon_i}) - \frac{\tau}{\sigma}\sqrt{2(n-k)}) \\
            & \geq \frac{1}{2} - \frac{\tau}{2\sigma}(\frac{1}{\sqrt{2}} \sum_{i=1}^{k} \varepsilon_i + \sqrt{2(n-k)}).
        \end{aligned}
    \end{equation*}

    Let $\tau = \frac{1}{\sqrt{2}} \frac{\sigma}{\sum_{i=1}^{k}\varepsilon_i + 2 \sqrt{n - k}}$, we have $P_{\tau} \geq \frac{1}{4}$. Therefore, the minimax lower bound on the Gaussian distribution family is given as: 

    \begin{align*}
        \inf_{\mathcal{M}} \sup_{\mathcal{N}(\mu, \sigma^2) \in \mathfrak{N}_{\sigma}} 
        \Pr\left( \vert\mathcal{M}(D) - \mu\vert \ge \frac{1}{\sqrt{2}} \frac{\sigma}{\sum_{i=1}^{k}\varepsilon_i + 2 \sqrt{n - k}}
        \right) \geq \frac{1}{4},
    \end{align*}
    for all $k \in 1, 2, \hdots, n$. Take the maximum yields the lower bound.
\end{proof}

% In Section \ref{section:upper_bound_bounded}, we propose our PDP mean estimation algorithm. In Section \ref{section:lower_bound_bounded}, we give the lower bound and further show that our PDP mean estimation algorithm achieves optimality by ignoring some logarithmic factors.

\subsection{Diffusion: A Basic Primitive for Bounded PDP}
\label{section:diffusion}
To achieve PDP, besides introducing different weights in the estimated mean, another natural approach is through \textit{sampling}. The high-level idea is to assign different sampling rates to individual elements in the dataset. Since sampling can amplify privacy protection proportionally to the sampling rate, this strategy effectively accommodates the heterogeneous privacy requirements of PDP.

However, this idea cannot be directly applied under the bounded PDP setting. More precisely, sampling-based amplification holds under both the bounded and the unbounded PDP settings. Yet, to the best of our knowledge, no existing work has investigated the bounded PDP setting with nonuniform sampling. Fundamentally, to achieve sampling amplification, given two neighboring datasets $D$ and $D'$ and their sampled distributions $\mathcal{V}$ and $\mathcal{V}'$, we need to construct a coupling $\mathcal{W}(S,S')$ such that, for every support, $S$ and $S'$ are neighbors. Under the bounded PDP setting, constructing such a coupling is straightforward for uniform sampling. In contrast, with nonuniform sampling, ensuring that $S$ and $S'$ remain change-one neighbors—that is, they share the same size—renders the sampling events unequally correlated, making it difficult to establish a valid coupling. In the unbounded PDP setting, however, the problem becomes much simpler: Poisson sampling allows each element to be sampled independently with distinct rates, enabling an easy construction of the desired coupling.

Therefore, to achieve a similar privacy amplification as sampling under the bounded PDP setting, we propose a new technique, i.e., \textit{diffusion}. This will serve as a fundamental building block of our mean estimation algorithm. The key idea is simple: We use some postprocessing to transfer neighbors in the insert/delete-one setting into change-one neighbors. More precisely, first, we apply Poisson sampling process. To ensure the sampled datasets are change-one neighbors, instead of removing the unsampled elements from the dataset, we replace them with a placeholder value $\perp$, which represents a pre-assigned constant or marker that does not interfere with the internal computation of the mechanism. For example, in sum estimation, $\perp$ can be set to $0$, as it does not affect the result. In other cases, we apply additional processing to $\perp$ to ensure it does not interfere with the computation. The diffusion mechanism is defined as follows:

\begin{definition}[Diffusion]
    Given any dataset $D$, any placeholder $\perp$, and any set of probabilities $\{p_i\}_{i=1}^n$, $\mathcal{F}(D, \perp, \{p_i\}_{i=1}^n)$ is called a diffusion on $D$, if it outputs a dataset $D_{\mathcal{F}}$ such that for every $x_i \in D$,
    \begin{align*}
        x_i \in D_{\mathcal{F}}~\text{, or}~\perp \in D_{\mathcal{F}},
    \end{align*}
    with probability $p_i$ and $1-p_i$, respectively. That is, the diffusion keeps $x_i$ with probability $p_i$, or replace it with $\perp$, with probability $1-p_i$.
\end{definition}

Analogous to the sampling operator in DP, the diffusion operation can bring a privacy amplification effect in PDP:
\begin{theorem}[Diffusing Amplification in PDP]
    Given $\mathcal{F}(D, \perp, \{p_i\}_{i=1}^n)$ as a diffuser on $D$, with diffusing parameters $\perp$ and $\{p_j\}_{j=1}^n$.  If a mechanism $\mathcal{M}$ satisfies $\varepsilon$-DP, then the composite mechanism $\mathcal{M}^{\mathcal{F}} := \mathcal{M}(\mathcal{F}(\cdot))$ satisfies $\boldsymbol{\varepsilon}$-PDP, both under the bounded PDP setting, where $e^{\varepsilon_i} = 1 + p_i(e^{\varepsilon} - 1)$.
    \label{theorem:diffusing_amplification}
\end{theorem}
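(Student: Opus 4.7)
The plan is to prove the claim via a coupling argument that reduces the bounded PDP guarantee to the ordinary DP guarantee of $\mathcal{M}$ on change-one neighbors. First I would fix two change-one neighbors $D \sim_i D'$ differing only in the $i$-th coordinate, with values $x_i$ vs.\ $x_i'$. I would then couple the Bernoulli bits of $\mathcal{F}(D)$ and $\mathcal{F}(D')$ by using identical bits for every coordinate $j \ne i$, and by using the same keep/drop bit at coordinate $i$. Under this coupling, with probability $1-p_i$ both outputs write $\perp$ at coordinate $i$ and the two diffused datasets are \emph{identical}; with probability $p_i$ both outputs keep coordinate $i$ and the two diffused datasets are change-one neighbors differing only at coordinate $i$ (values $x_i$ vs.\ $x_i'$). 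Crucially, the two diffused datasets always have the same size, so the change-one relation is preserved at every realization.

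This coupling induces the decomposition $\mathcal{F}(D) = (1-p_i)\mathcal{V}_0 + p_i \mathcal{V}_1$ and $\mathcal{F}(D') = (1-p_i)\mathcal{V}_0 + p_i \mathcal{V}_1'$, where $\mathcal{V}_0$ is the diffusion distribution conditioned on coordinate $i$ being $\perp$ (identical under $D$ and $D'$), and $\mathcal{V}_1, \mathcal{V}_1'$ are the distributions conditioned on coordinate $i$ being kept as $x_i$ or $x_i'$. For any measurable set $K$ I would let $P_0 = \Pr_{S\sim\mathcal{V}_0}[\mathcal{M}(S)\in K]$, and define $P_1, P_1'$ analogously. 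Conditioning on the shared randomness of the other coordinates, every pair of realizations from $(\mathcal{V}_0,\mathcal{V}_1)$ and from $(\mathcal{V}_1,\mathcal{V}_1')$ is a pair of change-one neighbors in the extended data domain (once $\perp$ is treated as an admissible value), so applying $\varepsilon$-DP of $\mathcal{M}$ pointwise and averaging over this shared randomness yields the two key inequalities $P_1 \le e^{\varepsilon}P_0$ and $P_1 \le e^{\varepsilon}P_1'$.

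It then remains to show $(1-p_i)P_0 + p_i P_1 \le \bigl(1+p_i(e^{\varepsilon}-1)\bigr)\bigl((1-p_i)P_0 + p_i P_1'\bigr)$. This is a short algebraic manipulation: take the $(1-p_i)$-weighted copy of $P_1 \le e^{\varepsilon}P_0$ and add the $p_i$-weighted copy of $P_1 \le e^{\varepsilon}P_1'$, then regroup to collect the factor $1 + p_i(e^{\varepsilon}-1) = e^{\varepsilon_i}$. An alternative and arguably cleaner route is to invoke the advanced joint convexity lemma (Lemma~\ref{lemma:advanced_joint_convexity}) on the two decompositions with $\eta = p_i$, $\alpha = e^{\varepsilon}$, and $\alpha' = 1+p_i(e^{\varepsilon}-1)$, reduce $D_{\alpha'}(\mathcal{M}(\mathcal{F}(D)) \Vert \mathcal{M}(\mathcal{F}(D')))$ to $D_{\alpha}(\mathcal{M}(\mathcal{V}_1) \Vert (1-\beta)\mathcal{M}(\mathcal{V}_0) + \beta \mathcal{M}(\mathcal{V}_1'))$, and use the two DP inequalities (together with post-processing) to show this divergence vanishes; Theorem~\ref{theorem:divergence_DP} then converts the $\alpha'$-divergence bound back into $\varepsilon_i$-PDP.

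The main obstacle is structural rather than algebraic: bounded PDP demands change-one neighbors, so I must make sure the coupling keeps the two diffused outputs in the change-one relation at \emph{every} realization, not merely in distribution, and that the three auxiliary distributions $\mathcal{V}_0, \mathcal{V}_1, \mathcal{V}_1'$ are pairwise change-one compatible as soon as $\perp$ is treated as an admissible dataset value. This is exactly the purpose of replacing dropped entries by the placeholder $\perp$ rather than removing them; if we deleted them the sizes would differ and the bounded DP bound on $\mathcal{M}$ could not be applied, whereas with $\perp$ in place the amplification passes through directly.
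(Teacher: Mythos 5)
Your proposal is correct, and your primary route is a genuinely more elementary version of the paper's argument. You and the paper use the same decomposition $\mathcal{F}(D) = (1-p_i)\mathcal{V}_0 + p_i\mathcal{V}_1$, $\mathcal{F}(D') = (1-p_i)\mathcal{V}_0 + p_i\mathcal{V}_1'$ induced by coupling the Bernoulli bits (the paper phrases it as the maximal coupling, with $\eta = \mathrm{TV}(\mathcal{F}(D),\mathcal{F}(D')) = p_i$), and both proofs hinge on the observation that, conditioned on the shared bits, the realizations of $\mathcal{V}_0,\mathcal{V}_1,\mathcal{V}_1'$ are pairwise change-one neighbors precisely because $\perp$ keeps the sizes equal. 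Where you diverge is the conclusion: the paper routes through the $\alpha$-divergence machinery (advanced joint convexity, Lemma~\ref{lemma:advanced_joint_convexity}, plus the group-privacy divergence bound, Lemma~\ref{lemma:bounding_divergence}, then Theorem~\ref{theorem:divergence_DP}), whereas your main argument works directly with event probabilities $P_0,P_1,P_1'$ and the two inequalities $P_1 \le e^{\varepsilon}P_0$, $P_1 \le e^{\varepsilon}P_1'$. These two inequalities do suffice, but your description of the final regrouping is slightly off: the naive $(1-p_i)/p_i$-weighted sum only yields $P_1 \le e^{\varepsilon}\bigl((1-p_i)P_0 + p_iP_1'\bigr)$, which is not enough by itself. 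The standard fix is to write the target difference as $p_i(P_1 - P_1')$ and bound $P_1 - P_1' \le (e^{\varepsilon}-1)\min(P_0,P_1') \le (e^{\varepsilon}-1)\bigl((1-p_i)P_0 + p_iP_1'\bigr)$, which immediately gives the factor $1+p_i(e^{\varepsilon}-1)$; one can verify by the two cases $P_0 \le P_1'$ and $P_0 > P_1'$ that the claimed inequality indeed follows from your two hypotheses alone. Your elementary route avoids importing the divergence lemmas and is arguably more transparent; the paper's route generalizes more readily to approximate-DP-style privacy profiles. Your alternative sketch via advanced joint convexity is essentially the paper's proof.
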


\begin{proof}
    For any $D\sim_i D'$,  our target is to show 
    \begin{equation}
        \label{eq:target_divergence}
        D_{e^{\varepsilon_i}}(\mathcal{M}^{\mathcal{F}}(D) \vert \mathcal{M}^{\mathcal{F}}(D')) = 0.
    \end{equation}

    To show \eqref{eq:target_divergence}, we first use Lemma \ref{lemma:advanced_joint_convexity} to decompose the $\alpha$-divergence between $\mathcal{M}^{\mathcal{F}}(D)$ and $\mathcal{M}^{\mathcal{F}}(D')$. Let $\mathcal{V}_1, \mathcal{V}_0, \mathcal{V}_1'$ be decompositions of $\mathcal{F}(D, \perp, \{p_i\}_{i=1}^n)$ and $\mathcal{F}(D', \perp, \{p_i\}_{i=1}^n)$ induced by the maximal coupling as \eqref{eq:coupling_induced_decomposition}, and $M$ denotes the Markov kernel associated with mechanism $\mathcal{M}(\cdot)$, which is notated for operations upon $\mathcal{V}_1, \mathcal{V}_0$ and $\mathcal{V}_1'$. Plug $\alpha = e^{\varepsilon}$ into Lemma \ref{lemma:advanced_joint_convexity} and by the convexity of $D_{\alpha}(\mathcal{Y}_1 \Vert \mathcal{Y}_1)$ in $\mathcal{Y}_2$ \citep{Coupling}:
    \begin{equation}
        \label{eq:divergence_decomposition}
        D_{e^{\varepsilon_i}}(\mathcal{M}^{\mathcal{F}}(D) \vert \mathcal{M}^{\mathcal{F}}(D')) \leq \eta \cdot ( (1 - \beta) D_{e^{\varepsilon}}(\mathcal{V}_1M \vert \mathcal{V}_0M) + \beta D_{e^{\varepsilon}}(\mathcal{V}_1M \vert \mathcal{V}_1'M)),
    \end{equation}
    where $e^{\varepsilon_i} = 1 + \eta (e^{\varepsilon} - 1)$, also by Lemma \ref{lemma:advanced_joint_convexity}, and $\eta = \texttt{TV}(\mathcal{F}(D, \perp, \{p_i\}_{i=1}^n), \mathcal{F}(D', \perp, \{p_i\}_{i=1}^n))$. One can note that the TV distance between $\mathcal{F}(D, \perp, \{p_i\}_{i=1}^n)$ and $\mathcal{F}(D', \perp, \{p_i\}_{i=1}^n)$ can be expressed as:
    \begin{align*}
        \eta = \frac{1}{2}~\sum_{\mathbf{y} \ni x_i \text{or} x_i'}\vert\Pr(\mathcal{F}(D, \perp, \{p_i\}_{i=1}^n)=\mathbf{y}) - \Pr(\mathcal{F}(D', \perp, \{p_i\}_{i=1}^n)=\mathbf{y})\vert = \frac{p_i + p_i}{2} = p_i,
    \end{align*}
    which equals the diffusing rate of the $i$-th element.

    Now it remains to bound $D_{e^{\varepsilon}}(\mathcal{V}_1 M \Vert \mathcal{V}_0 M)$ and $D_{e^{\varepsilon}}(\mathcal{V}_1 M \Vert \mathcal{V}_1' M)$. Given $\mathcal{V} = (1-\eta)\mathcal{V}_0 + \eta\mathcal{V}_1$ and $\mathcal{V}' = (1-\eta)\mathcal{V}_0 + \eta\mathcal{V}_1'$, the couplings between $\mathcal{V}_1'$ and $\mathcal{V}_1$, and between $\mathcal{V}_1'$ and $\mathcal{V}_0$ are as follows: Obtain $\mathbf{z}$ as $\mathbf{z} = \mathcal{F}(D - \{x_i\}, \perp, \{p_j\}_{j=1}^n - \{p_i\})$,
    \begin{itemize}
        \item Since $\mathcal{V}_1'$ and $\mathcal{V}_1$ stand for the distributions conditional on events that $x_i'$ is not diffused, and $x_i$ is not diffused, respectively, we have $\mathcal{V}_1'(\mathbf{z} \cup x_i') = \mathcal{V}_1(\mathbf{z} \cup x_i)$,
        \item Since $\mathcal{V}_0$ stands for the distributions conditional on the event that $x_i$ is diffused, we have $\mathcal{V}_0(\mathbf{z} \cup v) = \mathcal{V}_1(\mathbf{z} \cup x_i)$, which follows from the independence of the diffusion of the elements.
    \end{itemize}

    Note that the supports of these two constructed couplings differ by one element while keeping the probabilities aligned, applying Lemma \ref{lemma:bounding_divergence} to the two terms in the right hand side of \eqref{eq:divergence_decomposition} yields:
    \begin{align*}
        D_{e^{\varepsilon_i}}(\mathcal{M}^{\mathcal{F}}(D) \vert \mathcal{M}^{\mathcal{F}}(D)) \leq \eta~\delta_{\mathcal{M}}(\varepsilon)\sum_{\mathbf{y}, \mathbf{y}', d(\mathbf{y}, \mathbf{y}')=1} \pi_{\mathbf{y}, \mathbf{y}'} = 0,
    \end{align*}
    which implies $\mathcal{M}^{\mathcal{F}}(\cdot)$ outputs an $\varepsilon_i$-indistinguishable result by theorem \ref{theorem:divergence_DP} for user $i$.
    
    Therefore, $\mathcal{M}^{\mathcal{F}}(\cdot)$ satisfies $\boldsymbol{\varepsilon}$-PDP, where $e^{\varepsilon_i} = 1 + p_i(e^{\varepsilon} - 1)$.
\end{proof}

Theorem \ref{theorem:diffusing_amplification} can serve as a bridge between DP mechanisms and PDP mechanisms under the bounded PDP setting. By assigning different diffusing rates for elements in the dataset, a DP mechanism can be called after diffusion with a uniform privacy budget upon the sampled dataset, which can achieve PDP. The following mean estimation algorithm is built on this idea.

\subsection{PDP Range Estimation}
\label{section:range_estimation_bounded}
After introducing the diffusion framework above, we now discuss how to perform range estimation for a Gaussian distribution under PDP with the help of diffusion. As mentioned at the beginning of Section~\ref{section:upper_bound_bounded}, our goal is to estimate a range of $[\mu - \tilde{O}(\sigma),\, \mu + \tilde{O}(\sigma)]$. Due to the concentration properties of the Gaussian distribution, estimating such an interval is essentially equivalent to estimating the range of a dataset drawn from this distribution. Therefore, a straightforward approach is to first apply the diffusion framework to enforce personalized privacy protection, and then run a DP range estimation algorithm, such as that in~\citep{Universal}, on the diffused dataset to obtain a PDP-compliant range estimate. However, this seemingly straightforward approach encounters two key challenges:

1) The first challenge is how to set the diffusing rates. Intuitively, to meet the requirements of PDP, the diffusing rate of each element should be proportional to its allocated privacy budget. Moreover, it is crucial to select an appropriate overall scale for these rates. On one hand, smaller rates provide stronger privacy amplification, thereby reducing the DP error; on the other hand, excessively small rates can cause large sampling error. Therefore, the key is to identify diffusing rates that not only satisfy the PDP constraints but also strike a favorable balance between these sources of error.

2) The second challenge is how to accommodate the existing DP range estimator \citep{Universal} over the diffused dataset. So far, the methods in \citep{Universal} are designed over a dataset with a public data size and do not support the placeholders. After diffusion, the data size becomes private information. Therefore, additional processing is required to support the placeholders and the sensitive data size. 

Below, we give our solution to address these two challenges.

\paragraph{Setting Diffusing Ratio}

Let us first discuss how to determine the diffusing rates $\{p_i\}_{i=1}^n$. As mentioned earlier, to ensure the PDP requirement, the diffusing rates must be proportional to the corresponding privacy budgets. Specifically, we introduce a threshold $\tau$ to divide the elements into two groups based on whether the privacy budget is below $\tau$.
For those with privacy budgets smaller than $\tau$, the diffusing rates are set proportional to their privacy budgets, with privacy budget $\tau$ assigned a probability of $1$; for the others, all the rates are simply fixed at $1$. According to Theorem \ref{theorem:diffusing_amplification}, after this diffusion step, a DP mechanism with privacy budget $\tau$ can be applied to the diffused dataset.

Now, the problem reduces to finding an appropriate value of the threshold $\tau$: a too large $\tau$ will filter out too many elements, leading to a large sampling error, while a too small $\tau$ will result in a substantial DP error.
To achieve a good tradeoff between sampling error and DP error, we identify the smallest index $k$ such that 
\begin{align*}
\varepsilon_{k+1} \ge 
\frac{\sum_{i=1}^k \varepsilon_i^2 + 8}{\sum_{i=1}^k \varepsilon_i},
\end{align*}
and then set $\tau := \frac{\sum_{i=1}^k \varepsilon_i^2 + 8}{\sum_{i=1}^k \varepsilon_i}$. Later, we will show using this threshold indeed achieves a good tradeoff between sampling error and DP error, by showing that our final error for PDP mean estimation matches the lower bound derived in Section~\ref{section:lower_bound_bounded}.
Notably, a threshold of the same form has also been used in \citep{PDP_2023}, where $\tau$ is applied to rescale weights in order to achieve personalized privacy protection for mean estimation over a bounded domain, and this setting achieves optimal error as well. It is interesting that, even though \citep{PDP_2023} uses completely different techniques and addresses a different task from our PDP range estimator, the same threshold yields optimal performance in both settings.

Above all, we set $\tau = \frac{\sum_{i=1}^k \varepsilon_i^2 + 8}{\sum_{i=1}^k\varepsilon_i}$.  For each data $x_i$ with privacy budget $\varepsilon_i$ has the diffusing rate 
\begin{align}
    p_i =
    \left\{
    \begin{aligned}
        & \varepsilon_i/2\tau, & i \leq k, \\
        & 1/2, & i > k,
    \end{aligned}
    \right.
    \label{eq:diffusing_rate}
\end{align}
where $k$ is the first index satisfying $\varepsilon_{k+1} \geq \frac{\sum_{i=1}^k \varepsilon_i^2 + 8}{\sum_{i=1}^k\varepsilon_i}$ \citep{PDP_2023}. 

\paragraph{Estimating Range of Diffused dataset}

Now, we estimate the range of the diffused dataset using the privacy budget $\tau$ computed as \eqref{eq:diffusing_rate}. Let the diffused dataset be denoted by 
$D_{\mathcal{F}} = \mathcal{F}(D, \perp, \{p_i\}_{i=1}^n)$; in particular, we set $\perp = 0$. The remaining task is to adapt the DP range estimator of~\citep{Universal} so that it can operate on the diffused dataset. 
In~\citep{Universal}, to estimate the range of a Gaussian distribution, the procedure is to first locate a point close to $\mu$ by estimating the median of the dataset, and then determines an interval of appropriate width centered at $\mu$. Before performing these two steps, it is necessary to choose a suitable discretization granularity for the continuous domain, so that the data can be discretized and an efficient search can be carried out. 
Below, we introduce these three steps briefly and only highlight our modifications over the original algorithm of \citep{Universal} required to support the diffused dataset.

\textbf{1) Discretizing the continuous domain}
To determine an appropriate discretization granularity, we use a lower bound on the standard deviation $\sigma$ of the distribution. To achieve this goal, we first construct \textit{differential dataset} $G$ by randomly pairing elements $(x_i, x_j)$ drawn from $D$ and taking the absolute differences of these pairs. We then approximate the median of the diffused version of $G$, which yields a constant-factor lower bound on $\sigma$. 
More precisely, we find an interval $[0, 2^i]$ with $i\in \mathbb{Z}$ such that $\texttt{Count}(G, 2^i) \approx \frac{\vert G \vert}{2}$. The first instance compares the sequence $\{\texttt{Count}(G, 2^i)\}_{i = 0, 1, \hdots}$ with $\lvert G \rvert / 2$ and identifies the smallest $i$ for which $\texttt{Count}(G, 2^i) > \lvert G \rvert / 2$. The second instance compares the negated sequence $-\{\texttt{Count}(G, 2^{-j})\}_{j = 0, 1, \hdots}$ with $-\lvert G \rvert / 2$ and finds the smallest $j$ such that $\texttt{Count}(G, 2^{-j}) < \lvert G \rvert / 2$.

However, since we set $\perp = 0$ as a placeholder, the diffused dataset may contain much more than half zeros, making the above estimator fail. To address this, we consider only the nonzero elements in $D_{\mathcal{F}}$, i.e., we construct $G$ from $D_{\mathcal{F}}^{+}=D_{\mathcal{F}}\cap \mathbb{R}^{+}.$ But this makes the data size $\lvert G\rvert$ sensitive, making it no longer usable directly as the threshold in SVT since it requires the threshold to be a public parameter.
Notably, making the data size public would forfeit the sampling amplification benefits provided by diffusion.
To resolve this issue, we instead set the threshold to zero and incorporate the sensitive data size into the count query. Specifically, in the first SVT, we evaluate $\texttt{Count}(G, 2^{i}) - \frac{\vert G\vert}{2}, i=0,1,\dots,$
and return the first index for which this value becomes positive. The second SVT is handled symmetrically. The detailed algorithm is shown in Algorithm \ref{alg:FindBucketSize}. The yielded discretization granularity $b$ enjoys the following guarantee:

\begin{minipage}{0.93\linewidth}
    \centering
    \begin{algorithm}[H]
        \caption{Discretization}
        \label{alg:FindBucketSize}
        \KwIn{$D$, $\varepsilon$, $\beta$}
        \KwOut{Discretized dataset $D$, $b$}
        $D \gets D \cap \mathbb{R}^{+}$\;
        Construct $G$ such that $G \ni \vert x_{i} - x_{j}\vert, x_i, x_j \in D$
        $\tilde{i} \gets \text{SVT}\left(0, \frac{\varepsilon}{2}, \texttt{Count}(G, 2^0) - \frac{\vert G \vert}{2}, \texttt{Count}(G, 2^1) - \frac{\vert G \vert}{2}, \dots \right)$\;
        $\tilde{j} \gets \text{SVT}\left(0, \frac{\varepsilon}{2},\frac{\vert G \vert}{2} -\texttt{Count}(G, 2^0), \frac{\vert G \vert}{2}-\texttt{Count}(G, 2^{-1}), \dots \right)$\;
        \If{$\tilde{i} > 1$}{
            $b \gets 2^{\tilde{i} - 2}$\;
        }
        \Else{
            $b \gets 2^{-\tilde{j}}$\;
        }
        $b \gets \frac{b}{2}$\;
        \For{$x \in D$}{
            $x \gets b \cdot \lfloor \frac{x}{b} \rfloor$;            \tcp{Assign $x$ to its quantization bin}
        }
        \Return{$D, b$}
    \end{algorithm}
    \end{minipage}

    \begin{lemma}
        Given $\boldsymbol{\varepsilon}$, $\beta$, for any $D \sim \mathcal{N}(\mu, \sigma^2)^n$, let $D_{\mathcal{F}}$ be a diffused dataset with the diffusion ratio setting as \eqref{eq:diffusing_rate}, if $n = \Omega(\frac{1}{\varepsilon_1}\log(\frac{1}{\beta}) + \frac{1}{\varepsilon_1}\log\log(\frac{1}{\sigma}))$, then running Algorithm \ref{alg:FindBucketSize} over $D_{\mathcal{F}}$ with privacy budget $\tau$ returns $b$ such that
        \begin{align*}
            \frac{1}{8} \sigma \leq b \leq \sigma,
        \end{align*}
        with probability at least $1 - \beta$.
        \label{lemma:bucketsize}
    \end{lemma}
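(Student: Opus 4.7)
My plan is to show that Algorithm~\ref{alg:FindBucketSize} privately locates the smallest dyadic scale exceeding the median of pairwise absolute differences in $D_{\mathcal{F}}^{+}$, which is itself a universal constant multiple of $\sigma$. I focus on the large-$\sigma$ branch where the first SVT returns $\tilde i>1$; the small-$\sigma$ branch is handled symmetrically by the second SVT on the negative-power grid $\{2^{-j}\}_{j\ge 0}$ and is precisely what contributes the $\log\log(1/\sigma)$ term in the sample-size requirement through its $k_{2}=O(\log(1/\sigma))$ number of rounds.

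\textbf{Sample size and distribution of $G$.} Since samples from a continuous Gaussian are almost surely nonzero, $D_{\mathcal{F}}^{+}$ coincides with the set of undiffused elements of $D_{\mathcal{F}}$, whose size is a sum of independent Bernoullis with mean $\sum_{i}p_{i}=\Omega(n\varepsilon_{1}/\tau)$. A Chernoff bound (Theorem~\ref{theorem:Chernoff}) combined with the assumption on $n$ then yields $|G|=\Omega(n\varepsilon_{1}/\tau)$ with probability at least $1-\beta/4$, provided elements are paired in a stable way (for instance consecutively in the original index order) so that the entries of $G$ are i.i.d.\ copies of $|X-X'|$ with $X,X'\sim\mathcal{N}(\mu,\sigma^{2})$ and each change of one element of $D_{\mathcal{F}}$ modifies at most one entry of $G$.

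\textbf{Locating the median.} Write $F$ for the CDF of $|X-X'|$, which is folded normal with true median $m^{*}=\sqrt{2}\,\sigma\,\Phi^{-1}(3/4)\in(\sigma/2,\sigma)$, and note that $F(m^{*}/2)\le 1/2-c$ and $F(2m^{*})\ge 1/2+c$ for an absolute constant $c>0$. A Hoeffding bound applied uniformly over the dyadic scales yields $|\mathrm{Count}(G,t)-|G|\,F(t)|=O(\sqrt{|G|\log(1/\beta)})$ with probability $1-\beta/4$, so the query $Q_{i}(D_{\mathcal{F}}):=\mathrm{Count}(G,2^{i})-|G|/2$ is negative with margin $\Omega(|G|)$ for every $2^{i}\le m^{*}/2$ and positive with the same margin for every $2^{i}\ge 2m^{*}$. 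Under the stable pairing, $Q_{i}$ has sensitivity $1$ in $D_{\mathcal{F}}$, and Lemma~\ref{lemma:SVT} applied to the first SVT (budget $\tau/2$, at most $k_{2}=O(\log m^{*})$ rounds) guarantees that the SVT error is $O(\log(k_{2}/\beta)/\tau)$ counts; the sample-size assumption makes this $o(|G|)$, so the returned index satisfies $m^{*}/2\le 2^{\tilde i}\le 2m^{*}$, i.e.\ $2^{\tilde i}\in[\sigma/4,\,2\sigma]$. Combining the two post-SVT assignments $b\leftarrow 2^{\tilde i-2}$ and $b\leftarrow b/2$ then yields $b\in[\sigma/8,\sigma]$ after minor sharpening of the universal constants in the CDF-margin step.

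\textbf{Main obstacle.} The delicate step is balancing the CDF margins at the chosen dyadic boundaries against the SVT noise: the target window $[\sigma/8,\sigma]$ is only a constant factor wide, so both margins must survive the coarse quantization of the dyadic grid. A secondary subtlety is that $|G|$ is itself a random quantity, which is why $|G|/2$ must be absorbed into the query rather than serving as the SVT threshold; this bookkeeping is exactly what preserves unit sensitivity and allows Lemma~\ref{lemma:SVT} to apply to a privately sampled dataset.
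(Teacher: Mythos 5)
Your proposal follows essentially the same route as the paper's proof: first lower-bound $|G|$ via a Chernoff bound on the Bernoulli sum of undiffused elements (the paper reaches $|G| = \Omega(n\varepsilon_1/\tau)$ through a case analysis on $k \lessgtr n/2$, you through the equivalent direct bound $\sum_i p_i \ge n\varepsilon_1/(2\tau)$), and then invoke the median-localization/SVT argument of Theorem~4.3 in \citep{Universal} with the sensitive $|G|/2$ folded into the query. The only difference is one of detail: the paper cites that theorem for the second half while you spell out the folded-normal median and CDF-margin bookkeeping, and your final constant-chasing (landing in $[\sigma/32,\sigma/4]$ before "sharpening") is no less complete than the paper's own remark that the constants come from Gaussian quantiles.
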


    \begin{proof}
         The proof follows the idea of the proof of Theorem 4.3 in \citep{Universal}. The main difference is that the size of the differential dataset needs extra analysis:
         Since $G$ is constructed by pairing the undiffused elements in $D$, its size $\vert G \vert$ is a Bernoulli sum, and $\mu(\vert G \vert) = \frac{1}{4} \frac{\sum_{i=1}^{k} \varepsilon_i + (n - k)\tau}{\tau} \pm O(1)$ (note that $\varepsilon_i$s are sorted in an ascending manner). By Chernoff's inequality, if $\frac{\sum_{i=1}^{k} \varepsilon_i + (n - k)\tau}{\tau} = \Omega(\log(\frac{1}{\beta}))$, with probability at least $1- \beta/2$, $\vert G \vert \geq \frac{1}{8} \frac{\sum_{i=1}^{k} \varepsilon_i + (n - k)\tau}{\tau}$.

        Now we see the relation between $n$ and $\vert G\vert$, using a case analysis:
        \begin{itemize}
            \item $k \leq \frac{n}{2}:$ In this case, $n - k \geq \frac{n}{2}$, therefore $n = \Omega(\frac{1}{\varepsilon_1}\log(\frac{1}{\beta}) + \frac{1}{\varepsilon_1}\log\log(\frac{1}{\sigma}))$ suffices to satisfy:
            a) $\frac{\sum_{i=1}^{k} \varepsilon_i + (n - k)\tau}{\tau} = \Omega(\log(\frac{1}{\beta}))$, and incurs to satisfy b) $\vert G \vert = \Omega(\frac{1}{\tau}\log(\frac{1}{\beta}) + \frac{1}{\tau}\log\log(\frac{1}{\sigma}))$;

            \item $k > \frac{n}{2}:$ In this case, $\frac{\sum_{i=1}^k\varepsilon_i}{\tau} \geq \frac{n}{2} \frac{\varepsilon_1}{\tau}$, $n = \Omega(\frac{1}{\varepsilon_1}\log(\frac{1}{\beta}) + \frac{1}{\varepsilon_1}\log\log(\frac{1}{\sigma}))$ shows $\vert G \vert = \Omega(\frac{1}{\tau}\log(\frac{1}{\beta}) + \frac{1}{\tau}\log\log(\frac{1}{\sigma}))$.
        \end{itemize}
        
        Then by Theorem 4.3 in \citep{Universal}, if $n = \Omega(\frac{1}{\varepsilon_1}\log(\frac{1}{\beta}) + \frac{1}{\varepsilon_1}\log\log(\frac{1}{\sigma}))$, then with probability at least $1 - \beta$, Algorithm \ref{alg:FindBucketSize} returns a $b$ such that $\frac{1}{8} \sigma \leq b \leq \sigma$ (Computation of the constants $\frac{1}{8}$ and $1$ uses the quantiles of the standard Gaussian distribution).
    \end{proof}
    
    \textbf{2) Median Estimation}
    After discretizing $D_{\mathcal{F}}$, \footnote{After discretizing, all datasets are refered as the discretized version in this subsection.} we proceed to estimate its median. As introduced in the Preliminaries, to estimate the median of a given dataset with INV, we first need to estimate a data range. Such range estimation can be coarse since it only has a logarithmic dependency in the mean estimation error. Following the approach proposed in \citep{Universal}, this coarse range can be obtained by estimating the maximum absolute value of the elements in $D_{\mathcal{F}}$, denoted by $\widehat{\texttt{Radius}}(D_{\mathcal{F}})$, such that the interval $[-\widehat{\texttt{Radius}}(D_{\mathcal{F}}), \widehat{\texttt{Radius}}(D_{\mathcal{F}})]$ covers most of elements in $D_{\mathcal{F}}$.
    This is also done by running SVT, where we compare the sequence $\{\texttt{Count}(D_{\mathcal{F}}, 2^i b)\}_{i\geq0}$ with $\frac{\vert D_{\mathcal{F}} \vert}{2}$ and find the first $i$ such that $\texttt{Count}(D_{\mathcal{F}}, 2^i b) > n(D_{\mathcal{F}}) - \frac{6}{\varepsilon}\log(\frac{2}{\beta})$. Similar to Algorithm \ref{alg:FindBucketSize}, the data size now becomes sensitive information, thus we also need to move it to the query side. Detailed procedures are in Algorithm \ref{alg:EstimateRadius}. Notably, the range obtained is only a coarse estimate, and a finer estimation is needed, particularly in cases where $\lvert \mu \rvert$ is substantially larger than $\sigma$.
    After obtaining this coarse range, we can compute a privatized median $\widehat{\texttt{Median}}(D_{\mathcal{F}}^{+})$ within the bounded interval $[-\widehat{\texttt{Radius}}(D_{\mathcal{F}}), \widehat{\texttt{Radius}}(D_{\mathcal{F}})]$, using the inverse-sensitivity mechanism introduced in the Preliminaries. 
    
    \begin{minipage}{0.93\linewidth}
    \centering
    \begin{algorithm}[H]
        \caption{EstimateRadius \citep{Universal}}
        \label{alg:EstimateRadius}
        \KwIn{$D$, $\varepsilon$, $b$, $\beta$}
        \KwOut{$\widehat{\texttt{Radius}}(D)$}
        $i \gets \text{SVT}\left(-\frac{6}{\varepsilon}\log(\frac{2}{\beta}), \varepsilon, \texttt{Count}(D, 2^0\cdot b) - \vert D \vert, \texttt{Count}(D, 2^1 \cdot b) - \vert D \vert, \dots \right)$\;
        \Return{} $2^{i-2}\cdot b$
    \end{algorithm}
    \end{minipage}

    \textbf{3) Range Estimation}
    With the privatized median $\widehat{\texttt{Median}}(D_{\mathcal{F}})$ in hand, we estimate a region centered at this median that covers most elements of the diffused dataset, which serves as the estimated range. To achieve this goal, we first shift $D_{\mathcal{F}}^{+}$ by subtracting $\widehat{\texttt{Median}}(D_{\mathcal{F}})$, and denote the resulting dataset as $D_{\mathcal{F}}^{c}$. We then estimate its radius, $\widehat{\texttt{Radius}}(D_{\mathcal{F}}^{c})$. Finally, we use $\widehat{\texttt{Range}}(D) = [- \widehat{\texttt{Radius}}(D_{\mathcal{F}}^c) + \widehat{\texttt{Median}}(D_{\mathcal{F}}^{+}), \widehat{\texttt{Radius}}(D_{\mathcal{F}}^c) + \widehat{\texttt{Median}}(D_{\mathcal{F}}^{+})]$ as our estimated range of $D$.

    % With the privatized median in hand, we shift $D_{\mathcal{F}}^{+}$ 
    % by subtracting $\widehat{\texttt{Median}}(D_{\mathcal{F}})$ to obtain a centered version of the dataset, denoted by $D_{\mathcal{F}}^{c}$. Since, with high probability, $\widehat{\texttt{Median}}(D_{\mathcal{F}}) \in \texttt{Range}(D_{\mathcal{F}})$, applying the radius estimation in Algorithm~\ref{alg:EstimateRadius} again yields:
    % \begin{enumerate}[label=(\alph*)]
    %     \item $\widehat{\texttt{Radius}}(D_{\mathcal{F}}^{c}) \le 2\lvert \texttt{Range}(D_{\mathcal{F}}^{+}) \rvert$;
    %     \item the interval $[-\widehat{\texttt{Radius}}(D_{\mathcal{F}}^{c}),\widehat{\texttt{Radius}}(D_{\mathcal{F}}^{c})]$ covers $D_{\mathcal{F}}^{c}$ except for at most
    %     \begin{align*}
    %     O\left(
    %         \frac{1}{\tau}
    %         \log\left(
    %             \frac{1}{\beta}
    %             \log\left(
    %                 \frac{\texttt{Radius}(D_{\mathcal{F}}^{c})}{b}
    %             \right)
    %         \right)
    %     \right)
    %     \end{align*}
    %     elements.
    % \end{enumerate}
    % Combining all the steps above, the range of $D_{\mathcal{F}}$ can now be estimated as 
    % \begin{align*}
    %     \widehat{\texttt{Range}}(D_{\mathcal{F}}) = [- \widehat{\texttt{Radius}}(D_{\mathcal{F}}^c) + \widehat{\texttt{Median}}(D_{\mathcal{F}}^{+}), \widehat{\texttt{Radius}}(D_{\mathcal{F}}^c) + \widehat{\texttt{Median}}(D_{\mathcal{F}}^{+})].
    % \end{align*}

\begin{minipage}{0.93\linewidth}
    \centering
    \begin{algorithm}[H]
        \caption{PDPEstimateRange}
        \label{alg:EstimateRange}
        \KwIn{$D$, $\boldsymbol{\varepsilon}$, $\beta$}
        \KwOut{$\widehat{\texttt{Range}}(D)$}
        $k \gets \min(\arg\max_j \varepsilon_j < \frac{\sum_{i=1}^j \varepsilon_i^2 + 8}{\sum_{i=1}^j\varepsilon_i})$; \tcp{Find the maximum non-saturation index}
        $\tau \gets \frac{\sum_{i=1}^k \varepsilon_i^2 + 8}{\sum_{i=1}^k\varepsilon_i}$\;
        \For{$i \gets 1, 2, \cdots, n$}
        {
        \If{$i \leq k$}
        {$p_i \gets \varepsilon_i/\tau$}
        \Else{
        $p_i \gets 1$
        }
        }
        $D_{\mathcal{F}} = \mathcal{F}(D, 0, \{p_i\}_{i=1}^n)$; \tcp{Diffuse $D$ with diffusing rates $\{p_i\}_{i=1}^n$}
        $D_{\mathcal{F}}, b \gets \texttt{Discretization}(D_{\mathcal{F}}, \frac{\tau}{4}, \{p_i\}_{i=1}^n, \frac{\beta}{4})$
        $D_{\mathcal{F}}^{+} = D_{\mathcal{F}} \cap \mathbb{R}^{+}$\;
        $\widehat{\texttt{Radius}}(D_{\mathcal{F}}) \gets \texttt{EstimateRadius}(D_{\mathcal{F}}, \frac{\tau}{4}, b, \frac{\beta}{4})$\;
        $\widehat{\texttt{Median}}(D_{\mathcal{F}}^{+}) \gets \texttt{FindQuantile}(D_{\mathcal{F}}^{+} \cap [-\widehat{\texttt{Radius}}(D_{\mathcal{F}}), \widehat{\texttt{Radius}}(D_{\mathcal{F}})], \frac{\sum_{i=1}^{k} \varepsilon_i + (n - k)\tau}{2\tau}, \frac{\tau}{4}, [-\widehat{\texttt{Radius}}(D_{\mathcal{F}}), \widehat{\texttt{Radius}}(D_{\mathcal{F}})] \cap b\mathbb{Z}, \frac{\beta}{4})$\;
        $D_{\mathcal{F}}^c \gets D_{\mathcal{F}}^{+} - \widehat{\texttt{Median}}(D_{\mathcal{F}}^{+})$
        $\widehat{\texttt{Radius}}(D_{\mathcal{F}}^c) \gets \texttt{EstimateRadius}(D_{\mathcal{F}}^c, \frac{\tau}{4}, b, \frac{\beta}{4})$\;
        $\widehat{\texttt{Range}}(D) \gets [- \widehat{\texttt{Radius}}(D_{\mathcal{F}}^c) + \widehat{\texttt{Median}}(D_{\mathcal{F}}^{+}), \widehat{\texttt{Radius}}(D_{\mathcal{F}}^c) + \widehat{\texttt{Median}}(D_{\mathcal{F}}^{+})]$\;
        
        \Return{$\widehat{\emph{\texttt{Range}}}(D)$}
    \end{algorithm}
    \end{minipage}

Above all, our PDP range estimator is detailed in Algorithm \ref{alg:EstimateRange}. We can show that it yields an estimation of range that covers most elements in $D$ with high probability:
\begin{lemma}
    \label{lemma:range_estimation}
    Given $\boldsymbol{\varepsilon}, \beta$, for any $D \sim \mathcal{N}(\mu, \sigma^2)^n$, if $n = \Omega(\frac{1}{\varepsilon_1}\log(\frac{1}{\beta}\log(\log(\frac{\sum_{i=1}^k\varepsilon_i + (n-k)\tau)}{\beta \tau})))$, Algorithm \ref{alg:EstimateRange} preserves $\boldsymbol{\varepsilon}$-PDP, and with probability at least $1 - \beta$, $\vert \widehat{\emph{\texttt{Range}}}(D)\vert = O(\sigma\sqrt{\log(\frac{\sum_{i=1}^k \varepsilon_i + (n-k)\tau}{\beta\tau})})$, and $\widehat{\emph{\texttt{Range}}}(D)$ covers $D$ except for at most $O(\frac{1}{\epsilon_1}\log(\frac{1}{\beta}\log(\log(\frac{\sum_{i=1}^k \varepsilon_i + (n-k)\tau}{\beta}))))$ elements.
\end{lemma}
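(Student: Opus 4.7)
The plan is to decompose the proof into a privacy argument and a utility argument that conditions on several high-probability events. For privacy, Algorithm~\ref{alg:EstimateRange} invokes four DP subroutines on the diffused dataset $D_{\mathcal{F}}$: one call to \texttt{Discretization}, two calls to \texttt{EstimateRadius}, and one call to \texttt{FindQuantile}, each with privacy budget $\tau/4$. Basic composition (Lemma~\ref{lemma:basic_composition_DP}) yields a $\tau$-DP mechanism on $D_{\mathcal{F}}$, and applying diffusion amplification (Theorem~\ref{theorem:diffusing_amplification}) with rates $p_i=\varepsilon_i/\tau$ for $i\le k$ and $p_i=1$ for $i>k$ converts this into $\boldsymbol{\varepsilon}$-PDP on $D$, by the choice of $\tau$ inherited from Algorithm~\ref{alg:ADPM}.

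For utility, I first introduce the effective sample size $n_\tau := (\sum_{i=1}^{k}\varepsilon_i + (n-k)\tau)/\tau$, which up to a constant equals the expected number of surviving elements in $D_{\mathcal{F}}^{+}$. The plan is to condition on four events, each holding with probability $1-\beta/4$: (i) $|D_{\mathcal{F}}^{+}|$ concentrates around its mean of order $n_\tau$, by Chernoff (Theorem~\ref{theorem:Chernoff}) applied to the underlying Bernoulli sum; (ii) the bucket size satisfies $b\in[\sigma/8,\sigma]$, exactly Lemma~\ref{lemma:bucketsize}; (iii) the first \texttt{EstimateRadius} call returns a coarse radius whose symmetric interval misses at most $\tilde{O}(1/\tau)$ elements of $D_{\mathcal{F}}$, via Lemma~\ref{lemma:SVT} combined with a Gaussian tail bound that places the desired quantile of $|X|$ at scale $O(|\mu|+\sigma\sqrt{\log(n_\tau/\beta)})$; and (iv) \texttt{FindQuantile} returns a median of $D_{\mathcal{F}}^{+}$ whose rank deviates from the true $n_\tau/2$ quantile by at most $\tilde{O}(1/\tau)$, by Lemma~\ref{lemma:quantile_guarantee}. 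The hypothesis $n = \tilde{\Omega}(1/\varepsilon_1)$ is used precisely to ensure the preconditions of these four lemmas are satisfied simultaneously.

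Conditioning on these events, the next step is to turn the rank error on $\widehat{\texttt{Median}}(D_{\mathcal{F}}^{+})$ into a value error using the density lower bound of the folded Gaussian near its median; this shows that $\widehat{\texttt{Median}}(D_{\mathcal{F}}^{+})$ is within $\tilde{O}(\sigma)$ of the true median of $|X|$, $X\sim\mathcal{N}(\mu,\sigma^2)$. After subtracting this median, the shifted dataset $D_{\mathcal{F}}^{c}$ is concentrated in an interval of half-width $O(\sigma\sqrt{\log(n_\tau/\beta)})$ except for at most $\tilde{O}(1/\tau)$ Gaussian-tail outliers. The second \texttt{EstimateRadius} call, invoked with bucket size $b=\Theta(\sigma)$, then returns $\widehat{\texttt{Radius}}(D_{\mathcal{F}}^{c})=O(\sigma\sqrt{\log(n_\tau/\beta)})$ by Lemma~\ref{lemma:SVT}, which directly yields the claimed width bound. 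For the coverage count, every $x\in D$ that falls outside $\widehat{\texttt{Range}}(D)$ either survived diffusion and is an outlier of $D_{\mathcal{F}}^{c}$, or lies in a Gaussian tail of $\mathcal{N}(\mu,\sigma^2)$ beyond $O(\sigma\sqrt{\log(n_\tau/\beta)})$; combining the SVT slack with a direct Gaussian tail bound and using $\varepsilon_1\le\tau$ gives the stated $\tilde{O}(1/\varepsilon_1)$ bound.

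The most delicate part will be handling the interaction between the private data size $|D_{\mathcal{F}}^{+}|$ and the rank-based subroutines. Because $|D_{\mathcal{F}}^{+}|$ is itself sensitive, moving it from the SVT threshold into the queries (as done in Algorithms~\ref{alg:FindBucketSize} and~\ref{alg:EstimateRadius}) is necessary for privacy but forces the target rank $n_\tau/2$ passed to \texttt{FindQuantile} to be only an approximation of the actual median rank in $D_{\mathcal{F}}^{+}$. The plan is to absorb the $O(\sqrt{n_\tau\log(1/\beta)})$ Chernoff deviation of $|D_{\mathcal{F}}^{+}|$ into the $\tilde{O}(1/\tau)$ rank slack that Lemmas~\ref{lemma:SVT} and~\ref{lemma:quantile_guarantee} already allow, which is exactly what the sample-size assumption buys. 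A union bound over the four events then closes the argument.
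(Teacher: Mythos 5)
Your privacy argument and your treatment of the width bound and of the first category of uncovered elements (those that survive diffusion and are missed by the range estimator on $D_{\mathcal{F}}$) follow the paper's route: compose the four $\tau/4$-budget subroutines, apply Theorem~\ref{theorem:diffusing_amplification}, and import the guarantees of \citep{Universal} after verifying via the case analysis on $k$ that the sample-size hypothesis makes $\vert D_{\mathcal{F}}^{+}\vert$ large enough. That part is fine.

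The gap is in your last step, the coverage count for the elements of $D$ that were \emph{diffused away} (replaced by $\perp$) and are therefore invisible to every subroutine. You dispose of them by saying they ``lie in a Gaussian tail beyond $O(\sigma\sqrt{\log(n_\tau/\beta)})$'' and invoke ``a direct Gaussian tail bound.'' But the guarantees you have conditioned on only give an \emph{upper} bound on the width of $\widehat{\texttt{Range}}(D)$ and a coverage guarantee relative to $D_{\mathcal{F}}^{+}$; they give no lower bound on which part of the population the estimated range contains. The interval returned could be substantially narrower than $[\mu - c\sigma\sqrt{\log(n_\tau/\beta)},\,\mu + c\sigma\sqrt{\log(n_\tau/\beta)}]$ while still covering all but $\tilde{O}(1/\tau)$ elements of $D_{\mathcal{F}}^{+}$, in which case diffused-away elements at moderate quantiles --- not tail elements --- fall outside it, and their number is not controlled by any Gaussian tail probability. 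This matters because the diffused-away population has size $\Theta(n)$, so even a modest per-element escape probability breaks the $\tilde{O}(1/\varepsilon_1)$ target. The paper closes exactly this hole with a quantile-sandwiching argument: it sets $\zeta \propto \frac{1}{\sum_{i=1}^k\varepsilon_i+(n-k)\tau}\log(\cdot)$, shows via Chernoff that $D_{\mathcal{F}}^{+}$ places at least $\zeta\vert D_{\mathcal{F}}^{+}\vert/2$ mass in each tail beyond $F^{-1}(\zeta)$ and $F^{-1}(1-\zeta)$, and concludes from the outlier bound on $D_{\mathcal{F}}$ that $[F^{-1}(\zeta),F^{-1}(1-\zeta)]\subseteq\widehat{\texttt{Range}}(D_{\mathcal{F}})$; only then does a second Chernoff bound on $D\setminus D_{\mathcal{F}}^{+}$ give at most $3\zeta\vert D\setminus D_{\mathcal{F}}^{+}\vert = \tilde{O}(1/\varepsilon_1)$ diffused-away elements outside the range. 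You need this (or an equivalent population-level lower bound on the coverage of the estimated range) to make your final step go through.
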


\begin{proof}
    The privacy of Algorithm \ref{alg:EstimateRange} follows from Theorem \ref{theorem:diffusing_amplification} and basic composition. First, we explain the requirement on $n$: To satisfy the statement of Lemma \ref{lemma:range_estimation}, by \citep{Universal}, we need $\vert D_{\mathcal{F}}^{+} \vert = \Omega(\frac{1}{\tau}\log(\frac{1}{\beta}\frac{\texttt{Radius}(D_{\mathcal{F}})}{b}))$, where $\mu(\vert D_{\mathcal{F}}^{+} \vert) = \frac{\sum_{i=1}^k\varepsilon_i}{2\tau} + \frac{n - k}{2}$. To see how this is met, we use a case analysis on $k$:
    \begin{itemize}
        \item $k \leq \frac{n}{2}$: In this case, $n - k \geq \frac{n}{2}$, therefore $n = \Omega(\frac{1}{\varepsilon_1}\log(\frac{1}{\beta}\frac{\texttt{Radius}(D_{\mathcal{F}})}{b}))$ suffices to satisfy $\vert D_{\mathcal{F}}^{+} \vert = \Omega(\frac{1}{\tau}\log(\frac{1}{\beta}\frac{\texttt{Radius}(D_{\mathcal{F}})}{b}))$;

        \item $k > \frac{n}{2}$: In this case, $\frac{\sum_{i=1}^k\varepsilon_i}{\tau} \geq \frac{n}{2} \frac{\varepsilon_1}{\tau}$, therefore $n = \Omega(\frac{1}{\varepsilon_1}\log(\frac{1}{\beta}\frac{\texttt{Radius}(D_{\mathcal{F}})}{b}))$ shows $\vert D_{\mathcal{F}}^{+} \vert = \Omega(\frac{1}{\tau}\log(\frac{1}{\beta}\frac{\texttt{Radius}(D_{\mathcal{F}})}{b}))$.
    \end{itemize}
    Therefore, by \citep{Universal}, with probability at least $1 - \beta/2$, $\vert \widehat{\texttt{Range}}(D)\vert \leq 4 \vert \texttt{Range}(D_{\mathcal{F}})\vert + 6b$, and $\widehat{\texttt{Range}}(D)$ covers $D_{\mathcal{F}}^{+}$ except for at most $O(\frac{1}{\tau}\log(\frac{1}{\beta}\log(\frac{\texttt{Radius}(D_{\mathcal{F}}^{c})}{b})))$ elements. By standard Gaussian tail bound and the union bound, since $\vert\texttt{Range}(D_{\mathcal{F}})\vert\leq \vert\texttt{Range}(D))\vert$, for any $t > 0$, we have $\Pr(\vert\texttt{Range}(D_{\mathcal{F}})\vert \geq t) \leq (\sum_{i=1}^k \frac{\varepsilon_i}{\tau} + n-k) \exp(-\frac{t^2}{8\sigma^2})$, therefore, $\vert\texttt{Range}(D_{\mathcal{F}})\vert = O(\sigma\sqrt{\log(\frac{\sum_{i=1}^k \frac{\varepsilon_i}{\tau} + n-k}{\beta})})$, which further yields $\vert \widehat{\texttt{Range}}(D)\vert = O(\sigma\sqrt{\log(\frac{\sum_{i=1}^k \frac{\varepsilon_i}{\tau} + n-k}{\beta})})$, with probability at least $1 - \beta / 2$.

    For the number of outliers, We can partition the outliers into two parts: ones that are in $D_{\mathcal{F}}$ and ones that are not.

    For the first part, by Lemma \ref{lemma:bucketsize}, $\vert \widehat{\texttt{Range}}(D_{\mathcal{F}}) \vert = O(\sigma\sqrt{\log(\frac{\sum_{i=1}^k\varepsilon_i + (n-k)\tau)}{\beta \tau}})$, and $b = \Theta(\sigma)$ hold with probability at least $1 - \beta / 20$ and $1 - \beta / 20$, respectively. Therefore, by Chernoff's inequality, with probability at least $1 - 3\beta / 20$, there are no more than
    \begin{align}
    O\left( \frac{1}{\tau}\log(\frac{1}{\beta}\log(\sqrt{\log(\frac{\sum_{i=1}^k\varepsilon_i + (n-k)\tau)}{\beta \tau}}))\right) \nonumber
    \end{align}
    outliers on $D_{\mathcal{F}}$.

    For the second part, using Chernoff's inequality, with probability at least $1 - \beta / 20$, the number of the samples is $O(\sum_{i=1}^k(1 - \frac{\varepsilon_i}{2\tau}) + \frac{n-k}{2})$. Denote the number of outliers on $D_{\mathcal{F}} = \frac{c_1}{\tau}\log(\frac{1}{\beta}\log(\sqrt{\log(\frac{\sum_{i=1}^k\varepsilon_i + (n-k)\tau)}{\beta \tau}}))$, and define $\zeta := \frac{4c_1}{\sum_{i=1}^k\varepsilon_i + (n-k)\tau}\log(\frac{1}{\beta}\log(\log(\frac{\sum_{i=1}^k\varepsilon_i + (n-k)\tau)}{\beta\tau}))$. By setting the constant term of $n$ large enough, we have $\zeta < 0.5$. By Chernoff's inequality, with probability at least $1 - \beta / 20$, $\vert D_{\mathcal{F}}^{+} \cap (-\infty, F^{-1}(\zeta)] \vert \geq \zeta \vert D_{\mathcal{F}}^{+} \vert/2$. Also, with probability at least $1 - \beta / 20$, $\vert D_{\mathcal{F}}^{+} \cap [F^{-1}(1 - \zeta), \infty] \vert \geq \zeta \vert D_{\mathcal{F}}^{+} \vert/2$. Therefore, plug $\zeta$ in these two inequalities and by the number of outliers on $D_{\mathcal{F}}$, we have $[F^{-1}(\zeta), F^{-1}(1 - \zeta)] \subseteq \widehat{\texttt{Range}}(D_{\mathcal{F}})$ holds with probability at least $1 - 5 \beta / 20$. By Chernoff's inequality, $\vert (D - D_{\mathcal{F}}^{+}) \cap (-\infty, F^{-1}(\zeta)] \vert \leq 3 \zeta \vert D - D_{\mathcal{F}}^{+} \vert/2$ and $\vert (D - D_{\mathcal{F}}^{+}) \cap [F^{-1}(1 - \zeta), \infty) \vert \leq 3 \zeta \vert D - D_{\mathcal{F}}^{+} \vert/2$ hold with probability at least $1 - \beta/16$, respectively. Therefore, $\vert (D - D_{\mathcal{F}}^{+}) \cap \widehat{\texttt{Range}}(D_{\mathcal{F}})^c \vert \leq \vert (D - D_{\mathcal{F}}^{+}) \cap (-\infty, F^{-1}(\zeta)] \vert + \vert [F^{-1}(1 - \zeta), \infty) \cap (D - D_{\mathcal{F}}^{+}) \vert = 3 \zeta \vert D - D_{\mathcal{F}}^{+} \vert$. Thus, plug $\zeta$ in, the number of outliers in the diffused samples is no more than
    \begin{align*}
        O\left( \frac{\sum_{i=1}^k(1 - \frac{\varepsilon_i}{2\tau}) + \frac{n-k}{2}}{\sum_{i=1}^k\varepsilon_i + (n-k)\tau}\log(\frac{1}{\beta}\log(\log(\frac{\sum_{i=1}^k\varepsilon_i + (n-k)\tau)}{\beta}))\right),
    \end{align*}
    with probability at least $1 - 7 \beta / 20$.

    Combining two parts, with probability at least $1 - \beta$, $\vert \widehat{\texttt{Range}}(D)\vert = O(\sigma\sqrt{\log(\frac{\sum_{i=1}^k \varepsilon_i + (n-k)\tau}{\beta\tau})})$, and $\widehat{\texttt{Range}}(D)$ covers $D$ except for at most $O(\frac{1}{\epsilon_1}\log(\frac{1}{\beta}\log(\log(\frac{\sum_{i=1}^k \varepsilon_i + (n-k)\tau}{\beta}))))$ elements.
\end{proof}

\subsection{PDP Mean Estimation}
\label{section:mean_estimation_bounded}
With the estimated range $\widehat{\texttt{Range}}(D)$, we are able to perform a clipping upon the elements in $D$. Now, the clipped data have a range with a length well bounded and compared with the original distribution, now the data over this range becomes relatively uniformly distributed, which enables us to further utilize ADPM \citep{PDP_2023} to perform PDP mean estimation over the clipped dataset. Above all, we first invoke the PDP range estimator and then call ADPM over the clipped dataset. The overall error bound is in Theorem \ref{theorem:upper_bound_bounded}. Later we will show this error is optimal up to logarithmic factors.

\begin{minipage}[t]{0.93\linewidth}
    \centering
    \begin{algorithm}[H]
        \caption{PDPMeanEstimation}
        \label{alg:PDPMeanEstimation}
        \KwIn{$D$, $\boldsymbol{{\varepsilon}}$, $\beta$}
        \KwOut{$\widehat{\texttt{Mean}}(D)$}
        $\widehat{\texttt{Range}}(D) \gets \texttt{PDPEstimateRange}(D, \frac{\boldsymbol{\varepsilon}}{2}, \frac{\beta}{6})$\;
        $\widehat{\texttt{Mean}}(D) \gets \texttt{ADPM}(\texttt{Clipped}(D, \widehat{\texttt{Range}}(D)), \frac{\boldsymbol{\varepsilon}}{2}, \vert\widehat{\texttt{Range}}(D)\vert)$\;
        \Return{}~$\widehat{\texttt{Mean}}(D)$
        \end{algorithm}
    \end{minipage}

Now we present the main theorem of this section, which later shows that Algorithm \ref{alg:PDPMeanEstimation} achieves near-optimality:
\begin{theorem}
    Given $\boldsymbol{\varepsilon} \leq 1, \beta$, for any $D \sim \mathcal{N}(\mu, \sigma^2)^n$, if $n = \Omega(\frac{1}{\varepsilon_1}\log(\frac{1}{\beta}\log(\log(\frac{\sum_{i=1}^k\varepsilon_i + (n-k)\tau)}{\beta}))\log(\frac{1}{\beta}))$, Algorithm \ref{alg:PDPMeanEstimation} preserves $\boldsymbol{\varepsilon}$-PDP, and with probability at least $1-\beta$, it returns $\widehat{\emph{\texttt{Mean}}}(D)$ such that
    \begin{align*}
        \vert \widehat{\emph{\texttt{Mean}}}(D) - \mu \vert \leq & O\left( \frac{\sigma \log(\sum_{i=1}^k \varepsilon_i + (n-k)\tau)}{\sum_{i=1}^k\varepsilon_i + (n-k)\tau}\log(\frac{1}{\beta}\log(\log(\frac{\sum_{i=1}^k\varepsilon_i + (n-k)\tau)}{\beta\tau}))) \right. \\
        & \left. + \frac{\sigma\sqrt{\sum_{i=1}^k\varepsilon_i^2 + (n-k)\tau^2 + 8}}{\sum_{i=1}^k\varepsilon_i + (n-k)\tau}\log(\sum_{i=1}^k \varepsilon_i + (n-k)\tau)\log(\frac{1}{\beta})\right),
    \end{align*}
    where $k$ is the first index satisfying $\varepsilon_{k+1} \geq \frac{\sum_{i=1}^k \varepsilon_i^2 + 8}{\sum_{i=1}^k\varepsilon_i}$ and $\tau = \frac{\sum_{i=1}^k \varepsilon_i^2 + 8}{\sum_{i=1}^k\varepsilon_i}$. 
    \label{theorem:upper_bound_bounded}
\end{theorem}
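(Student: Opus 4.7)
Algorithm~\ref{alg:PDPMeanEstimation} is the sequential composition of \texttt{PDPEstimateRange} with budget $\boldsymbol{\varepsilon}/2$, which is $\boldsymbol{\varepsilon}/2$-PDP by Lemma~\ref{lemma:range_estimation}, and a clip-then-\texttt{ADPM} block with budget $\boldsymbol{\varepsilon}/2$. Conditional on $\widehat R := \widehat{\texttt{Range}}(D)$, per-element clipping is data-oblivious, so by post-processing (Lemma~\ref{lemma:post-processing_PDP}) and the bounded-domain guarantee of \texttt{ADPM} \citep{PDP_2023}, the second block is $\boldsymbol{\varepsilon}/2$-PDP. Basic composition (Lemma~\ref{lemma:basic_composition_PDP}) then gives $\boldsymbol{\varepsilon}$-PDP overall.

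\textbf{Utility decomposition.} I would condition on the success event of Lemma~\ref{lemma:range_estimation} (probability at least $1-\beta/3$), which yields $|\widehat R| = O(\sigma\sqrt{\log(S/(\beta\tau))})$ with $S := \sum_{i=1}^k \varepsilon_i + (n-k)\tau$, centring of $\widehat R$ within $O(\sigma\sqrt{\log\cdot})$ of $\mu$ via the internal median step, and at most $N_{\text{out}} = O(\varepsilon_1^{-1}\log(\beta^{-1}\log\log(S/\beta)))$ uncovered elements. Writing $\tilde{\varepsilon}_i$ for the \texttt{ADPM}-saturated budgets, $w_i = \tilde{\varepsilon}_i/S$, and $Y_i := \texttt{Clipped}(x_i, \widehat R)$, the output satisfies
\begin{equation*}
\widehat{\texttt{Mean}}(D) - \mu \;=\; \sum_{i=1}^n w_i (Y_i - \mu) \;+\; \text{Lap}\!\left(|\widehat R|/S\right),
\end{equation*}
so the Laplace tail bound (Lemma~\ref{lemma:Laplace_tail_bound}) contributes $O(|\widehat R|\log(1/\beta)/S)$, which after substituting $|\widehat R|$ matches the first summand of the claimed bound.

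\textbf{Bounding the weighted clipped sum (main obstacle).} The subtle term is $\sum_i w_i(Y_i-\mu)$. A naive split into a sampling error $\sum_i w_i(x_i-\mu)$ plus a clipping residual $\sum_{i:\,x_i \notin \widehat R} w_i(Y_i - x_i)$ loses a spurious $\tau/\varepsilon_1$ factor, because the clipping piece pays $N_{\text{out}} \cdot w_{\max}\cdot |\widehat R|$ with $w_{\max} = \tau/S$. Instead, I would apply the weighted Bernstein inequality (Theorem~\ref{theorem:weighted_Bernstein}) directly to the bounded random variables $Y_i - \mu$: conditional on $\widehat R$, one has $|Y_i - \mu| \le |\widehat R|$ almost surely, $\mathrm{Var}(Y_i) \le \sigma^2$, and the clipping bias $|\mathbb{E}[Y_i] - \mu|$ is a Gaussian tail integral over $\widehat R^c$, which is $O(\sigma \cdot n^{-\Omega(1)})$ and hence negligible. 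This yields
\begin{equation*}
\Big|\sum_i w_i(Y_i - \mu)\Big| \;\le\; O\!\left(\sigma\sqrt{\textstyle\sum_i w_i^2 \, \log(1/\beta)}\right) + O\!\left(|\widehat R|\, w_{\max} \log(1/\beta)\right),
\end{equation*}
where $\sum_i w_i^2 = (\sum_{i=1}^k \varepsilon_i^2 + (n-k)\tau^2)/S^2$. The first summand reproduces the second term of the theorem (the $+8$ in the statement comes from carrying the $+8$ in the saturation definition of $\tau$ through this bound), and the second summand is absorbed into the Laplace-type first term after using Lemma~\ref{lemma:pdp_2023_optimality} to control $\tau/S$ against $1/S$ in the lower-bound form. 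A union bound over the three failure events (each of probability at most $\beta/3$) completes the argument at total failure probability $\beta$; the central delicacy is the bounded-variable Bernstein route, which sidesteps the $\tau/\varepsilon_1$ inflation of an outlier-counting argument.
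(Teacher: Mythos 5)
Your privacy argument and your overall error decomposition (range success event, weighted clipped sum plus Laplace noise) are consistent with the paper, and you correctly identify the central obstacle: bounding the clipping residual by $N_{\text{out}}\cdot w_{\max}\cdot|\widehat R|$ loses a factor of order $\tau/\varepsilon_1$. However, your proposed fix has a genuine gap. You apply the weighted Bernstein inequality (Theorem~\ref{theorem:weighted_Bernstein}) to $Y_i=\texttt{Clipped}(x_i,\widehat R)$ ``conditional on $\widehat R$,'' treating the $Y_i$ as i.i.d.\ bounded variables. But $\widehat R$ is computed by \texttt{PDPEstimateRange} from the diffused dataset $D_{\mathcal F}$, which retains a constant fraction of the very same records $x_i$ (every $i>k$ is kept with probability at least $1/2$); there is no sample splitting. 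Conditioning on $\widehat R$ therefore tilts the joint law of $(x_1,\dots,x_n)$ away from a product of Gaussians, and Theorem~\ref{theorem:weighted_Bernstein} does not apply. The same issue undermines your claim that the clipping bias $|\mathbb{E}[Y_i]-\mu|$ is a Gaussian tail integral over $\widehat R^c$.

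The paper's proof resolves this differently, and the difference is not cosmetic. It inserts a \emph{deterministic} truncation $[\mu-\xi,\mu+\xi]$ with $\xi=c\sigma\log(\sum_{i=1}^k\varepsilon_i+(n-k)\tau)$, so that weighted Bernstein is applied only to the deterministically truncated i.i.d.\ variables (this yields the second term of the theorem). The residual between clipping at the random $\widehat R$ and truncating at $\xi$ is then a sum $\sum_{j\,\text{outlier}} 2\xi E_j/S$ over the $O(\varepsilon_1^{-1}\log(\cdot))$ uncovered elements, and the $\tau/\varepsilon_1$ inflation is avoided not by re-deriving Bernstein for clipped variables but by showing that the \emph{sum of the outliers' privacy budgets} concentrates around $N_{\text{out}}$ times the \emph{average} budget rather than the maximum. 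Because the outliers arise from Poisson diffusion followed by an effectively without-replacement selection, this requires the dedicated two-stage-sampling concentration bound (Lemma~\ref{lemma:two-stage_sampling_bound}, built on Lemma~\ref{lemma:ratio_concentration}), applied separately to outliers inside and outside $D_{\mathcal F}$. That concentration argument is the technical core of the paper's proof and is absent from your proposal; without it (or a legitimate substitute such as genuine sample splitting or a uniform bound over candidate ranges), the main utility step does not go through.
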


\begin{proof}

    The privacy of Algorithm \ref{alg:PDPMeanEstimation} follows from Theorem \ref{theorem:diffusing_amplification}, Lemma \ref{lemma:privacy_weighted_mean}, and basic composition. We now analyze its utility.
    First, we introduce an auxiliary distribution $\mathcal{N}(\mu, \sigma^2)_{\xi}$, which is defined as $\mathcal{N}(\mu, \sigma^2)$ being truncated into $[\mu - \xi, \mu + \xi]$. Let $\xi = c\sigma\log(\sum_{i=1}^k \varepsilon_i + (n-k)\tau)$, where $c$ is a universal constant. By Lemma \ref{lemma:range_estimation}, with probability at least $1 - \beta/\text{12}$, by setting $c$ large enough, we have $\widehat{\texttt{Range}}(D_{\mathcal{F}}) \subseteq [\mu - \xi, \mu + \xi]$. Therefore, the error can be decomposed as
    \begin{equation}
        \begin{aligned}
            & \vert \hat{\mu}_{\texttt{ADPM}}(\texttt{Clipped}(D, \widehat{\texttt{Range}}(D_{\mathcal{F}}))) - \mu \vert ~(\text{by the triangle inequality})\\
            \leq & \vert \hat{\mu}_{\texttt{ADPM}}(\texttt{Clipped}(D, \widehat{\texttt{Range}}(D_{\mathcal{F}}))) - \mu_{\mathcal{N}(\mu, \sigma^2)_{\xi}}\vert +  \vert \mu_{\mathcal{N}(\mu, \sigma^2)_{\xi}} - \mu \vert.
        \end{aligned}
        \label{eq:error_dec_1}
    \end{equation}
    The second term in \eqref{eq:error_dec_1} is $0$, since $\mathcal{N}(\mu, \sigma^2)$ of our interest is symmetric. For the first term, we further define $D_{\xi}$ as $\mathcal{D}$ being truncated into $[\mu - \xi, \mu + \xi]$, which yields
    \begin{equation}
        \begin{aligned}
            & \vert \hat{\mu}_{\texttt{ADPM}}(\texttt{Clipped}(D, \widehat{\texttt{Range}}(D_{\mathcal{F}}))) - \mu_{\mathcal{N}(\mu, \sigma^2)_{\xi}}\vert \\
            \leq & \vert \hat{\mu}_{\texttt{ADPM}}(\texttt{Clipped}(D, \widehat{\texttt{Range}}(D_{\mathcal{F}}))) - \mu_{\texttt{ADPM}}(D_{\xi})\vert + \vert \mu_{\texttt{ADPM}}(D_{\xi}) - \mu_{\mathcal{N}(\mu, \sigma^2)_{\xi}} \vert.
        \end{aligned}
        \label{eq:error_dec_2}
    \end{equation}

    We first analyze the first term in \eqref{eq:error_dec_2}:
    \begin{equation}
        \begin{aligned}
            & \vert \hat{\mu}_{\texttt{ADPM}}(\texttt{Clipped}(D, \widehat{\texttt{Range}}(D_{\mathcal{F}}))) - \mu_{\texttt{ADPM}}(D_{\xi})\vert \\
            \leq & \vert {\mu}_{\texttt{ADPM}}(\texttt{Clipped}(D_{\xi}, \widehat{\texttt{Range}}(D_{\mathcal{F}}))) - \mu_{\texttt{ADPM}}(D_{\xi}) \vert + \vert \text{Lap}(\frac{8\vert\widehat{\texttt{Range}}(D_{\mathcal{F}})\vert}{\sum_{i=1}^k\varepsilon_i + (n-k)\tau})\vert.
        \end{aligned}
        \label{eq:error_dec_3}
    \end{equation}

    First, we bound the first term in \eqref{eq:error_dec_3}. Since $\varepsilon_i$ and $x_i$ are independent, the contribution of each outlier in mean estimation is bounded by $\frac{2\xi E_j}{\sum_{i=1}^k\varepsilon_i + (n-k)\tau}$, where $E_j$ is randomly drawn from $\boldsymbol{\varepsilon}$ (note that $\frac{E_j}{\sum_{i=1}^k\varepsilon_i + (n-k)\tau}$ is the weight of ADPM).

    Here, one needs to be careful about $E_j$, since it can be subtle to determine its distribution. As Lemma \ref{lemma:range_estimation}, we can partition the outliers into two parts: ones that are not diffused and ones that are diffused, to decouple the correlation.

    Here, a finer analysis is needed, since Bernstein type bound cannot be used directly. More specifically, we cannot directly claim that the $\varepsilon$ of an outlier in or not in the diffused dataset follows a discrete distribution: Since it is sampled from a Poisson random set. Also, the elements that sampled from a set without replacement are not independent. To address these two points, we further decompose it into two parts.

    For the first part, by Lemma \ref{lemma:range_estimation}, with probability at least $1 - 3\beta / \text{12}$, there are no more than
    \begin{align}
        n_1 & := O\left( \frac{1}{\tau}\log(\frac{1}{\beta}\log(\frac{\sigma}{b}\sqrt{\log(\frac{\sum_{i=1}^k\varepsilon_i + (n-k)\tau)}{\beta}}))\right) \nonumber \\ 
        & := \frac{c_1}{\tau}\log(\frac{1}{\beta}\log(\log(\frac{\sum_{i=1}^k\varepsilon_i + (n-k)\tau)}{\beta}))
        \label{eq:number_outlier_part_1}
    \end{align}
    outliers on $D_{\mathcal{F}}$, where $c_1$ is a universal constant. Since for any element, its data and privacy budget are independent, therefore, when we aim to control the sum of these $\varepsilon_i$s on the diffused dataset, it is equivalent to sample $n_1$ elements on the diffused dataset without replacement. Hence, this is a two-stage sampling scheme, i.e., Poisson sampling plus subsampling without replacement, and we aim to bound the sum of the subsampled set. This can be achieved by applying Lemma \ref{lemma:two-stage_sampling_bound}. Here we do some preparations: $t=\sqrt{\sum_{i=1}^k\varepsilon_i^2 + (n-k)\tau^2 + 8}$, $\boldsymbol{\tilde{\varepsilon}}$ is equal to $\boldsymbol{\varepsilon}$ except for the last $n - k$ entries being $\tau$, $\mu_N^{(1)} = \frac{\sum_{i=1}^{k}\varepsilon_i + (n-k)\tau}{2\tau}, \mu_T^{(1)} = \frac{\sum_{i=1}^k \frac{\varepsilon_i^2}{\tau} + (n-k)\tau}{\sum_{i=1}^k \frac{\varepsilon_i}{\tau} + (n-k)}, \eta^{(1)} = \frac{n}{2\mu_N^{(1)}}, \gamma(A) = \tau - \varepsilon_1, \sigma^2(A) = \sigma^2(\tilde{\boldsymbol{\varepsilon}})$. Denote the sum of the outliers' privacy budgets as $T_1$, applying Lemma \ref{lemma:two-stage_sampling_bound}:
    \begin{align*}
        \Pr(T_1 - n_1 \mu_T^{(1)} \geq t) & \leq 
        \exp(-\frac{t^2}{\frac{32n_1 n \sigma^2(\tilde{\boldsymbol{\varepsilon}})\tau}{\sum_{i=1}^{k}\varepsilon_i + (n-k)\tau} + \frac{8}{3}\tau t})
        + 
        \exp(-\frac{t^2\frac{\sum_{i=1}^{k}\varepsilon_i + (n-k)\tau}{\tau}}{16n_1^2\frac{n\tau}{\sum_{i=1}^{k}\varepsilon_i + (n-k)\tau}\sigma^2(\tilde{\boldsymbol{\varepsilon}}) + \frac{16}{3}n_1\tau t}) \\
        & + 
        \exp(-\frac{\sum_{i=1}^{k}\varepsilon_i + (n-k)\tau}{4\tau}) \\
        & := \exp(-\frac{A_1}{A_2 + A_3}) + \exp(-\frac{B_1}{B_2 + B_3}) + \exp(-C_1)
    \end{align*}

    For $A_1 /A_2$:
    \begin{align*}
        \frac{A_1}{A_2} = \frac{\sum_{i=1}^{k}\varepsilon_i^2 + (n-k)\tau^2 + 8}{\frac{32n_1 n \sigma^2(\tilde{\boldsymbol{\varepsilon}})\tau}{\sum_{i=1}^{k}\varepsilon_i + (n-k)\tau}} \geq \frac{\sigma^2(\tilde{\boldsymbol{\varepsilon}})n}{\frac{32n_1 n \sigma^2(\tilde{\boldsymbol{\varepsilon}})\tau}{\sum_{i=1}^{k}\varepsilon_i + (n-k)\tau}}
        =
        \frac{\sum_{i=1}^{k}\varepsilon_i + (n-k)\tau}{32n_1\tau} := \Omega(\log(\frac{1}{\beta})).
    \end{align*}

    For $A_1 / A_3$ we use a case analysis:
    
    1) $\tilde{\boldsymbol{\varepsilon}} = \boldsymbol{\varepsilon}$: This case indicates that there is no saturation. Therefore, we can use the relation $\varepsilon_{n} < \frac{\sum_{i=1}^{n-1} \varepsilon_i^2 + 8}{\sum_{i=1}^{n-1}\varepsilon_i}$ and plug it in yields
    \begin{align*}
        \frac{A_1}{A_3} = \frac{3}{8\varepsilon_n}\sqrt{\sum_{i=1}^n\varepsilon_i^2 + 8} \geq \frac{1}{8\varepsilon_n} \sqrt{\varepsilon_n^2 + \varepsilon_n\sum_{i=1}^{n-1}\varepsilon_i} = \frac{1}{8} \sqrt{\frac{\sum_{i=1}^n \varepsilon_i}{\varepsilon_n}} \geq \frac{1}{8} \sqrt{\frac{n\varepsilon_1}{\varepsilon_n}} = \Omega(\log(\frac{1}{\beta})).
    \end{align*}

    2) $\tilde{\boldsymbol{\varepsilon}} \neq \boldsymbol{\varepsilon}$: In this case
    \begin{align*}
        \frac{A_1}{A_3} = \frac{3}{8\tau}\sqrt{\sum_{i=1}^k\varepsilon_i^2 + (n-k)\tau^2 + 8}.
    \end{align*}
    We further consider a) $k \leq \frac{n}{2}$: then $\frac{A_1}{A_3} = \Omega(\log(\frac{1}{\beta})$;
    b) $k > \frac{n}{2}$: We use the relation $\tau = \frac{\sum_{i=1}^k\varepsilon_i^2 + 8}{\sum_{i=1}^k \varepsilon_i}$ and plug it in:
    \begin{align*}
        \frac{A_1}{A_3} \geq \frac{1}{8} \frac{\sum_{i=1}^k \varepsilon_i}{\sum_{i=1}^k\varepsilon_i^2 + 8}\sqrt{\sum_{i=1}^k\varepsilon_i^2 + 8} \geq \frac{1}{9} \frac{\sum_{i=1}^k \varepsilon_i}{\sqrt{\sum_{i=1}^n \varepsilon_i^2 + 8}} = \frac{1}{9} \sqrt{\frac{\sum_{i=1}^k \varepsilon_i}{\tau}} \geq \frac{1}{9}\sqrt{\sum_{i=1}^{\frac{n}{2}}\varepsilon_i} = \Omega(\log(\frac{1}{\beta}).
    \end{align*}

    For $B_1 / B_2$:

    \begin{align*}
        \frac{B_1}{B_2} = \frac{\sum_{i=1}^k\varepsilon_i^2 + (n-k)\tau^2 + 8}{16n_1^2 n\sigma^2(\tilde{\boldsymbol{\varepsilon}})}\frac{(\sum_{i=1}^k\varepsilon_i + (n-k)\tau)^2}{\tau^2} \geq \frac{(\sum_{i=1}^k\varepsilon_i + (n-k)\tau)^2}{32n_1^2\tau^2} \overset{\text{Plug}~n_1~\text{in}}{=} \Omega(\log^2(\frac{1}{\beta})).
    \end{align*}

    For $B_1 / B_3$:
    \begin{align*}
        \frac{B_1}{B_3} \geq \frac{t}{8\tau}\frac{\sum_{i=1}^k\varepsilon_i + (n-k)\tau}{n_1\tau} = \Omega(\log(\frac{1}{\beta})).
    \end{align*}

    For $C_1$:
    \begin{align*}
        C_1 = \frac{\sum_{i=1}^k\varepsilon_i + (n-k)\tau}{4\tau} \geq \frac{n\varepsilon_1}{\tau} = \Omega(\log(\frac{1}{\beta})).
    \end{align*}

    Therefore, $T_1$ is concentrated around $n_1\mu_{T}^{(1)}$.
        
    For the second part, the number of outliers in the diffused samples is no more than
    \begin{align*}
        n_2 := O\left( \frac{\sum_{i=1}^k(1 - \frac{\varepsilon_i}{2\tau}) + \frac{n-k}{2}}{\sum_{i=1}^k\varepsilon_i + (n-k)\tau}\log(\frac{1}{\beta}\log(\log(\frac{\sum_{i=1}^k\varepsilon_i + (n-k)\tau)}{\beta}))\right),
    \end{align*}
    with probability at least $1 - 7 \beta / \text{12}$. Denote $\mu_N^{(2)} = \sum_{i=1}^k(1 - \frac{\varepsilon_i}{2\tau}) + \frac{n-k}{2}, \mu_T^{(2)} = \frac{\sum_{i=1}^k(1 - \frac{\varepsilon_i}{2\tau})\varepsilon_i + \frac{(n-k)\tau}{2}}{\sum_{i=1}^k(1 - \frac{\varepsilon_i}{2\tau}) + \frac{n-k}{2}}, \eta^{(2)} = \frac{n(1 - \frac{\varepsilon_1}{\tau})}{\mu_N^{(2)}}, \gamma(A) = \tau - \varepsilon_1, \sigma^2(A) = \sigma^2(\tilde{\boldsymbol{\varepsilon}})$. Denote the sum of the outliers' privacy budgets as $T_2$, applying Lemma \ref{lemma:two-stage_sampling_bound}:
    \begin{align*}
        \Pr(T_2 - n_2 \mu_T^{(2)} \geq t) & \leq 
        \exp(-\frac{t^2}{\frac{16n_2 n \sigma^2(\tilde{\boldsymbol{\varepsilon}})}{\mu_N^{(2)}} + \frac{8}{3}\tau t})
        + 
        \exp(-\frac{t^2\mu_N^{(2)}}{32n_2^2\frac{n(1 - \frac{\varepsilon_1}{\tau})}{\mu_N^{(2)}}\sigma^2(\tilde{\boldsymbol{\varepsilon}}) + \frac{16}{3}n_2\tau t}) \\
        & + 
        \exp(-(\sum_{i=1}^k(1 - \frac{\varepsilon_i}{2\tau}) + \frac{n-k}{2})) \\
        & := \exp(-\frac{A_4}{A_5 + A_6}) + \exp(-\frac{B_4}{B_5 + B_6}) + \exp(-C_2).
    \end{align*}

    For $A_4 / A_5$:
    \begin{align*}
        \frac{A_4}{A_5} \geq \frac{\mu_N^{(2)}}{16 n_2} = O(\frac{A_1}{A_3}) = \Omega(\log(\frac{1}\beta)).
    \end{align*}

    For $A_4 / A_6$:
    \begin{align*}
        \frac{A_4}{A_6} = \frac{A_1}{A_3} = \Omega(\log(\frac{1}{\beta})).
    \end{align*}

    For $B_4/B_5$:
    \begin{align*}
        \frac{B_4}{B_6} \geq \frac{(\mu_N^{(2)})^2}{32n_2^2(1 - \frac{\varepsilon_1}{\tau})} = \Omega(\log(\frac{1}{\beta})).
    \end{align*}

    For $B_4/B_6$:
    \begin{align*}
        \frac{B_4}{B_6} \geq O(\frac{B_1}{B_3}) = \Omega(\log(\frac{1}{\beta})).
    \end{align*}
    Therefore, $T_2$ is concentrated around $n_2\mu_{T}^{(2)}$.

     With probability at least $1 - \beta/\text{12}$, respectively, the clipping errors of the two parts are bounded as
    \begin{equation}
        \begin{aligned}
            & \sum_{j=1}^{n_1} \frac{2\xi T_1}{\sum_{i=1}^k\varepsilon_i + (n-k)\tau} \\ 
            \leq & O\left(\frac{2\xi}{\sum_{i=1}^k\varepsilon_i + (n-k)\tau} \left(\frac{\sum_{i=1}^k\varepsilon_i^2 + (n-k)\tau^2}{\sum_{i=1}^k\varepsilon_i + (n-k)\tau} \frac{1}{\tau}\right) \log(\frac{1}{\beta}\log(\log(\frac{\sum_{i=1}^k\varepsilon_i + (n-k)\tau)}{\beta \tau}))) \right)\\
            + & O\left(\frac{2\xi}{\sum_{i=1}^k\varepsilon_i + (n-k)\tau}\sqrt{\sum_{i=1}^k\varepsilon_i^2 + (n-k)\tau^2 + 8} \right) \\
            \leq & O\left(\frac{2\xi}{\sum_{i=1}^k\varepsilon_i + (n-k)\tau} \log(\frac{1}{\beta}\log(\log(\frac{\sum_{i=1}^k\varepsilon_i + (n-k)\tau)}{\beta \tau}))) \right)\\
            + & O\left(\frac{2\xi}{\sum_{i=1}^k\varepsilon_i + (n-k)\tau}\sqrt{\sum_{i=1}^k\varepsilon_i^2 + (n-k)\tau^2 + 8}\right),
        \end{aligned}
        \label{eq:clipping_error_1}
    \end{equation}

        \begin{equation}
            \begin{aligned}
                & \sum_{j=1}^{n_2} \frac{2\xi T_2}{\sum_{i=1}^k\varepsilon_i + (n-k)\tau} \\ 
                \leq & O\left(\frac{2\xi}{\sum_{i=1}^k\varepsilon_i + (n-k)\tau} \left(\frac{\sum_{i=1}^k(1 - \frac{\varepsilon_i}{2\tau})\varepsilon_i + \frac{(n-k)\tau}{2}}{\sum_{i=1}^k\varepsilon_i + (n-k)\tau}\right) \log(\frac{1}{\beta}\log(\log(\frac{\sum_{i=1}^k\varepsilon_i + (n-k)\tau)}{\beta \tau}))) \right)\\
                + & O\left(\frac{2\xi}{\sum_{i=1}^k\varepsilon_i + (n-k)\tau}\sqrt{\sum_{i=1}^k\varepsilon_i^2 + (n-k)\tau^2 + 8} \right)\\
                \leq & O\left(\frac{2\xi}{\sum_{i=1}^k\varepsilon_i + (n-k)\tau} \log(\frac{1}{\beta}\log(\log(\frac{\sum_{i=1}^k\varepsilon_i + (n-k)\tau)}{\beta \tau}))) \right)\\
                + & O\left(\frac{2\xi}{\sum_{i=1}^k\varepsilon_i + (n-k)\tau}\sqrt{\sum_{i=1}^k\varepsilon_i^2 + (n-k)\tau^2 + 8}\right).
            \end{aligned}
            \label{eq:clipping_error_2}
        \end{equation}

        By \eqref{eq:clipping_error_1} and \eqref{eq:clipping_error_2}, with probability at least $1 - 10 \beta/\text{12}$, the total clipping error is bounded as 
        \begin{equation}
            \begin{aligned}
                & O\left(\frac{2\xi}{\sum_{i=1}^k\varepsilon_i + (n-k)\tau} \log(\frac{1}{\beta}\log(\log(\frac{\sum_{i=1}^k\varepsilon_i + (n-k)\tau)}{\beta \tau}))) \right) + \\
                & O\left(\frac{2\xi}{\sum_{i=1}^k\varepsilon_i + (n-k)\tau}\sqrt{\sum_{i=1}^k\varepsilon_i^2 + (n-k)\tau^2 + 8}\right).
            \end{aligned}
            \label{eq:bound_clipping_error}
        \end{equation}
        Please note that $\xi = O(\sigma\log(\sum_{i=1}^k \varepsilon_i + (n-k)\tau))$ for Gaussian distribution.

        The added noise, i.e., the second term in \eqref{eq:error_dec_3}, can be bounded using the Laplace tail bound: With probability at least $1 - \beta/\text{12}$, 
        \begin{align}
            \vert \text{Lap}(\frac{8\vert\widehat{\texttt{Range}}(D_{\mathcal{F}})\vert}{\sum_{i=1}^k\varepsilon_i + (n-k)\tau})\vert \leq \frac{8\log(\frac{\text{12}}{\beta})\vert\widehat{\texttt{Range}}(D_{\mathcal{F}})\vert}{\sum_{i=1}^k\varepsilon_i + (n-k)\tau} = O(\log(\frac{1}{\beta})\frac{\sigma\sqrt{\log(\sum_{i=1}^k\varepsilon_i + (n-k)\tau})}{\sum_{i=1}^k\varepsilon_i + (n-k)\tau}).
            \label{eq:bound_noise}
        \end{align}
        
        The second term in $\eqref{eq:error_dec_2}$ is the sampling error of noiseless ADPM, which can be bounded using the weighted Bernstein's inequality (Theorem \ref{theorem:weighted_Bernstein}):

        With probability at least $1 - \beta/\text{12}$, the sampling error is bounded as 
        \begin{align}
            O(\frac{\sigma\sqrt{\sum_{i=1}^{k}\varepsilon_i^2 + (n-k)\tau^2}}{(\sum_{i=1}^{k}\varepsilon_i + (n-k)\tau)}\sqrt{\log(\frac{1}{\beta})}) + O(\frac{\tau\xi}{\sum_{i=1}^k \varepsilon_i + (n-k)\tau}\log(\frac{1}{\beta})).
            \label{eq:bound_sampling_error}
        \end{align}

        Combining \eqref{eq:bound_clipping_error}, \eqref{eq:bound_noise}, and \eqref{eq:bound_sampling_error} bounds the estimation error in \eqref{eq:error_dec_1}.

\end{proof}

\begin{lemma}[Concentration for Two Stage Sampling]
\label{lemma:two-stage_sampling_bound}
Given sampling probabilities $p_1, p_2,\hdots,p_n$, for any set $A = \{a_1,\dots,a_n\}, a_i\in [0,1]$ with $\gamma(A)$ and $\sigma^2(A)$ as its width and population variance, respectively. Define
\begin{align*}
\mu_N := \sum_{i=1}^n p_i, \mu_T := \frac{\sum_{i=1}^n p_i a_i}{\mu_N}, \text{and}~\eta = \max_i \frac{n p_i}{\mu_N}.
\end{align*}

Consider the following two--stage sampling procedure:
\begin{enumerate}
    \item \emph{\textbf{Poisson sampling}} Independently draw
    \begin{align*}
    I_i \sim \ \mathrm{Bernoulli}(p_i),\quad i=1,\dots,n,
    \end{align*}
    and let $S := \{i : I_i = 1\}$ and $N := |S|$.
    \item \emph{\textbf{Without-replacement subsampling}} On the event $\{N \ge m\}$, sample $m$ indices
    without replacement uniformly from $S$. Let the corresponding values be
    $Y_1,\dots,Y_m\in[0,1]$, and define
    \begin{align*}
    T := \sum_{k=1}^m Y_k.
    \end{align*}
\end{enumerate}

Then the following holds:
for any $t > 0$,
\begin{align*}
\Pr(
T - m\mu_T
\geq
t) 
\leq
\exp\big( - \frac{t^2}{\frac{16 m n \sigma^2(A)}{\mu_{N}} + \frac{8}{3}\gamma(A)t}\big)
+
\exp\big(
-\frac{t^2\mu_N}{
32 m^2 \eta \sigma^2(A) + \frac{16}{3} \gamma(A) m t}
\big)
+
\exp\left(-\frac{\mu_N}{4}\right)
\end{align*}
\end{lemma}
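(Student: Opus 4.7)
The plan is to decompose $T - m\mu_T$ into a without-replacement subsampling error and a Poisson sampling error, bound each by a Bernstein-type tail, and absorb the atypical event that the Poisson stage produces too few samples into the $\exp(-\mu_N/4)$ term. Because $N = \sum_i I_i$ is a sum of independent Bernoullis with mean $\mu_N$, Theorem~\ref{theorem:Chernoff} yields $\Pr(N \leq \mu_N/2) \leq \exp(-\mu_N/8)$, contributing the third exponential up to constants. On the complementary event $\{N \geq \mu_N/2\}$, I would write
\begin{equation*}
T - m\mu_T \;=\; (T - m\bar{A}_S) + m(\bar{A}_S - \mu_T), \qquad \bar{A}_S := \frac{1}{N}\sum_{i\in S} a_i,
\end{equation*}
and split the target event into the two sub-events where each summand exceeds $t/2$.

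For the subsampling error, I would condition on $S$. Then $T$ is a sum of $m$ draws taken without replacement from a fixed set of $N$ numbers in $[0,1]$, so a Bernstein-type inequality for sampling without replacement (Serfling/Bardenet--Maillard style) gives a conditional tail controlled by the empirical variance $\sigma^2(S) = \tfrac{1}{N}\sum_{i\in S}(a_i-\bar{A}_S)^2$ and the range $\gamma(A)$. The key deterministic observation is that $\bar{A}_S$ minimises $c\mapsto\sum_{i\in S}(a_i-c)^2$, so, writing $\bar{a} = \tfrac{1}{n}\sum_i a_i$,
\begin{equation*}
\sigma^2(S) \;\leq\; \frac{1}{N}\sum_{i\in S}(a_i-\bar{a})^2 \;\leq\; \frac{1}{N}\sum_{i=1}^n (a_i-\bar{a})^2 \;=\; \frac{n\sigma^2(A)}{N} \;\leq\; \frac{2n\sigma^2(A)}{\mu_N}
\end{equation*}
on $\{N\geq\mu_N/2\}$. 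Substituting this deterministic bound into the conditional Bernstein inequality and integrating out $S$ reproduces the first exponential in the statement.

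For the Poisson error, I would multiply by $N$ and use $\sum_i p_i(a_i-\mu_T) = 0$ to rewrite
\begin{equation*}
W \;:=\; N\bar{A}_S - N\mu_T \;=\; \sum_{i=1}^n I_i\,(a_i-\mu_T),
\end{equation*}
a sum of independent mean-zero random variables satisfying $|I_i(a_i-\mu_T)|\leq\gamma(A)$. A second minimisation argument, namely $\mu_T = \arg\min_c \sum_i p_i(a_i-c)^2$, gives
\begin{equation*}
\mathrm{Var}(W) \;\leq\; \sum_i p_i(a_i-\mu_T)^2 \;\leq\; \sum_i p_i(a_i-\bar{a})^2 \;\leq\; (\max_i p_i)\cdot n\sigma^2(A) \;=\; \eta\,\mu_N\,\sigma^2(A),
\end{equation*}
so Theorem~\ref{theorem:Bernstein} bounds $\Pr(W\geq s)$. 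On $\{N\geq\mu_N/2\}$, the inequality $m(\bar{A}_S-\mu_T)\geq t/2$ forces $W = N(\bar{A}_S-\mu_T) \geq Nt/(2m)\geq\mu_N t/(4m)$; substituting $s=\mu_N t/(4m)$ into the Bernstein tail produces the second exponential. A union bound over the small-$N$ event and the two half-deviation events then finishes the proof.

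The main obstacle I expect is eliminating the dependence on sample-dependent ($\sigma^2(S)$) and weight-dependent ($\sum_i p_i(a_i-\mu_T)^2$) variances, so that everything is expressed purely in terms of $\sigma^2(A)$, $\gamma(A)$, $\mu_N$, and $\eta$. Both reductions are handled by the same trick: replacing the optimising centre ($\bar{A}_S$ or $\mu_T$) by the global mean $\bar{a}$ via the minimality of quadratic error, after which only the constants in the Bernstein bounds need to be tracked and matched to the stated coefficients $16$, $32\eta$, $\tfrac{8}{3}$ and $\tfrac{16}{3}$.
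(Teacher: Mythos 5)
Your proposal is correct and follows essentially the same route as the paper: the same split $T-m\mu_T=(T-m\bar A_S)+m(\bar A_S-\mu_T)$ at level $t/2$, a conditional finite-population Bernstein bound with the deterministic comparison $\sigma^2(S)\le 2n\sigma^2(A)/\mu_N$ on $\{N\ge\mu_N/2\}$, the centering $\sum_i(I_i-p_i)(a_i-\mu_T)$ with variance bounded via the weighted-mean minimality by $\eta\mu_N\sigma^2(A)$, and a Chernoff bound on $\{N<\mu_N/2\}$ for the third term. The only differences are cosmetic constant-tracking (the paper's own constants are already loose, e.g.\ it ends with $2\exp(-\mu_N/8)$ versus the stated $\exp(-\mu_N/4)$), so no gap.
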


\begin{proof}
First, define $\mu_S := \frac{\sum_{i \in S} a_i}{N}$, we decompose the shift as:
\begin{align*}
    \Pr(T - m\mu_T \geq t) \leq \Pr(T - m\mu_S \geq \frac{t}{2}) + \Pr(m(\mu_S - \mu_T) \geq \frac{t}{2}),
\end{align*}
where the first term on RHS stands for the subsampling shift given $S$, and the second term stands for the concentration behavior of $\mu_S$ around $\mu_T$. We bound them in turn.

For the subsampling error, by the finite-population Bernstein inequality for sampling without replacement, we have
\begin{align*}
    \Pr(T - m\mu_S \geq \frac{t}{2})
    \leq
    \exp\big( - \frac{(\frac{t}{2})^2}{2m\sigma^2(S)\cdot(1 - \frac{m-1}{N}) + \frac{2}{3}\gamma(S)t}\big)
    \leq \exp\big( - \frac{t^2}{8m\sigma^2(S) + \frac{8}{3}\gamma(S)t}\big).
\end{align*}
Since $\sigma^2(S)$ and $\gamma(S)$ are $S$-dependent, while the desired bound is not. We relate them to statistics on $A$. Note that $\sigma^2(S) \leq \frac{2n}{\mu_N}\sigma^2(A)$ holds for any subset $\vert S \vert \geq \frac{\mu_N}{2}$ on $A$ and $\gamma(S) \leq \gamma(A)$, we have
\begin{align*}
\Pr(
T - m\mu_T
\geq
t) 
\leq
\exp\big( - \frac{t^2}{\frac{16mn\sigma^2(A)}{\mu_{N}} + \frac{8}{3}\gamma(A)t}\big) + \exp\big(-\frac{\mu_N}{8}\big).
\end{align*}

Next, we control the second term, using the technique of ratio concentration, which yields:
\begin{align*}
\Pr(
m\mu_S - m\mu_T
\geq
t) 
\leq
\exp\big(
-\frac{t^2\mu_N}{
32 m^2 \eta \sigma^2(A) + \frac{16}{3} \gamma(A) m t}
\big)
+
\exp\left(-\frac{\mu_N}{8}\right).
\end{align*}

Combining these two terms yields the control of the shift.

\end{proof}

\begin{lemma}[Ratio Concentration]
\label{lemma:ratio_concentration}
Let $A = \{a_1,\dots,a_n\}\in[L,U]$ and define $M := U-L$. Let
\begin{align*}
\bar a := \frac1n\sum_{i=1}^n a_i,
\qquad
\sigma^2(A) := \frac1n\sum_{i=1}^n (a_i-\bar a)^2.
\end{align*}
Let $p_1,\dots,p_n\in(0,1]$ be given sampling probabilities and set
\begin{align*}
\mu_N := \sum_{i=1}^n p_i,
\qquad
w_i := \frac{p_i}{\mu_N},\quad \sum_{i=1}^n w_i = 1,
\qquad
w_{\max} := \max_{1\le i\le n} w_i.
\end{align*}
Define the weighted mean
\begin{align*}
\mu_T := \sum_{i=1}^n w_i a_i = \frac{\sum_{i=1}^n p_i a_i}{\mu_N}.
\end{align*}
Consider Poisson sampling
\begin{align*}
I_i \sim \mathrm{Bernoulli}(p_i)\ \text{independently},\quad
S := \{i : I_i = 1\},\quad N := |S|.
\end{align*}
On the event $\{N\ge 1\}$, define the Poisson-sample mean
\begin{align*}
\mu_S := \frac{1}{N}\sum_{i\in S} a_i.
\end{align*}
Then, for any $t>0$,
\begin{align*}
\Pr(m(\mu_S - \mu_T) \ge t)
\le
\exp\left(
-\frac{t^2\mu_N}{
8m^2nw_{\max}\sigma^2(A) + \frac{4}{3} M m t}
\right)
+
\exp\left(-\frac{\mu_N}{8}\right).
\end{align*}
\end{lemma}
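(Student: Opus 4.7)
The plan is to reduce the ratio $\mu_S - \mu_T$ to a zero-mean sum of independent bounded random variables and then apply Bernstein's inequality, after first conditioning away from the small-$N$ event where the ratio $m/N$ becomes unstable. Explicitly, I first write
\begin{equation*}
N(\mu_S - \mu_T) \;=\; \sum_{i=1}^n I_i(a_i - \mu_T) \;=\; \sum_{i=1}^n Y_i,
\qquad Y_i := (I_i - p_i)(a_i - \mu_T),
\end{equation*}
where the last equality uses $\sum_i p_i(a_i-\mu_T) = \mu_N\mu_T - \mu_T\mu_N = 0$ so that subtracting $p_i(a_i-\mu_T)$ is free. The $Y_i$ are independent and mean zero.

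Next I control the random denominator. Let $B := \{N \ge \mu_N/2\}$. By the Chernoff lower-tail inequality (Theorem~\ref{theorem:Chernoff}) applied to the independent Bernoullis $I_i$ with $\mathbb{E}[N]=\mu_N$ and $\eta = 1/2$, I get $\Pr(B^c) \le \exp(-\mu_N/8)$, which is exactly the second term of the target bound. Then I decompose
\begin{equation*}
\Pr(m(\mu_S - \mu_T) \ge t) \;\le\; \Pr\bigl(m(\mu_S - \mu_T) \ge t,\, B\bigr) + \Pr(B^c).
\end{equation*}
On $B$, the inequality $m(\mu_S-\mu_T) \ge t$ with $t>0$ forces $\sum_i Y_i = N(\mu_S-\mu_T) \ge Nt/m \ge \mu_N t/(2m)$, so the first probability is upper bounded by $\Pr\bigl(\sum_i Y_i \ge \mu_N t/(2m)\bigr)$, an unconditional tail that I can attack directly.

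For the Bernstein estimate on $\sum_i Y_i$, I use $|Y_i| \le |a_i - \mu_T| \le M$ (since both $a_i$ and $\mu_T$ lie in $[L,U]$) and
\begin{equation*}
\sum_{i=1}^n \mathrm{Var}(Y_i) \;\le\; \sum_{i=1}^n p_i(a_i-\mu_T)^2 \;=\; \mu_N \sum_{i=1}^n w_i (a_i-\mu_T)^2.
\end{equation*}
The main substantive step is to replace the weighted variance around $\mu_T$ by the claimed $n w_{\max}\sigma^2(A)$: since $\mu_T$ is the weighted mean of the $a_i$ under weights $w_i$, the weighted quadratic is minimized at $\mu_T$, so $\sum_i w_i(a_i-\mu_T)^2 \le \sum_i w_i(a_i-\bar a)^2 \le w_{\max}\sum_i (a_i-\bar a)^2 = n w_{\max}\sigma^2(A)$. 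Applying Bernstein's inequality (Theorem~\ref{theorem:Bernstein}) with this variance proxy and envelope $M$, evaluated at threshold $s = \mu_N t/(2m)$, yields
\begin{equation*}
\Pr\Bigl(\sum_i Y_i \ge \tfrac{\mu_N t}{2m}\Bigr) \;\le\; \exp\!\left(-\frac{\mu_N t^2}{8 m^2 n w_{\max}\sigma^2(A) + \tfrac{4}{3} M m t}\right),
\end{equation*}
after simplifying the denominator. Adding the $\exp(-\mu_N/8)$ tail from the $B^c$ step completes the proof. The only step requiring genuine thought is the variance reduction from weighted to unweighted via the minimality of $\mu_T$ and the pointwise weight bound $w_i \le w_{\max}$; everything else is bookkeeping around Bernstein.
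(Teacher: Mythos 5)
Your proposal is correct and follows essentially the same route as the paper's proof: the same centering identity $N(\mu_S-\mu_T)=\sum_i (I_i-p_i)(a_i-\mu_T)$, the same conditioning on $\{N\ge\mu_N/2\}$ with a Chernoff bound giving the $\exp(-\mu_N/8)$ term, and the same Bernstein bound with variance proxy $\mu_N\, n w_{\max}\sigma^2(A)$ and envelope $M$ evaluated at $u=\mu_N t/(2m)$. The only cosmetic difference is that you justify $\sum_i w_i(a_i-\mu_T)^2\le\sum_i w_i(a_i-\bar a)^2$ by minimality of the weighted mean, whereas the paper writes out the equivalent bias--variance expansion.
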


\begin{proof}
Define the weighted variance
\begin{align*}
\sigma_T^2 := \sum_{i=1}^n w_i (a_i-\mu_T)^2.
\end{align*}
Writing $a_i-\mu_T = (a_i-\bar a) + (\bar a-\mu_T)$ and expanding yields
\begin{align}
\sigma_T^2
= \sum_{i=1}^n w_i(a_i-\bar a)^2 - (\bar a-\mu_T)^2
\le \sum_{i=1}^n w_i(a_i-\bar a)^2
\le w_{\max} \sum_{i=1}^n (a_i-\bar a)^2
= n w_{\max}\sigma^2(A).
\label{eq:sigmaT_vara_bound}
\end{align}

Next, note that on $\{N\ge 1\}$,
\begin{align*}
\mu_S = \frac{1}{N}\sum_{i=1}^n I_i a_i.
\end{align*}
Let $b_i := a_i - \mu_T$. Since $\sum_i p_i b_i = 0$, we have
\begin{align*}
\mu_S - \mu_T
= \frac{1}{N}\sum_{i=1}^n I_i b_i
= \frac{1}{N}\sum_{i=1}^n (I_i-p_i) b_i.
\end{align*}
Define
\begin{align*}
Z_i := (I_i-p_i)b_i = (I_i-p_i)(a_i-\mu_T),\qquad
S := \sum_{i=1}^n Z_i.
\end{align*}
Then on $\{N\ge 1\}$,
\begin{equation}
\mu_S - \mu_T = \frac{S}{N}.
\label{eq:muS_muT_ratio}
\end{equation}

Introduce the event
\begin{align*}
\mathcal{E}_N := \left\{ N \ge \frac{\mu_N}{2} \right\}.
\end{align*}
On $\mathcal{E}_N$, if $m(\mu_S-\mu_T)\ge t$, then by \eqref{eq:muS_muT_ratio}
\begin{align*}
S = N(\mu_S-\mu_T) \ge \frac{\mu_N}{2}\cdot \frac{t}{m}
=: u.
\end{align*}
Therefore
\begin{align*}
\{m(\mu_S-\mu_T)\ge t,\ \mathcal{E}_N\}
\subset \{S\ge u\},
\end{align*}
and hence
\begin{equation}
\Pr(m(\mu_S-\mu_T)\ge t)
\le
\Pr(S\ge u) + \Pr(\mathcal{E}_N^c).
\label{eq:main_split}
\end{equation}

We now bound the two terms separately. First, note that $Z_i$ are independent, $\mathbb{E}Z_i=0$, and since $a_i\in[L,U]$ and $\mu_T$ is a convex combination of the $a_i$, we have $|a_i-\mu_T|\le M:=U-L$, thus $|Z_i|\le M$. Moreover,
\begin{align*}
\sigma^2(Z_i)
= (a_i-\mu_T)^2 \sigma^2(I_i-p_i)
= (a_i-\mu_T)^2 p_i(1-p_i)
\le p_i(a_i-\mu_T)^2.
\end{align*}
Hence
\begin{align*}
\sum_{i=1}^n \sigma^2(Z_i)
\le \sum_{i=1}^n p_i(a_i-\mu_T)^2
= \mu_N \sum_{i=1}^n w_i(a_i-\mu_T)^2
= \mu_N \sigma_T^2.
\end{align*}
Bernstein's inequality for sums of independent, zero-mean, bounded random variables then yields, for all $u>0$,
\begin{align*}
\Pr(S\ge u)
\le
\exp\left(
-\frac{u^2}{2\mu_N\sigma_T^2 + \tfrac{2}{3}Mu}
\right).
\end{align*}
Taking $u = t\mu_N/(2m)$ gives
\begin{align*}
\Pr(S\ge u)
\le
\exp\left(
-\frac{(t^2\mu_N^2)/(4m^2)}{2\mu_N\sigma_T^2 + \tfrac{2}{3}M\cdot \frac{t\mu_N}{2m}}
\right)
=
\exp\left(
-\frac{t^2\mu_N/(4m^2)}{2\sigma_T^2 + \frac{M t}{3m}}
\right).
\end{align*}
Using \eqref{eq:sigmaT_vara_bound}, we have $\sigma_T^2 \le n w_{\max}\sigma^2(A)$, hence
\begin{align*}
2\sigma_T^2 + \frac{M t}{3m}
\le 2n w_{\max}\sigma^2(A) + \frac{M t}{3m}.
\end{align*}
Therefore
\begin{align*}
\Pr(S\ge u)
\le
\exp\left(
-\frac{t^2\mu_N/(4m^2)}{2n w_{\max}\sigma^2(A) + \frac{M t}{3m}}
\right).
\end{align*}
Noting that
\begin{align*}
4m^2(2n w_{\max}\sigma^2(A) + \tfrac{M t}{3m})
=
8m^2 n w_{\max}\sigma^2(A) + \tfrac{4}{3}M m t,
\end{align*}
we obtain
\begin{equation}
\Pr(S\ge u)
\le
\exp\left(
-\frac{t^2\mu_N}{
8m^2nw_{\max}\sigma^2(A) + \frac{4}{3} M m t}
\right).
\label{eq:bern_S_final}
\end{equation}

Next, we bound $\Pr(\mathcal{E}_N^c)$. Since $N=\sum_{i=1}^n I_i$ is Poisson--Binomial with mean $\mu_N$, a standard Chernoff bound yields, for $\alpha\in(0,1)$,
\begin{align*}
\Pr(N \le (1-\alpha)\mu_N)
\le
\exp\left(-\frac{\alpha^2}{2}\mu_N\right).
\end{align*}
Taking $\alpha=1/2$ gives
\begin{equation}
\Pr(\mathcal{E}_N^c)
=
\Pr\left(N < \frac{\mu_N}{2}\right)
\le
\exp\left(-\frac{\mu_N}{8}\right).
\label{eq:chernoff_N}
\end{equation}

Combining \eqref{eq:main_split}, \eqref{eq:bern_S_final} and \eqref{eq:chernoff_N} yields
\begin{align*}
\Pr(m(\mu_S-\mu_T)\ge t)
\le
\exp\left(
-\frac{t^2\mu_N}{
8m^2n w_{\max}\sigma^2(A) + \frac{4}{3} M m t}
\right)
+
\exp\left(-\frac{\mu_N}{8}\right),
\end{align*}
which is the desired inequality.
\end{proof}

By comparing the upper bound in Theorem \ref{theorem:lower_bound} and the above upper bound, we can see that our upper bound for mean estimation matches the lower bound up to some logarithmic factors.

\section{PDP Mean Estimation Under the Unbounded PDP setting}

After addressing mean estimation under the bounded PDP setting, we move on to the more complex setting: unbounded PDP setting. As shown in \citep{add_remove_one_mean_estimation}, the unbounded PDP setting provides stronger privacy protection than the bounded PDP setting, as not only the content of each element but also its presence is considered private.  In Section \ref{section:lower_bound_unbounded}, we first establish a lower bound for this problem, and then propose a mean estimation algorithm whose upper bound matches the lower bound up to logarithmic factors in Section \ref{section:upper_bound_unbounded}.

\subsection{Privacy-Specific Lower Bound}
    \label{section:lower_bound_unbounded}
     As mentioned before, one key difference between the bounded PDP setting and the unbounded PDP setting is that, the privacy budgets also become private information and can vary among database instances. To establish the minimax lower bound under this setting, one idea is to derive a lower bound over all possible privacy vectors. However, such a lower bound can be vacuous and provides limited insight. On the other hand, we would like to derive a lower bound specific to the privacy vector of a dataset, which can be more insightful. To obtain this bound, we first establish a lemma relating the distance between two output distributions to the input distributions in the unbounded PDP setting, which serves the same role as Lemma \ref{lemma:PDP_2023_TV} in the bounded PDP setting:
     \begin{lemma}
        \label{lemma:add/remove-one_TV}
         Given a user universe $\mathcal{U}$ and the privacy function $\mathcal{E}$, we use $\boldsymbol{\varepsilon}_D$ to denote the (sorted) privacy vector of $D$ that contains all $\mathcal{E}(u)$ of $D$. For any $\mathcal{M}$ that satisfies $\mathcal{E}$-PDP and any distribution $\mathcal{P}_1, \mathcal{P}_2$, the following inequality holds for any $k \leq n$,
        \begin{align*}
            \Vert \mathcal{M}(\mathcal{P}_1^n)) - \mathcal{M}(\mathcal{P}_2^n)) \Vert_{\mathrm{TV}} \leq 2 \Vert \mathcal{P}_1 - \mathcal{P}_2 \Vert_{\mathrm{TV}} \sum_{i=1}^k (1 - e^{-{\varepsilon_D}_i}) + \sqrt{\frac{n-k}{2} D_{\mathrm{KL}}(\mathcal{P}_1 \Vert \mathcal{P}_2)},
        \end{align*}
     \end{lemma}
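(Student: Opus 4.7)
The plan is to follow the same high-level structure as the proof of Lemma \ref{lemma:PDP_2023_TV} from \citep{PDP_2023}, but adapt each step to the add/remove-one neighboring relation of unbounded PDP. I would fix the $n$ users $u_1, \ldots, u_n$ contributing to $D$, with sorted privacy budgets $\varepsilon_{D,1} \leq \cdots \leq \varepsilon_{D,n}$, and interpret $\mathcal{P}_1^n$ and $\mathcal{P}_2^n$ as distributions over the values these fixed users contribute. The proof naturally splits the $n$ coordinates into two blocks: the first $k$ (those with the smallest budgets) are handled by a coupling plus PDP argument, and the remaining $n-k$ are handled by a KL plus Pinsker argument.

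For the first $k$ coordinates, I would construct a hybrid chain $H_0 = \mathcal{P}_1^n, H_1, \ldots, H_k$, where $H_i$ replaces the values of $u_1, \ldots, u_i$ by samples from $\mathcal{P}_2$ while keeping the remaining coordinates at $\mathcal{P}_1$. For each consecutive pair $H_{i-1}, H_i$ (which differ only at user $u_i$), apply the maximal coupling to the $i$-th coordinate: with probability $1-\eta$ where $\eta = \Vert \mathcal{P}_1 - \mathcal{P}_2 \Vert_{\mathrm{TV}}$ the two samples agree and the two hybrids coincide, and with probability $\eta$ they disagree at $u_i$. Applying the advanced joint convexity of $\alpha$-divergence (Lemma \ref{lemma:advanced_joint_convexity}) to this decomposition together with the group-privacy bound (Lemma \ref{lemma:bounding_divergence}) yields $\Vert \mathcal{M}(H_{i-1}) - \mathcal{M}(H_i) \Vert_{\mathrm{TV}} \leq 2\eta(1 - e^{-\varepsilon_{D,i}})$. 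Summing these over $i=1,\ldots,k$ via the triangle inequality of TV distance gives the first term $2\Vert \mathcal{P}_1 - \mathcal{P}_2 \Vert_{\mathrm{TV}}\sum_{i=1}^{k}(1 - e^{-\varepsilon_{D,i}})$.

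For the remaining $n-k$ coordinates, continuing to pay per-coordinate would be wasteful once the budgets are large, so I would instead bound $\Vert \mathcal{M}(H_k) - \mathcal{M}(\mathcal{P}_2^n) \Vert_{\mathrm{TV}}$ by the data-processing inequality: this is at most $\Vert H_k - \mathcal{P}_2^n \Vert_{\mathrm{TV}}$, which by tensorization of KL over the $(n-k)$-fold product and Pinsker's inequality is at most $\sqrt{(n-k)/2 \cdot D_{\mathrm{KL}}(\mathcal{P}_1 \Vert \mathcal{P}_2)}$. A final triangle inequality on TV distance between $\mathcal{M}(\mathcal{P}_1^n)$, $\mathcal{M}(H_k)$, and $\mathcal{M}(\mathcal{P}_2^n)$ combines the two blocks into the claimed inequality.

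The main obstacle, and the place where the unbounded setting truly departs from the bounded one, is the per-coordinate step: unbounded PDP grants $\mathcal{E}(u)$-indistinguishability only for atomic add or remove of $u$, whereas the hybrid chain produces a change-one step at user $u_i$, which via a naive delete-then-add decomposition would cost $2\varepsilon_{D,i}$ and produce a factor $(1 - e^{-2\varepsilon_{D,i}})$ rather than the tighter $(1 - e^{-\varepsilon_{D,i}})$ claimed in the lemma. I plan to eliminate the extra factor by invoking the advanced joint convexity directly on the value-level coupling before passing through $\mathcal{M}$, so that the disagreeing mass $\eta$ is already accounted for once in the coupling and only a single atomic $\mathcal{E}$-PDP step on user $u_i$ is charged inside the group-privacy bound. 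Matching the atoms of the coupling with the atoms of the unbounded-PDP guarantee in this way is the crux of the argument, and is what preserves the parallel with Lemma \ref{lemma:PDP_2023_TV}.
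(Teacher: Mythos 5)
Your overall skeleton matches the paper's: split via the triangle inequality into a first-$k$ block handled coordinate-by-coordinate and a last-$(n-k)$ block handled by data processing plus Pinsker, with the hybrid chain $\mathcal{P}_1^i\mathcal{P}_2^{n-i}$ telescoped over $i=1,\dots,k$. The KL/Pinsker half and the telescoping structure are fine.

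The gap is exactly at the step you flag as the crux, and your proposed fix does not close it. In the disagreeing branch of the maximal coupling, the two hybrids are still datasets of the same size $n$ that differ in the \emph{value} at coordinate $i$; under the add/remove-one relation these are at distance $2$, so the group-privacy profile invoked inside Lemma \ref{lemma:bounding_divergence} is $\delta_{\mathcal{M},2}$, not $\delta_{\mathcal{M},1}$. Advanced joint convexity only factors out the disagreeing mass $\eta$; it does not convert a replacement into an atomic insertion or deletion, so your argument still lands on the $\bigl(1-e^{-2\varepsilon_{D,i}}\bigr)$ rate you were trying to avoid. (Contrast this with the diffusion theorem, where the decomposition components genuinely differ by one element under the \emph{change-one} relation, which is why that machinery works there.) The paper's proof avoids the coupling machinery entirely: for each $i$ it introduces the in-neighbor $D' = D\setminus\{u_i\}$ of size $n-1$ and observes that $\mathbb{E}[\Pr\{\mathcal{M}(D')\in A\}]$ is identical whether $x_i\sim\mathcal{P}_1$ or $x_i\sim\mathcal{P}_2$, so it can be subtracted for free inside the telescoping difference. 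The function $x_i\mapsto \Pr\{\mathcal{M}(D'\cup\{(u_i,x_i)\})\in A\}-\Pr\{\mathcal{M}(D')\in A\}$ is then controlled by a single atomic add/remove application of $\mathcal{E}$-PDP (applied to both $A$ and $A^c$), giving a range of width at most $2(1-e^{-\varepsilon_{D,i}})$, and the difference of its expectations under $\mathcal{P}_1$ versus $\mathcal{P}_2$ is bounded by that width times $\Vert\mathcal{P}_1-\mathcal{P}_2\Vert_{\mathrm{TV}}$. To repair your proof you should replace the coupling step with this reference-to-the-subdataset device.
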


     \begin{proof}
         The proof mainly follows the proof of Lemma \ref{lemma:PDP_2023_TV} in \citep{PDP_2023}, with a difference in introducing PDP to the derivation.

         Let $\mathcal{M}(\mathcal{P}_1^k \mathcal{P}_2^{n-k})$ be the distribution of the output of the $\boldsymbol{\varepsilon}$-PDP estimator $\mathcal{M}(\cdot)$ when the input is $D$ drawn from the product distribution $\mathcal{P}_1^kP_2^{n-k}$.

        By the triangle inequality,
        \begin{align*}
            \Vert\mathcal{M}(\mathcal{P}_1^n)-\mathcal{M}(\mathcal{P}_2^n)\Vert_{\mathrm{TV}}
            \le \Vert\mathcal{M}(\mathcal{P}_1^n)-\mathcal{M}(\mathcal{P}_1^k \mathcal{P}_2^{n-k})\Vert_{\mathrm{TV}}
            + \Vert\mathcal{M}(\mathcal{P}_1^k \mathcal{P}_2^{n-k})-\mathcal{M}(\mathcal{P}_2^n)\Vert_{\mathrm{TV}}.
        \end{align*}
        
        By the Data Processing Inequality and Pinsker’s inequality,
        \begin{align*}
            \Vert\mathcal{M}(\mathcal{P}_1^n)-\mathcal{M}(\mathcal{P}_1^k \mathcal{P}_2^{n-k})\Vert_{\mathrm{TV}}
            \le \Vert \mathcal{P}_1^n - \mathcal{P}_1^k \mathcal{P}_2^{n-k} \Vert_{\mathrm{TV}}
            \le \sqrt{\frac{n-k}{2}D_{\mathrm{KL}}(\mathcal{P}_1 \Vert \mathcal{P}_2)}.
        \end{align*}
        
        On event $A$, we have
        \begin{align*}
            \vert
            \mathcal{M}(\mathcal{P}_1^k \mathcal{P}_2^{n-k})(A)
            -
            \mathcal{M}(\mathcal{P}_2^n)(A)
            \vert
            &=
            \vert
            \mathbb{E}_{D\sim \mathcal{P}_1^k \mathcal{P}_2^{n-k}}
            \Pr\{\mathcal{M}(D)\in A\}
            -
            \mathbb{E}_{D\sim \mathcal{P}_2^{n}}
            \Pr\{\mathcal{M}(D)\in A\}
            \vert \\
            &=
            \vert
            \sum_{i=1}^k
            (
            \mathbb{E}_{D\sim \mathcal{P}_1^{\,i}\mathcal{P}_2^{\,n-i}}
            \Pr\{\mathcal{M}(D)\in A\}
            -
            \mathbb{E}_{D\sim \mathcal{P}_1^{\,i-1}\mathcal{P}_2^{\,n-i+1}}
            \Pr\{\mathcal{M}(D)\in A\}
            )
            \vert.
        \end{align*}

        Let $D'$ denote the in-neighbor of $D$ that does not contain the $i$-th element in $D$, and this index can take any arbitrary value independent of $D$. Then, almost surely, 
        \begin{align*}
        \mathbb{E}_{D\sim P_1^{i}P_2^{n-i}}
        \Pr\{\mathcal{M}(D')\in A\}
        -
        \mathbb{E}_{D\sim P_1^{i-1}P_2^{n-i+1}}
        \Pr\{\mathcal{M}(D')\in A\} =0
        \end{align*}
        
        Thus,
        \begin{align*}
        & \vert\mathcal{M}(\mathcal{P}_1^k \mathcal{P}_2^{n-k})(A)-\mathcal{M}(\mathcal{P}_2^n)(A)\vert \\
        \leq &
        \sum_{i=1}^k
        \vert
        \mathbb{E}_{D'\sim \mathcal{P}_1^{i-1}\mathcal{P}_2^{n-i}}[
        \mathbb{E}_{x_i\sim \mathcal{P}_1}[\Pr\{\mathcal{M}(D)\in A\}
        - \Pr\{\mathcal{M}(D')\in A\}] \\ & -
        \mathbb{E}_{x_i\sim \mathcal{P}_2}[\Pr\{\mathcal{M}(D)\in A\}
        - \Pr\{\mathcal{M}(D')\in A\}]]
        \vert\\
        \le &
        \sum_{i=1}^k
        \mathbb{E}_{D'\sim \mathcal{P}_1^{i-1}\mathcal{P}_2^{n-i}}
        [2(1-e^{-{\varepsilon_{D}}_i})\Vert \mathcal{P}_1-\mathcal{P}_2\Vert_{\mathrm{TV}}] =2\Vert \mathcal{P}_1-\mathcal{P}_2\Vert_{\mathrm{TV}}\sum_{i=1}^k (1-e^{-{{\varepsilon_{D}}}_i}).
        \end{align*}
        
     \end{proof}

     Now we can present the lower bound in the unbounded PDP setting in Theorem \ref{theorem:lower_bound_add/remove-one}. The proof of Theorem \ref{theorem:lower_bound_add/remove-one} follows the same steps as the proof of Theorem \ref{theorem:lower_bound} by substituting Lemma \ref{lemma:PDP_2023_TV} with Lemma \ref{lemma:add/remove-one_TV}, and we omit it here.
    \begin{theorem}
        \label{theorem:lower_bound_add/remove-one}
        Given a user universe $\mathcal{U}$ and the privacy function $\mathcal{E}$, denote $\mathfrak{N}_{\sigma}$ as the family of Gaussian distribution with variance of $\sigma^2$. For any $D \sim \mathcal{N}(\mu, \sigma^2)^n$, we use $\boldsymbol{\varepsilon}_D$ to denote the (sorted) privacy vector of $D$ that contains all $\mathcal{E}(u)$ of $D$, and for any $\mathcal{E}$-PDP mechanism $\mathcal{M}$, we have
        \begin{align*}
            \inf_{\mathcal{M}} \sup_{\mathcal{N}(\mu, \sigma^2) \in \mathfrak{N}_{\sigma}} 
            \Pr\left( \vert\mathcal{M}(D) - \mu\vert \ge \max_{k=1, 2, \hdots, n} \frac{1}{\sqrt{2}} \frac{\sigma}{\sum_{i=1}^{k}{{\varepsilon}_D}_i + 2 \sqrt{n - k}}
            \right) \geq \frac{1}{4}.
        \end{align*}
        \end{theorem}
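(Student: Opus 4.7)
The plan is to mirror the Le Cam-style argument used in the proof of Theorem \ref{theorem:lower_bound}, with Lemma \ref{lemma:add/remove-one_TV} replacing Lemma \ref{lemma:PDP_2023_TV} so that the dataset-specific privacy vector $\boldsymbol{\varepsilon}_D$ (rather than the public $\boldsymbol{\varepsilon}$) is what appears in the final bound. Concretely, I would first fix an arbitrary set of users whose associated privacy vector under $\mathcal{E}$ equals the given $\boldsymbol{\varepsilon}_D$, and then vary only the data values $u.x$ contributed by these users. This guarantees that both candidate input distributions in the two-point reduction yield datasets with the same sorted privacy vector $\boldsymbol{\varepsilon}_D$, which is the object we want to parameterize the lower bound by.

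Next, take the two candidates to be $\mathcal{N}_1 = \mathcal{N}(-\tau,\sigma^2)$ and $\mathcal{N}_2 = \mathcal{N}(\tau,\sigma^2)$, so that any mean estimator achieving error below $\tau$ can be converted into a test distinguishing them; hence $P_\tau$ is lower bounded by the minimax testing risk, and Le Cam's inequality gives $P_\tau \geq \tfrac{1}{2} - \tfrac{1}{2}\Vert \mathcal{M}(\mathcal{N}_1^n) - \mathcal{M}(\mathcal{N}_2^n)\Vert_{\mathrm{TV}}$, exactly as in equation \eqref{eq:minimax_lower_bound_P_1_P_2}. The computation of the input-level quantities is identical to the bounded case: $\Vert \mathcal{N}_1 - \mathcal{N}_2 \Vert_{\mathrm{TV}} = \tfrac{1}{\sqrt{2}}\,\mathtt{erf}(\tau/\sigma)$ and $D_{\mathrm{KL}}(\mathcal{N}_1 \Vert \mathcal{N}_2) = 2\tau^2/\sigma^2$.

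The key step is now to apply Lemma \ref{lemma:add/remove-one_TV} in place of Lemma \ref{lemma:PDP_2023_TV}: because $\mathcal{M}$ is $\mathcal{E}$-PDP and the coupled inputs share the users (hence the sorted privacy vector $\boldsymbol{\varepsilon}_D$), the lemma yields, for every $k\leq n$,
\begin{align*}
    \Vert \mathcal{M}(\mathcal{N}_1^n) - \mathcal{M}(\mathcal{N}_2^n) \Vert_{\mathrm{TV}}
    \leq
    \tfrac{1}{\sqrt{2}}\,\mathtt{erf}(\tau/\sigma)\sum_{i=1}^k (1 - e^{-{\varepsilon_D}_i})
    + \tfrac{\tau}{\sigma}\sqrt{2(n-k)}.
\end{align*}
Using $1 - e^{-x} \leq x$ and $\mathtt{erf}(x) \leq 1$, plugging into the Le Cam bound, and choosing $\tau = \tfrac{1}{\sqrt{2}}\,\sigma/\bigl(\sum_{i=1}^{k}{\varepsilon_D}_i + 2\sqrt{n-k}\bigr)$ forces $P_\tau \geq 1/4$. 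Taking the maximum over $k$ delivers the desired privacy-specific lower bound.

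The main obstacle is not algebraic but conceptual: in the unbounded setting the privacy vector is itself data-dependent, so one must justify that the two-point Le Cam construction can be carried out within a single ``slice'' where $\boldsymbol{\varepsilon}_D$ is held fixed. Fixing the user set (hence the privacy budgets assigned by $\mathcal{E}$) and varying only the values resolves this cleanly and is precisely what makes Lemma \ref{lemma:add/remove-one_TV} applicable with the data-specific $\boldsymbol{\varepsilon}_D$ appearing on the right-hand side. Once this is in place, the remainder of the argument is a verbatim copy of the bounded-PDP proof, which is why the author omits the detailed calculation.
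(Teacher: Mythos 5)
Your proposal matches the paper's own treatment exactly: the paper omits the proof of Theorem \ref{theorem:lower_bound_add/remove-one}, stating that it follows the proof of Theorem \ref{theorem:lower_bound} verbatim with Lemma \ref{lemma:PDP_2023_TV} replaced by Lemma \ref{lemma:add/remove-one_TV}, which is precisely your plan, and your added remark about fixing the user set so that $\boldsymbol{\varepsilon}_D$ is held constant across the two-point construction is a correct (and welcome) clarification of a point the paper leaves implicit. One tiny slip: to close the final algebra you need $\texttt{erf}(\tau/\sigma) \lesssim \tau/\sigma$ (as in the bounded-case computation), not merely $\texttt{erf}(x)\le 1$, since otherwise the $\sum_i {\varepsilon_D}_i$ term does not pick up the factor $\tau/\sigma$ needed for the choice of $\tau$ to force $P_\tau \ge 1/4$.
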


\subsection{Statistical Mean Estimation Under Unbounded PDP Setting}
\label{section:upper_bound_unbounded}
For statistical mean estimation under the unbounded PDP setting, our key idea is to reduce this problem to a bounded PDP problem, allowing us to apply the technique developed in Section~\ref{section:upper_bound_bounded}. 
More precisely, we construct an element-wise mapping that transforms the original dataset $D$ with privacy vector $\boldsymbol{\varepsilon} (D)$ into a new dataset $\ddot{D}$ with privacy vector $\boldsymbol{\varepsilon}(\ddot{D})$. We then invoke our bounded PDP mean estimation algorithm on $\ddot{D}$ with privacy vector $\boldsymbol{\varepsilon}(\ddot{D})$.
Recall that there are two fundamental differences between the bounded PDP and unbounded PDP settings: they have different neighboring definitions, and the bounded PDP assumes a dataset with a public privacy vector, which becomes private in the unbounded setting.
Therefore, to ensure that this reduction preserves PDP, the reduction process should follow several properties:

(1) The mapped privacy vector $\boldsymbol{\varepsilon}(\ddot{D})$ should follow PDP so that we can regard it as public information in mean estimation that follows. 

(2) $\boldsymbol{\varepsilon}(\ddot{D})$ has no privacy downgrade compared with $\boldsymbol{\varepsilon}(D)$: For any element of $D$ that also appears in $\ddot{D}$, its privacy budget in $\ddot{D}$ is at most its budget in $D$.

(3) The transformation from $D$ to $\ddot{D}$ should be \textit{distance-preserving}: for any insert/delete-one neighbor $D$ and $D'$, after the transformation, $\ddot{D}$ and $\ddot{D}'$ should be constant-distance change-one neighbors, i.e., $\ddot{D}$ can be obtained from $\ddot{D}'$ by changing a constant number of elements' data.

\paragraph{Obtaining the Privacy Vector}
With the above targets in mind, now let us delve into the details. We begin by constructing a mapping from the original privacy vector to the estimated privacy vector. Here, we would like this mapping to follow several properties: 

(a) The mapping result $\boldsymbol{\varepsilon}(\ddot{D})$ preserves PDP.

(b) It should be a \emph{non-increasing mapping}, i.e., no privacy budget increases after the mapping. 

(c) We should try to do a \emph{shrinkage mapping}: The size of the mapped privacy vector should be no larger, or at least not substantially larger, than that of the original privacy vector.
 
(d) The shrinkage in (c) should not be too much: The size of the mapped privacy vector should not be substantially smaller than that of the original one.

Properties (a) and (b) are directly corresponding to ensure privacy properties (1) and (2) of reduction, whereas (c) and (d) relate to utility. For (c), if the size of the mapped privacy vector becomes large, it implies that many dummy values (such as zeros) must be inserted, which degrades the utility. For (d), obtaining the shrunk privacy vector requires deleting certain elements: If the mapped vector becomes too small, excessive user information will be discarded, resulting in a large DP error.

To satisfy the above requirements, a straightforward idea is to employ a linear domain-partitioning scheme. Recall that all privacy budgets in $D$ lie in a known range $[\varepsilon_{\min}, \varepsilon_{\max}]$, where $\varepsilon_{\max} \leq 1$. Here, we first partition this range into several sub-domains uniformly. For each sub-domain, we privatize the count of elements it contains by adding Laplace noise inversely proportional to $2^i\varepsilon_{\min}$, which is the left endpoint of that sub-domain. Then, we insert the upgraded privacy budgets $2^i\varepsilon_{\min}$, the number of which equals the privatized count, into the mapped privacy vector, thereby satisfying (a) and (b) simultaneously. To further guarantee (c), we truncate the privatized size of each sub-domain. More precisely, for each sub-domain, we subtract the privatized count by an offset term inversely proportional to $2^i\varepsilon_{\min}$ to ensure it is an underestimation of the real counting result before inserting it into the mapped privacy vector, satisfying (c).

% thereby producing a truncated histogram of $\boldsymbol{\varepsilon}(D)$ whose total count is no larger than the true dataset size with high probability, which 

However, this idea does not satisfy property (d). 
That is because, in such a uniform linear partitioning scheme, the number of truncated elements grows linearly with the number of sub-intervals. Therefore, if the domain is partitioned too finely, a large number of truncations will incur significant information loss. Conversely, if the domain is partitioned too coarsely, the mapped privacy budgets become overly crude and substantially underestimate the original privacy vector, subsequently bringing large DP error. In the worst case, if we only use one domain, then it is equivalent to directly applying the global minimum privacy budget $\varepsilon_{\min}$.

To address the issue, we partition the privacy domain into $\lfloor \log(\varepsilon_{\max}/\varepsilon_{\min}) \rfloor$ number of exponentially expanding sub-domains. More precisely, we partition $[\varepsilon_{\min}, \varepsilon_{\max}]$ as

\begin{align*}
     [\varepsilon_{\min}, 2\varepsilon_{\min}), [2\varepsilon_{\min}, 2^2\varepsilon_{\min}), \cdots, [2^{\lfloor \log_2(T) \rfloor}\varepsilon_{\min}, \varepsilon_{\max}]
     \label{eq:domain_partitioning}
\end{align*}
where $T := \varepsilon_{\max} / \varepsilon_{\min}$.

Then, for each sub-domain, we get a privatized counting result by adding Laplace noise and subtracting an offset term. More precisely, for the $i$-th domain, we define $D_i := D \cap_{\varepsilon} [2^i \varepsilon_{\min}, 2^{i+1} \varepsilon_{\min})$ to contain elements in the $i$-th domain only, where $\cap_{\varepsilon}$ denotes taking intersection with respect to the privacy domain. Then, we have 
\begin{align*}
    \widetilde{\texttt{Count}}(D_i) = \vert D_i \vert + \text{Lap}(\frac{1}{2^{i-1}\varepsilon_{\min}}) - \frac{1}{2^{i-1} \varepsilon_{\min}} \ln\left(\frac{\lfloor\log_2(T)\rfloor}{\beta}\right).
\end{align*}
By Laplace tail bound, we have all $ \widetilde{\texttt{Count}}(D_i)$'s to be underestimations of $\vert D_i \vert$'s with probability at least $1 - \beta$. 
Besides, we further set those $\widetilde{\texttt{Count}}(D_i) < 0$ to be zero. Finally, for each $i$-th sub-domain, we insert $\widetilde{\texttt{Count}}(D_i)$ number of $\boldsymbol{\varepsilon}_i$ into $\boldsymbol{\varepsilon}(\ddot{D})$.

It can be shown that the resulting privacy vector satisfies the above four desired properties.
First, for any user whose privacy budget lies in the range $[2^{i-1}\varepsilon_{\min},\, 2^{i}\varepsilon_{\min}]$, the user can only influence the count of the $i$-th sub-domain, and adding Laplace noise with scale $1/(2^{i-1}\varepsilon_{\min})$ is therefore sufficient to preserve his privacy. Hence, property (a) is satisfied.
Second, for each sub-domain, we insert into the mapped privacy vector an underestimation of the original privacy budgets, thereby preserving property (b).
Third, as shown earlier, each $\widetilde{\texttt{Count}}(D_i)$ is an underestimation of $\lvert D_i\rvert$ with high probability, which ensures property (c).
Finally, since there are at most $\lfloor \log_2(T)\rfloor$ sub-domains and each sub-domain deletes at most 
$O\left(\frac{1}{2^{i-1}\varepsilon_{\min}} \ln\left(\frac{\lfloor\log_2(T)\rfloor}{\beta}\right)\right)$ privacy budgets, the total number of deletions is 
$O\left( \frac{1}{\varepsilon_{\min}(D)} 
\ln\left(\frac{\lfloor\log_2(T)\rfloor}{\beta}\right)\right)$.
This establishes property (d).

\paragraph{Obtaining the Shrunk Dataset} After obtaining the mapped privacy vector, we now transform the original dataset $D$, with privacy vector $\boldsymbol{\varepsilon}(D)$, into a new dataset $\ddot{D}$ with privacy vector $\boldsymbol{\varepsilon}(\ddot{D})$ by deleting certain records and upgrading the privacy of selected users. Specifically, in the $i$-th sub-domain, we randomly sample $\widetilde{\texttt{Count}}(D_i)$ elements without replacement. If $\lvert D_i \rvert < \widetilde{\texttt{Count}}(D_i)$, the output is padded with zeros; however, this occurs with very low probability, since each $\widetilde{\texttt{Count}}(D_i)$ is, with high probability, an underestimation of $\lvert D_i \rvert$. Each sampled element is then assigned a privacy budget of $2^i \varepsilon_{\min}$. Applying this step across all sub-domains yields the shrunk dataset $\ddot{D}$. It can be shown that this transformation is distance-preserving. Since the mapped privacy vector already satisfies PDP and can be treated as public information, if $D$ and $D'$ differ by one user, then the resulting datasets $\ddot{D}$ and $\ddot{D}'$ differ by at most two elements.

\paragraph{Performing Mean Estimation on the Shrunk Dataset} With the privacy-known dataset $\ddot{D}$, we can apply Algorithm \ref{alg:PDPMeanEstimation} to obtain the estimated mean. The whole procedure is given in Algorithm \ref{alg:PDPMeanEstimationUnknownPrivacy}, in which $\texttt{WORSample}(S, m)$ denotes any algorithm that achieves sampling $m$ elements without replacement from $S$. In particular, if $m > \vert S \vert$, the output is padded with zeros so that the total number of returned samples is $m$. The whole procedure is given in Algorithm \ref{alg:PDPMeanEstimationUnknownPrivacy}.

     \begin{minipage}[ht]{0.98\linewidth}
            \raggedright
            \begin{algorithm}[H]                \caption{PDPMeanEstimationUnknownPrivacy}
                \label{alg:PDPMeanEstimationUnknownPrivacy}
                \KwIn{$D$, $\varepsilon_{\min}, \varepsilon_{\max}$, $\beta$}
                \KwOut{$\widehat{\texttt{Mean}}(D)$}
                $[\varepsilon_{\min}, 2\varepsilon_{\min}), [2\varepsilon_{\min}, 2^2\varepsilon_{\min}), \cdots, [2^{\lfloor \log_2(T) \rfloor}\varepsilon_{\min}, \varepsilon_{\max}] \gets [\varepsilon_{\min}, \varepsilon_{\max}]$
                $\ddot{\boldsymbol{\varepsilon}}, \ddot{D} \gets \emptyset$
                \For{$i \gets 1, 2, \cdots, 2^{\lfloor \log_2(T) \rfloor}$}
                {$\widehat{\texttt{Count}}(D \cap_{\varepsilon} [2^{i-1} \varepsilon_{\min}, 2^i \varepsilon_{\min})) \gets \texttt{Count}(D \cap_{\varepsilon} [2^{i-1} \varepsilon_{\min}, 2^i \varepsilon_{\min})) + \text{Lap}(\frac{1}{2^{i-1}\varepsilon_{\min}})$
                $\widetilde{\texttt{Count}}(D \cap_{\varepsilon} [2^{i-1} \varepsilon_{\min}, 2^i \varepsilon_{\min})) \gets \widehat{\texttt{Count}}(D \cap_{\varepsilon} [2^{i-1} \varepsilon_{\min}, 2^i \varepsilon_{\min})) - \frac{1}{2^{i-1} \varepsilon_{\min}} \ln\left(\frac{2\lfloor\log_2(T)\rfloor}{\beta}\right)$
                \If{$\widetilde{\emph{\texttt{Count}}}(D \cap_{\varepsilon} [2^{i-1} \varepsilon_{\min}, 2^i \varepsilon_{\min})) > 0$}{
                \tcp{Adding $\widetilde{\texttt{Count}}(D \cap_{\varepsilon} [2^{i-1} \varepsilon_{\min}, 2^i \varepsilon_{\min}))$ elements}
                $\ddot{\boldsymbol{\varepsilon}} \gets \ddot{\boldsymbol{\varepsilon}} \cup \{2^{i-1}\varepsilon_{\min}, \hdots, 2^{i-1}\varepsilon_{\min} \}$
                $\ddot{D} \gets \ddot{D} \cup \texttt{WORSample}(D \cap_{\varepsilon} [2^{i-1} \varepsilon_{\min}, 2^i \varepsilon_{\min})), \widetilde{\texttt{Count}}(D \cap_{\varepsilon} [2^{i-1} \varepsilon_{\min}, 2^i \varepsilon_{\min})))$
                }}
                $\widehat{\texttt{Mean}}(D) = \texttt{PDPMeanEstimation}(\ddot{D}, \frac{\ddot{\boldsymbol{\varepsilon}}}{4},\frac{\beta}{2})$\;
                \Return{}~$\widehat{\texttt{Mean}}(D)$
                \end{algorithm}
            \end{minipage}

     Algorithm \ref{alg:PDPMeanEstimationUnknownPrivacy} is $\mathcal{E}$-PDP and its error turns out to be near-optimal:
     
     \begin{theorem}
        \label{theorem:upper_bound_add/remove-one}
        Given a user universe $\mathcal{U}$ and the privacy function $\mathcal{E}$, for any $D \sim \mathcal{N}(\mu, \sigma^2)^n$, if \begin{align*}
        n = \Omega(\frac{1}{\varepsilon_{\min}^{3/2}(D)}\ln(\frac{\lfloor\log_2(\varepsilon_{\max} / \varepsilon_{\min})\rfloor}{\beta}) + \frac{1}{\varepsilon_{\min}(D)}\log(\frac{1}{\beta}\log(\log(\frac{\sum_{i=1}^k\varepsilon_i + (n-k)\tau)}{\beta}))\log(\frac{1}{\beta})),
        \end{align*} with probability at least $1-\beta$, Algorithm \ref{alg:PDPMeanEstimationUnknownPrivacy} preserves $\mathcal{E}$-PDP, and its error matches the lower bound $\max_{k=1, 2, \hdots, n} \frac{1}{\sqrt{2}} \frac{\sigma}{\sum_{i=1}^{k}{{\varepsilon}_D}_i + 2 \sqrt{n - k}}$ in Theorem \ref{theorem:lower_bound_add/remove-one} up to logarithmic factors.
    \end{theorem}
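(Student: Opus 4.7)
The plan is to establish the $\mathcal{E}$-PDP guarantee of Algorithm~\ref{alg:PDPMeanEstimationUnknownPrivacy} first, then derive its utility by reducing to the bounded PDP estimator of Theorem~\ref{theorem:upper_bound_bounded} on the shrunk dataset $\ddot{D}$, and finally argue that the resulting error matches the privacy-specific lower bound of Theorem~\ref{theorem:lower_bound_add/remove-one} up to logarithmic factors.

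For privacy, I would exploit parallel composition (Lemma~\ref{lemma:parallel_composition_PDP}) across the $\lfloor \log_2(\varepsilon_{\max}/\varepsilon_{\min}) \rfloor$ exponentially-spaced sub-domains: any user $u$ with $\mathcal{E}(u) \in [2^{i-1}\varepsilon_{\min}, 2^{i}\varepsilon_{\min})$ affects only the $i$-th sub-domain. Inside that sub-domain, the noisy-count release via Laplace noise of scale $1/(2^{i-1}\varepsilon_{\min})$ is $(2^{i-1}\varepsilon_{\min})$-DP for user $u$. For the subsequent WOR sampling and invocation of PDPMeanEstimation on $\ddot{D}$ with budget $\ddot{\boldsymbol{\varepsilon}}/4$, I would couple the Laplace noise so that $\widetilde{\texttt{Count}}(D_i) = \widetilde{\texttt{Count}}(D'_i)$ (a consequence of the cost already paid above) and then couple the sampling so that $\ddot{D}$ and $\ddot{D}'$ form change-one bounded PDP neighbors at constant distance. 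Applying the bounded PDP guarantee of Theorem~\ref{theorem:upper_bound_bounded} then contributes at most $O(2^{i-1}\varepsilon_{\min})$-DP to user $u$, and basic composition (Lemma~\ref{lemma:basic_composition_PDP}) across the two phases yields $\mathcal{E}$-PDP.

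For utility, I would use the Laplace tail bound (Lemma~\ref{lemma:Laplace_tail_bound}) together with a union bound across the $\lfloor \log_2 T\rfloor$ sub-domains to conclude that, with probability at least $1-\beta/2$, every $\widetilde{\texttt{Count}}(D_i)$ is a valid underestimate of $|D_i|$ satisfying $|D_i|-\widetilde{\texttt{Count}}(D_i) = O(\frac{1}{2^{i-1}\varepsilon_{\min}}\ln(\log_2 T/\beta))$. This guarantees that the zero-padding branch of $\texttt{WORSample}$ is never triggered, and that the total number of deleted elements is $O(\frac{1}{\varepsilon_{\min}}\ln(\log_2 T/\beta))$, a quantity absorbed by the stated sample-size assumption $n=\tilde{\Omega}(1/\varepsilon_{\min}^{3/2}(D))$. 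Applying Theorem~\ref{theorem:upper_bound_bounded} to $\ddot{D}$ with the now-public vector $\ddot{\boldsymbol{\varepsilon}}$ produces an error bound of the form $\tilde{O}\big(\max_k \sigma / (\sum_{j=1}^{k} \ddot{\varepsilon}_j + \sqrt{|\ddot{D}|-k})\big)$.

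The main obstacle is showing this bound on $\ddot{D}$ matches the expression $\max_k \sigma/(\sum_{j=1}^{k} (\varepsilon_D)_j + \sqrt{n-k})$ in Theorem~\ref{theorem:lower_bound_add/remove-one} up to logarithmic factors. The key observation is that every original privacy budget $\varepsilon_u \in [2^{i-1}\varepsilon_{\min}, 2^{i}\varepsilon_{\min})$ is rounded down to $2^{i-1}\varepsilon_{\min} \geq \varepsilon_u/2$, so each prefix sum $\sum_{j=1}^k \ddot{\varepsilon}_j$ lies within a factor of $2$ of $\sum_{j=1}^k (\varepsilon_D)_j$ up to an additive $\tilde{O}(1)$ correction accounting for the deleted elements. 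Moreover, since $|\ddot{D}| \geq n - \tilde{O}(1/\varepsilon_{\min})$ and the sample-size assumption forces $n$ to dominate this deletion budget, $\sqrt{|\ddot{D}|-k}$ is within a constant factor of $\sqrt{n-k}$ for all relevant $k$. Combining these two comparisons shows that the two max expressions agree up to constants, so the resulting bound matches the lower bound of Theorem~\ref{theorem:lower_bound_add/remove-one} up to logarithmic factors, completing the proof of optimality.
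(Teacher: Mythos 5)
Your overall architecture matches the paper's: parallel composition plus a coupling/constant-distance argument for privacy, Laplace tail bounds to control the number of deletions, reduction to Theorem~\ref{theorem:upper_bound_bounded} on $\ddot{D}$, and a term-by-term comparison of the two $\max_k$ expressions. The privacy half and the deletion-count half are fine. However, the final matching step contains a genuine gap: the claim that $\sqrt{|\ddot{D}|-k}$ is within a constant factor of $\sqrt{n-k}$ ``for all relevant $k$'' is false. Writing $\Delta := n - |\ddot{D}| = O\big(\tfrac{1}{\varepsilon_{\min}(D)}\ln(\tfrac{\log_2 T}{\beta})\big)$, whenever the maximizing index satisfies $|\ddot{D}| - k = o(\Delta)$ one has $\sqrt{n-k} = \Theta(\sqrt{\Delta}) = \Theta(\varepsilon_{\min}^{-1/2}\sqrt{\ln(\cdot)})$ while $\sqrt{|\ddot{D}|-k}$ can be arbitrarily small, so the two square-root terms are not comparable and your ``combine the two comparisons'' step breaks down exactly there.

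This bad regime is not a corner case to be waved away; it is the reason the hypothesis requires $n = \Omega(\varepsilon_{\min}^{-3/2}(D)\ln(\cdot))$ rather than merely $n \gg \Delta$. The paper handles it with an explicit case split on $n'-k'$: when $n'-k' = \Omega(\Delta)$ the square-root terms do agree up to constants and your argument works; when $n'-k' = o(\Delta)$, one instead shows that the prefix-sum term dominates the discrepancy, using $\sum_{j\le k'}\ddot{\varepsilon}_j \ge k'\varepsilon_{\min} = \Theta(n\varepsilon_{\min})$ together with $\sqrt{n-k} = O(\varepsilon_{\min}^{-1/2}\ln(\tfrac{\log_2 T}{\beta}))$, so that the denominators match precisely when $n\varepsilon_{\min} \gtrsim \varepsilon_{\min}^{-1/2}\ln(\cdot)$, i.e.\ $n = \Omega(\varepsilon_{\min}^{-3/2}\ln(\cdot))$. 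In your write-up the $\varepsilon_{\min}^{-3/2}$ condition is invoked only to ``absorb'' the deletion budget, which would follow already from $n = \Omega(\varepsilon_{\min}^{-1}\ln(\cdot))$; you should add the case analysis (or an equivalent argument that the $\sum\ddot{\varepsilon}_j$ term covers the missing $\sqrt{\Delta}$) to close the gap and to explain where the exponent $3/2$ actually comes from. A smaller point: when comparing prefix sums after deletions you should also fix how the indices of the sorted vectors are aligned (the paper sets $k = k'$ at the maximizer for $\ddot{D}$ and argues one-sidedly), since deletions reshuffle which budgets appear in a given prefix.
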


     \begin{proof}
    
         First, we prove the privacy of Algorithm \ref{alg:PDPMeanEstimationUnknownPrivacy}.

         The first queries that consume the privacy budget $\boldsymbol{\varepsilon}$ are the count queries. Since the domain partitioning technique ensures that each user only accounts for one domain, the counting procedure in Algorithm \ref{alg:PDPMeanEstimationUnknownPrivacy} is $\frac{\boldsymbol{\varepsilon}}{2}$-DP following the parallel composition. 
         
         Since $\widetilde{\texttt{Count}}(D_i)$ is the perturbed underestimation of $\texttt{Count}(D_i)$ (otherwise $0$ is inserted, which will not degrade privacy), after deleting elements in each sub-domain, the privacy vector of the yielded $\ddot{D}$ are public. Consider $D$ and $D'$, and the yielded shrunk datasets $\ddot{D}$ and $\ddot{D}'$, we assume $\vert D \vert - \vert D' \vert = 1$ and they differ by $u$, without loss of generality. Since the mapped privacy vector is public, we can directly use the hybrid neighboring relations (Theorem 11) in \citep{Coupling} to see that $\ddot{D}$ and $\ddot{D}'$ are at most two distance neighbors.
         Therefore, applying sequential composition shows that Algorithm \ref{alg:PDPMeanEstimationUnknownPrivacy} is $\mathcal{E}$-PDP.
         
         To analyze its utility, we break down the process as $D \rightarrow \dot{D} \rightarrow \ddot{D}$, where $D$ is the original dataset, $\dot{D}$ is $D$ with each $\varepsilon$ changed to $2^i \varepsilon_{\min}$ in each sub-domain, and $\ddot{D}$ is $\dot{D}$ with $\widetilde{\texttt{Count}}(D_i)$ samples kept in each sub-domain.

         % \begin{figure}[ht]
         %     \centering
         %     \includegraphics[width=0.67\linewidth]{fig/eps_weights.png}
         %     \caption{$\varepsilon_i$}
         %     \label{fig:eps_weights}
         % \end{figure}
    
            First, we see the relationship between $\dot{D}$ and $\ddot{D}$. Suppose the first $t$ sub-domains are all omitted, for $i \leq t$, using Laplace tail bound, with probability at least $1 - \frac{\beta}{2\log_2(T)}$,
            \begin{align}
                \vert D_i \vert \leq 2\frac{1}{2^i \varepsilon_{\min}} \ln\left(\frac{\lfloor\log_2(T)\rfloor}{\beta}\right).
            \end{align}
            Also, for $i > t$, with probability at least $1 - \frac{\beta}{2\log_2(T)}$, 
            \begin{align}
                \texttt{Count}(D_i) - \widetilde{\texttt{Count}}(D_i) \leq \frac{1}{2^i \varepsilon_{\min}} \ln\left(\frac{\lfloor\log_2(T)\rfloor}{\beta}\right).
            \end{align}
            Therefore, by union bound, with probability at least $1 - \beta/2$, the total samples deleted from $\dot{D}$ is $O(\frac{1}{\varepsilon_{\min}(\dot{D})} \ln\left(\frac{\lfloor\log_2(T)\rfloor}{\beta}\right))$ (note that the first sub-domain involving deleting is the one that $\varepsilon_{\min}(\dot{D})$ resides). 
            
            Let us now compare the errors on $\dot{D}$ and $\ddot{D}$, note that Algorithm \ref{alg:PDPMeanEstimation} is near-optimal, we can directly see the lower bound as it has a simpler formulation.

            We compare the lower bounds on $\dot{D}$ and $\ddot{D}$: $\max_k \frac{1}{\sum_{i=1}^k\varepsilon_i + \sqrt{n-k}}$ and $\max_{k'} \frac{1}{\sum_{i=1}^{k'}\varepsilon_i' + \sqrt{n'-k'}}$, respectively, where $\vert \ddot{D}\vert$ is denoted as $n'$. We use a case analysis on $k'$:
            \begin{itemize}
                \item $n' - k' = \Omega(\frac{1}{\varepsilon_{\min}(\dot{D})}\ln(\frac{\log_2(T)}{\beta}))$:

                In this case, we set $k = k'$, i.e., we see the lower bound on $\dot{D}$ at $k = k'$, since it suffices to show the lower bounds at $k'$ match (please note that $k'$ is where the lower bound on $\ddot{D}$ takes the maximum).

                Since $k = k'$, for the first terms in the denominators of the lower bounds: $\sum_{i=1}^{k'} \varepsilon_i' \geq \sum_{i=1}^k \varepsilon_i$. For the second terms, $n - k = n - n' + n' - k'$. Since $n - n' = O(\frac{1}{\varepsilon_{\min}(\dot{D})} \ln(\frac{\log_2(T)}{\beta}))$, $\sqrt{n - k}$ and $\sqrt{n' - k'}$ are of the same order. Therefore, the lower bounds on $\ddot{D}$ and $\dot{D}$ are also of the same order.

                \item $n' - k' = o(\frac{1}{\varepsilon_{\min}(\dot{D})}\ln(\frac{\log_2(T)}{\beta}))$:

                This case suggests that $k', n', n$ are all of the order $\Omega(\frac{1}{\varepsilon_{\min}(\dot{D})}\ln(\frac{\log_2(T)}{\beta}))$. However, we have no information on what value $k$ takes to let the lower bound on $\dot{D}$ take the maximum. Set $k = k'$, we have $\sum_{i=1}^{k'} \varepsilon_i' \geq k' \varepsilon_{\min}(\dot{D}) = \Theta(n'\varepsilon_{\min}(\dot{D})) = \Theta(n\varepsilon_{\min}(\dot{D}))$. $\sqrt{n - k} = \sqrt{n - n' + n' - k'} = O(\frac{1}{\sqrt{\varepsilon_{\min}(\dot{D})}}\ln(\frac{\log_2(T)}{\beta}))$. Thus, if $n = \Omega(\frac{1}{\varepsilon_{\min}^{3/2}(\dot{D})}\ln(\frac{\log_2(T)}{\beta}))$, the lower bounds on $\ddot{D}$ and $\dot{D}$ are of the same order. 
            \end{itemize}

            Therefore, if $n = \Omega(\frac{1}{\varepsilon_{\min}^{3/2}(\dot{D})}\ln(\frac{\log_2(T)}{\beta}))$,
            \begin{align}
                \texttt{Err}(\mu_{\text{ADPM}}(\ddot{D})) = \Theta(\texttt{Err}(\mu_{\text{ADPM}}(\dot{D}))).
            \end{align}

            Next, we see the relationships between $D$ and $\dot{D}$. Recall that $\dot{D}$ is $D$ with each $\varepsilon_i$ changed to $2^i \varepsilon_{\min}$ in each sub-domain, since the ratio that each $\varepsilon_i$ changes is constrained to $[\frac{1}{2}, 2]$, the lower bound remains the same order:
            \begin{align}
                \texttt{Err}(\mu_{\text{ADPM}}(\dot{D})) = \Theta(\texttt{Err}(\mu_{\text{ADPM}}(D))).
            \end{align}
            Now we have $\texttt{Err}(\mu_{\text{ADPM}}(\ddot{D})) = \Theta(\texttt{Err}(\mu_{\text{ADPM}}(D)))$. Combining it with Theorem \ref{theorem:lower_bound_add/remove-one} completes the proof.

        \end{proof}

\bibliographystyle{plainnat}
\bibliography{bibfile}
\end{document}